\DeclareMathOperator{\Tr}{tr}
\theoremstyle{plain}
\newtheorem*{theorem*}{Theorem}
\newtheorem{theorem}{Theorem}[section]
\newtheorem{corollary}{Corollary}
\newtheorem{lemma}[theorem]{Lemma}
\newtheorem{proposition}{Proposition}
\newtheorem{conjecture}{Conjecture}
\newtheorem{assumption}{Assumption}
\theoremstyle{definition}
\newtheorem{definition}[theorem]{Definition}
\newtheorem{remark}{Remark}
\title[ additional food ] 
    {Additional food causes predator ``explosion" - unless the predators compete}
\subjclass{Primary: 34C11, 34D05; Secondary: 92D25, 92D40}
 \keywords{ infinite time blow up, co-dimension two bifurcation, biological control, additional food}
\begin{document}
\maketitle

\centerline{\scshape    Rana D. Parshad$^{1}$, Sureni Wickramsooriya$^{2}$, Kwadwo Antwi-Fordjour$^{3}$ and Aniket Banerjee$^{1}$  }
\medskip
{\footnotesize
% please put the address of the first author

   \medskip
   
    \centerline{ 1) Department of Mathematics,}
 \centerline{Iowa State University,}
   \centerline{Ames, IA 50011, USA.}
      \medskip
 \centerline{2) Department of Mathematics and Physics,}
 \centerline{Kansas Wesleyan University,}
   \centerline{ Salina, KS 67401, USA.}
   \medskip
 \centerline{3) Department of Mathematics and Computer Science,}
 \centerline{Samford University, }
   \centerline{ Birmingham, AL 35229, USA.}  
   \medskip
   
 }

\begin{abstract}

The literature posits that an introduced predator population, is able to drive it's target pest population extinct, if supplemented with high quality additional food of quantity $\xi > \xi_{critical}$, \cite{ SP11, SPV18, SPD17, SPM13}. We show this approach actually leads to infinite time \emph{blow-up} of the predator population, so is unpragmatic as a pest management strategy. 
We propose an alternate model in which the additional food induces predator competition. Analysis herein indicates that there are threshold values $c^{*}_{1} < c^{*}_{2} < c^{*}_{3}$ of the competition parameter $c$, s.t. when $c < c^{*}_{1}$, the pest free state is globally stable, when $c^{*}_{2} < c < c^{*}_{3}$, bi-stability is possible, and when $c^{*}_{3} < c$, up to three interior equilibria could exist. 
As $c$ and $\xi$-$c$ are varied standard, co-dimension one and co-dimension two bifurcations are observed. The recent dynamical systems literature involving predator competition, report several 
non-standard bifurcations such as the saddle-node-transcritical bifurcation (SNTC) occurring in co-dimension two \cite{KSV10, BS07}, and cusp-transcritical bifurcation (CPTC) in co-dimension three, \cite{D20, BS07}. We show that in our model structural symmetries can be exploited to construct a SNTC in co-dimension two, and a CPTC also in co-dimension \emph{two}. We further use these symmetries to construct a novel pitchfork-transcritical bifurcation (PTC) in co-dimension two, thus explicitly characterizing a new organizing center of the model. Dynamics such as homoclinic orbits, concurrently occurring limit cycles, and competition driven Turing patterns are also observed. Our findings indicate that increasing additional food in predator-pest models, can \emph{hinder} bio-control, contrary to some of the literature. However, additional food that also induces predator competition, leads to novel bio-control scenarios, and complements the work in \cite{H21, B98, K04, D20, BS07, VH19}.
\end{abstract}
%As $c$ is varied standard co-dimension one Hopf, saddle-node and transcritical bifurcations are observed. Also, in the $\xi$-$c$ parameter space standard co-dimension two bifurcations such as cusp and Bogdanov Takens, are observed. 

 \section{Introduction}

The number of invasive species continues to grow worldwide at an unprecedented rate \cite{PZM05, PSCEWT16, L16, E11, S17}, thereby making the development and implementation of control strategies crucial \cite{S17}. Chemical pesticides, despite negative environmental and human health impacts are used heavily for such control, \cite{PB14}. This includes large scale use in the North Central United States, where losses due to invasive pests such as the European corn borer, Western corn root worm and Soybean Aphid exceed $\$$3 billion annually \cite{L15, TS14, L14, CO18}.
An alternative to pesticides is classical bio-control - introduction of natural enemies of the pest species \cite{V96, C15, B07, K17, S99}. The benefits of this approach is that it is non-toxic and can be self-sustainable, removing the need for the repeated applications often required with chemical pesticides \cite{B07, V96}. 

One approach when the introduced predator does not sufficiently reduce pest density, is to boost predator efficacy by supplementing it with an additional food source \cite{CO98, ES93, SC07, S10, W16, T15, R95, T15, ES93, SV06}. Yet, there is no clear consensus in the biological literature, as to why or why \emph{not} this approach works. 
A number of mathematical models that describe predator-pest dynamics with an additional food source have been developed, \cite{SP07, SP10, SP11, SPV18, SPD17, SPM13, VA22}. 
The works unanimously posit that if high quality additional food is provided to the introduced predator in quantity $\xi$, s.t. $\xi > \xi_{critical}$, (where $\xi_{critical}$ depends on other model parameters) then the outcome is pest extinction - In summary, this approach is claimed to \emph{always} enhance bio-control, see Table \ref{Table:3}. 

Bio-control is not without risk in practice, and can pose various challenges \cite{SV06, F14}. For example uncontrolled growth of the introduced predator itself - which could then lead to a further sequence of non-target effects, \cite{PQB16, PB16}. In such situations a natural self regulating mechanism for the ``growing" population, is intraspecies competition. There has been a large research focus on invasion control and management using niche theory, and competition theory in novel ways to manipulate  competition in invader-invadee systems \cite{britton2018trophic, huston2004management, kettenring2011lessons, jeremy10}. For example, at low nitrogen levels many native plants are better competitors than invasives, who are better competitors at higher levels \cite{kuang2010interacting}. Lowering nitrogen levels in soil via microbe use, has been used as a successional management strategy, to enhance intraspecies competition among native plants, and preempt invasions \cite{vasquez2008creating}. However, the flip side has been less investigated - that is strategies that aim to increase intraspecific competition among introduced predator populations, as a means of enhancing bio-control. So the responses of invasive pests to increased   competition, among introduced predators and their subsequent effects are less understood  \cite{mwangi2007niche, hille12, seabloom2003invasion}. Predator competition, in general predator-prey models, can greatly enhance the variety of the possible dynamics. This was first investigated qualitatively \cite{K04, B98}, and several higher order bifurcations were uncovered. However, only very recently have these been rigorosly proved, \cite{H21}. For applications of similar methods used in \cite{H21}, to type III and IV responses, see \cite{JH13, JH14, JH04}. Note that no current models of additional food mediated bio-control, assume competition among the introduced predators. The current manuscript is a modeling endeavor to this end.
Our findings in the current manuscript are as follows,
\begin{itemize}

\item Increasing the quantity of additional food $\xi$, s.t $\xi > \xi_{critical}$, in classical predator-pest models, causes \emph{unbounded} growth of the predator population density, for any initial condition, shown via Theorem \ref{thm:t1ug}. 

\item Thus the strategy of supplementing introduced predators with sufficient quantities of additional food (which the literature is rife with, see Table \ref{Table:3}), 
can actually \emph{hinder} bio-control - as predators in excessively high density can cause a host of non-target effects.

\item A new model for additional food mediated bio-control is proposed, via \eqref{Eqn:1nn}.
Herein it is assumed the additional food also induces intraspecific predator competition.

\item The analysis of \eqref{Eqn:1nn} reveals novel bio-control possibilities, driven by competition. Now a bounded pest extinction state is possible. This state could be globally stable or bi-stable, shown via Theorems \ref{thm:1o3}, \ref{thm:gloint}, \ref{thm:glopestfree}.

\item  A rich dynamical structure for \eqref{Eqn:1nn} is revealed. We report, 
\newline
(i) Hopf, saddle-node and transcritical bifurcations in co-dimension one, see Fig. \ref{fig:Hopf-SN-RegimesA}.
\newline
(ii) A saddle-node-transcritical bifurcation in co-dimension two, see Fig. \ref{fig:SNTC}.
\newline
(iii) A cusp-transcritical bifurcation, in co-dimension \emph{two}, see Fig. \ref{fig:Cusp-Transcritical}.
\newline
(iv) A pitchfork-transcritical bifurcation, in co-dimension two, see Fig. \ref{fig:Pitchfork-Transcritical}.
\newline
(v) Two concurrently occuring limit cycles, and homoclinic orbits, see Fig. \ref{fig:limit cycles and homoclinic}.
\newline
(vi) Competition driven Turing instability, see Fig. \ref{fig:Turing pattern}.

\item The consequences of these new dynamics to bio-control are discussed in detail.
Several new conjectures are made about the dynamical consequences of predator competition, 
see Conjecture \ref{conj:c11}.

\end{itemize}

\section{Prior Results}

\subsection{General Predator-Pest Model}
The following \emph{general} model for an introduced predator population $y(t)$ depredating on a target pest population $x(t)$, while also provided with an additional food source of quality $\frac{1}{\alpha}$  and quantity $\xi$, has been proposed in the literature, 
%$\frac{dx}{dt} = x(1-\frac{x}{\gamma}) - f(x,\xi,\alpha) y, \    \frac{dy}{dt} =  g(x, \xi, \alpha) y  - \delta y.$
\begin{equation}
\label{Eqn:1g}
\frac{dx}{dt} = x\left(1-\frac{x}{\gamma}\right) - f(x,\xi,\alpha) y, \    \frac{dy}{dt} =  g(x, \xi, \alpha) y  - \delta y.
\end{equation}
Here $f(x,\xi,\alpha)$ is the functional response of the predator, that is pest dependent but also dependent on the additional food (hence the explicit dependence on $\xi, \alpha$). Likewise,
$g(x,\xi,\alpha)$ is the numerical response of the predator. 
If $\xi = 0$, that is there is no additional food, the model reduces to a classical predator-prey model of Gause type, that is $f(x,0,0)= g(x,0,0)$, where $f$ has the standard properties of a functional response. For these models we know pest eradication \emph{is not possible}, as the only pest free state is $(0,0)$, which is typically unstable \cite{K01}. Thus modeling the dynamics of an introduced predator and its prey, a targeted pest via this approach, where the constructed 
$f(x,\xi,\alpha), g(x,\xi,\alpha)$ are used as a means to achieve a pest free state, has both practical and theoretical value, and thus has been well studied. Table \ref{Table:3},  summarizes some of the key literature, in terms of the functional forms used in these models.

\begin{table}[H]
\caption{Dynamics of predator-pest models supplemented with AF}\label{Table:3}
	\scalebox{0.72}{
{\begin{tabular}{|c|c|c|c|c|}
		\hline
		% after \\: \hline or \cline{col1-col2} \cline{col3-col4} ...
		& \mbox{Functional form} & \mbox{Relevant literature} & \mbox{AF requirement}& \mbox{ Effect on pest control}\\
		& &  & &  \\ \hline \hline
		(i) & $f(x,\xi,\alpha) = \frac{x}{1+\alpha \xi + x}$ & \cite{SP07, SP10, SP11} & $\xi > \frac{\delta}{\beta - \delta \alpha}$ & \mbox{Pest is eradicated, switching AF}\\
		& $g(x,\xi,\alpha) = \frac{\beta (x+\xi)}{1+\alpha \xi + x}$ &  &  &  \mbox{maintains/eliminates predator }\\          \hline
		(ii) & $f(x,\xi,\alpha) = \frac{x^{2}}{1+\alpha \xi^{2} + x^{2}}$  & \cite{SPV18}  & $\xi >\sqrt{ \frac{\delta}{\beta - \delta \alpha}}$  &  \mbox{Pest is eradicated, switching AF}\\
		& $g(x,\xi,\alpha) = \frac{\beta (x^{2}+\xi^{2})}{1+\alpha \xi^{2} + x^{2}}$ &  & & \mbox{ maintains/eliminates predator } \\ \hline
		(iii) & $f(x,\xi,\alpha) = \frac{x}{(1+\alpha \xi)( \omega x^{2}+1) + x}$   & \cite{SPD17} & $\xi > \frac{\delta}{\beta - \delta \alpha}$ &  \mbox{Pest is eradicated, switching AF} \\
		& $g(x,\xi,\alpha) = \frac{\beta (x+\xi(\omega x^{2}+1)}{(1+\alpha \xi)(\omega x^{2} + 1) + x}$ &  &  & \mbox{ maintains/eliminates predator}  \\ \hline
		(iv) & $f(x,\xi,\alpha) = \frac{x}{1+\alpha \xi + x + \epsilon y}$ & \cite{SPM13} &  $\beta \xi = \delta(1+\alpha)$  & \mbox{Pest extinction state stabilises,}  \\
		& $g(x,\xi,\alpha) = \frac{\beta (x+\xi)}{1+\alpha \xi + x + \epsilon y}$ &  & $\beta(\gamma + \xi) =\delta(1+\alpha \xi + \gamma)$  &\mbox{via TC}  \\ \hline
		\end{tabular}}
		}
\vspace{.1cm}		 
     \flushleft
\textbf{Note:}	Herein $\gamma$ is the carrying capacity of the pest, $\beta$ is the conversion efficiency of the predator, $\delta$ is the death rate of the predator, $\frac{1}{\alpha}$ is the quality of the additional food provided to the predator and $\xi$ is the quantity of additional food provided to the predator. 
%		\end{itemize}
%		\end{tablenotes}
 %$x(t),y(t)$ are the number/density of a pest and predator species, Note, $\gamma, \beta, \delta, \alpha$ are all positive constant parameters. }
		\end{table}

\subsection{Key idea behind extinction mechanism}
		The key idea in all of the works in Table \ref{Table:3}, is to implement a pest management strategy, where  the quantity of additional food $\xi$ is increased, so that $\xi > \xi_{critical}$. Here $\xi_{critical}$ depends on the model parameters, and changes from model to model. The dynamical effect of this action is to push the vertical predator nullcline past the y-axis (predator axis), into the $2^{nd}$ quadrant, see Fig. \ref{fig:Blow-up}. Thus, now there is no positive interior equilibrium. Via positivity of solutions, trajectories will move towards the predator axis and ``hit" it, yielding pest extinction. 
\begin{figure}[!htb]
\begin{center}
    \includegraphics[width=5cm, height=5.4cm]{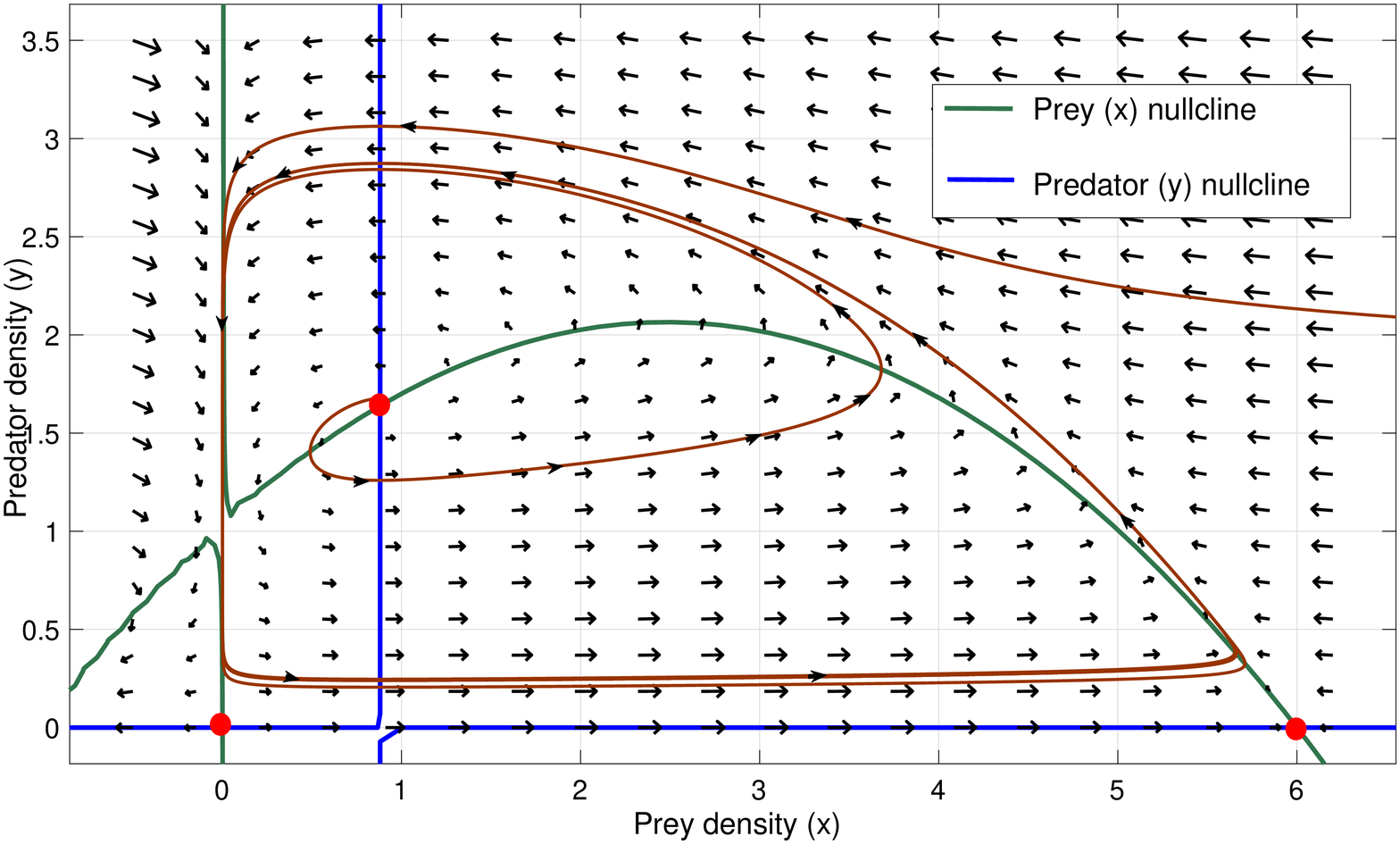} 
    \includegraphics[width=5cm, height=5.4cm]{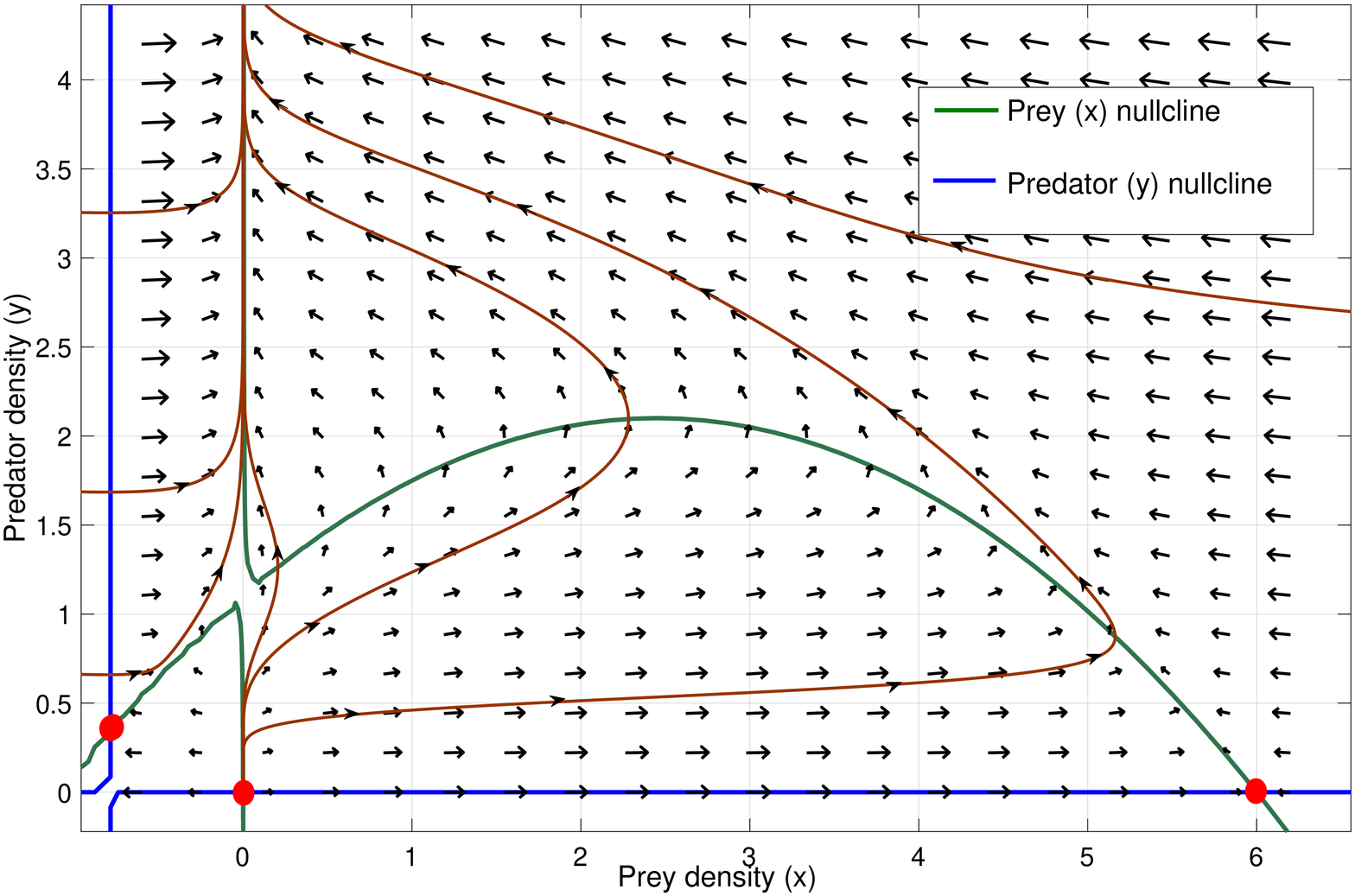}  
\end{center}
\caption{We compare $\xi <(>) \xi_{critical}$. (a) (or left figure)  $\xi = 0.5 < 0.7 = \frac{\delta}{\beta - \delta \alpha}$, and we see a large amplitude limit cycle . Here $\gamma=6, \alpha=0.1, \beta=0.3, \delta=0.2$ and $\xi=0.5$ (b) (or right figure) Now $\xi$ is increased, so $\xi = 1 > 0.7 = \frac{\delta}{\beta - \delta \alpha}$ - notice the predator nullcline is at $x=-1$. We see trajectories tend upwards along the y-axis (predator axis) - what we will show is that they tend to $(0,\infty)$, see Theorem \ref{thm:t1ug}.}
\label{fig:Blow-up}
\end{figure}

The literature also claims this is beneficial from a management standpoint, as one can control the additional food quantity $\xi$ to enhance pest control. That is,
\begin{itemize}
\item[(i)] $\xi$ is stopped to cause eventual predator extinction, after pest extinction, \cite{SP07}.
\item[(ii)] $\xi$ is kept at a requisite level, to maintain the predator at a target level, \cite{SP11, SP18}.
\item[(iii)] $\xi$ is changed so target states can be reached in minimal time, \cite{SP10, SP11, VA22}.
\end{itemize}

\vspace{.5cm}
\noindent We focus on the case with type II response next.

\subsection{Type II response}
The basic predator-pest model, with type II response is as follows,

\begin{equation}
\label{Eqn:1}
\frac{dx}{dt} = x\left(1-\frac{x}{\gamma}\right) - \frac{xy}{1+\alpha \xi + x}, \    \frac{dy}{dt} =\frac{\beta xy}{1+\alpha \xi + x} + \frac{\beta \xi y}{1+\alpha \xi + x} - \delta y.
\end{equation}

We recap the key results from the literature. These are essentially categorized based on quality and quantity of additional food. We will focus on the case of ``high quality" additional food, 
that is if $\frac{1}{\alpha} > \frac{\delta}{\beta} \Leftrightarrow \beta - \delta \alpha > 0$. Two regimes of the quantity of additional food are possible, detailed next.

\subsubsection{The small additional food regime}
We first consider $\xi \in (0, \frac{\delta}{\beta - \delta \alpha})$, the ``small" additional food regime. We further assume $\beta - \delta > 0$, this is biologically realistic, and follows from standard theory when $\xi=0$, and \eqref{Eqn:1} reduces to the standard Rosenzweig McArthur model \cite{K01}. In this case, there is one interior equilibrium which is stable if, $\frac{\delta}{\beta - \delta} < \gamma \leq \frac{\beta + \delta}{\beta - \delta}$, and predator and pest co-exist.
Or, the equilibrium could be unstable, with the existence of a stable limit cycle, if $\frac{\delta}{\beta - \delta}  \leq \frac{\beta + \delta}{\beta - \delta} < \gamma$.
That is a Hopf Bifurcation occurs at $\frac{\beta + \delta}{\beta - \delta} = \gamma$. In this case the predator and pest would now exhibit cyclical dynamics \cite{SP07}. Increasing additional food in the interval $ (0, \frac{\delta}{\beta- \delta \alpha}]$, lowers pest equilibrium (as essentially this moves the vertical predator nullcline closer and closer to the y-axis).
However, the pest cannot be driven extinct for any $\xi \in (0, \frac{\delta}{\beta - \delta \alpha})$.

%This depends on the choice of parameters, but essentially one requires,
%\begin{itemize}
%\item $\frac{\delta}{\beta - \delta} < \gamma \leq \frac{\beta + \delta}{\beta - \delta}$
%
%\item $ \frac{\delta}{\beta - \delta}  \leq \frac{\beta + \delta}{\beta - \delta} < \gamma $
%\end{itemize}

\subsubsection{The large additional food regime}

The literature \cite{SP07, SP11, SP10} claimed that increasing the additional food supply beyond this interval, so if $\xi \in \left(\frac{\delta}{\beta- \delta \alpha}, \infty \right)$,
eradicates the pest from the ecosystem in a finite time, and from that time the predators survive only on the additional food supply. We recap the result of interest from the literature \cite{SP07, SP11}, which quantifies the efficacy of the predator to achieve pest eradication when supplemented with additional food via \eqref{Eqn:1},

\begin{lemma}
\label{lem:1}
%[Srinivasu $\&$ Prasad, Bulletin of Mathematical Biology, 2011] 

Consider the predator-pest system described via \eqref{Eqn:1}.
 If the quality of the additional food satisfies $\beta - \delta \alpha  > 0$, then the pest can be eradicated from the ecosystem in a finite time by providing the predator with additional
food of quantity $\xi > \frac{\delta}{\beta - \delta \alpha }$.

%(b) If the quality of the additional food satisfies $\beta - \delta \alpha  < 0$, then it is not possible to
%eradicate prey from the ecosystem through provision of such additional food to the predators, \cite{SP07, SP11}.

\end{lemma}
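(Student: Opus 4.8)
The plan is to prove eradication by a monotonicity argument on the predator equation, which at the same time pins down what ``finite time'' can actually mean here. Write the predator's per-capita growth as $r(x):=\dfrac{\beta(x+\xi)}{1+\alpha\xi+x}-\delta$, so that \eqref{Eqn:1} reads $\dot x = x\big(1-\tfrac{x}{\gamma}-\tfrac{y}{1+\alpha\xi+x}\big)$ and $\dot y=y\,r(x)$. The vector field is smooth on $\{x\ge 0,\ y\ge 0\}$ (the denominator never vanishes), so solutions are unique and the predator axis $\{x=0\}$ is positively invariant; hence a trajectory issuing from $x(0)>0$ has $x(t)>0$ for every finite $t$, and ``eradication in finite time'' has to be read as $x(t)\to 0$ as $t\to\infty$ (equivalently, $x$ drops below any prescribed detection threshold in finite time). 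First I would record the a priori bound: since $\dot x\le x(1-x/\gamma)$ by nonnegativity of $y$, comparison with the logistic equation gives $0<x(t)\le M$ for all $t\ge 0$, with $M:=\max\{x(0),\gamma\}$, reusing standard theory from the $\xi=0$ reduction.

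Step one: the predator population is forced to grow without bound. A short computation gives $r'(x)=\dfrac{\beta\,(1+(\alpha-1)\xi)}{(1+\alpha\xi+x)^2}$, which has a single sign on $x\ge 0$, so $r$ is monotone with extreme values $r(0)=\dfrac{\xi(\beta-\delta\alpha)-\delta}{1+\alpha\xi}$ and $r(+\infty)=\beta-\delta$. The hypothesis $\xi>\dfrac{\delta}{\beta-\delta\alpha}$, together with the quality hypothesis $\beta-\delta\alpha>0$, is precisely $r(0)>0$; with the standing assumption $\beta-\delta>0$ (inherited from the Rosenzweig--MacArthur reduction, and needed here) monotonicity then forces $r(x)\ge r_{\min}:=\min\{r(0),\beta-\delta\}>0$ for every $x\ge 0$. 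Hence $\dot y=y\,r(x)\ge r_{\min}\,y$, so $y(t)\ge y(0)e^{r_{\min}t}\to\infty$.

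Step two: the inflated predator density then drives the pest down. For $0<x\le M$ one has $\dot x\le x\big(1-\tfrac{y}{1+\alpha\xi+M}\big)$; choosing $T$ with $y(T)\ge 2(1+\alpha\xi+M)$ (possible since $y$ is increasing, and then valid for all $t\ge T$) gives $\dot x\le -x$ on $[T,\infty)$, whence $x(t)\le x(T)e^{-(t-T)}\to 0$. This is the asserted eradication, and it even carries an exponential rate, which is presumably the sense in which the literature speaks of ``finite time''.

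The crux is conceptual rather than computational: the very inequality $\dot y\ge r_{\min}y$ that annihilates the pest also shows $y\to\infty$, so the phase-plane cartoon in which ``the trajectory hits the predator axis'' is misleading — the trajectory merely asymptotes to that axis while climbing it without bound. Turning $\dot y\ge r_{\min}y$ into genuine infinite-time blow-up, with control on the rate (and on whether $\xi$ is held fixed or tapered once $x$ is negligible), is exactly what Theorem~\ref{thm:t1ug} does, so I would write the proof of this lemma so as to set that up, rather than leaning on the nullcline heuristic.
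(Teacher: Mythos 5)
The paper does not prove this lemma: it is quoted from \cite{SP07, SP11} precisely in order to be refuted. Immediately after stating it, the paper recalls [Theorem 2.2, \cite{PWB20}], which asserts that pest eradication in \emph{finite} time is impossible under exactly these hypotheses, together with a remark that only infinite-time (exponentially fast) extinction holds. So there is no paper proof to compare against, and no correct proof of the statement as literally written can exist. You identify the obstruction correctly --- the vector field is smooth on the closed positive quadrant, the predator axis is invariant, and uniqueness forces $x(t)>0$ for every finite $t$ --- but you then replace the assertion by the weaker one $x(t)\to 0$ as $t\to\infty$. That is a proof of a different (and true) statement, not of the lemma; graded against the lemma as stated, the reinterpretation of ``finite time'' is a gap, even though it is the only honest way out, and it is exactly the point the paper is making by quoting the lemma.

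As a proof of the corrected statement your argument is sound and, in fact, more self-contained than what the paper offers. The monotonicity of $r(x)=\frac{\beta(x+\xi)}{1+\alpha\xi+x}-\delta$, together with $r(0)>0$ (equivalent to $\xi>\frac{\delta}{\beta-\delta\alpha}$) and $r(\infty)=\beta-\delta>0$, gives the uniform bound $r\ge r_{\min}>0$, hence $y(t)\ge y(0)e^{r_{\min}t}\to\infty$; the logistic bound $x\le M$ then turns the prey equation into $\dot x\le -x$ once $y$ exceeds $2(1+\alpha\xi+M)$, giving exponential decay of $x$. This packages [Proposition 1, \cite{PWB20}] (exponential pest decay) and the blow-up assertion of Theorem \ref{thm:t1ug} into a single comparison argument, replacing the paper's nullcline-region and contradiction reasoning, and it makes explicit the dependence on $\beta>\delta$ (without which $r$ can become negative for large $x$ in the regime where $r$ is decreasing) that the paper's own proof of Theorem \ref{thm:t1ug} uses but only flags as ``WLOG''.
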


However, this is not an accurate representation of the dynamics of \eqref{Eqn:1}. 
We recap the following [Theorem 2.2, \cite{PWB20}],

\begin{theorem*}
%\label{thm:1a}
Consider the predator-pest system described via \eqref{Eqn:1}. Pest eradication is not possible in finite time even if the quality of the additional food satisfies $\beta - \delta \alpha  > 0$
and the quantity of the additional food satisfies $\xi > \frac{\delta}{\beta - \delta \alpha}$.
\end{theorem*}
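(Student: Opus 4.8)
The plan is to exploit the fact that the pest axis $\{x=0\}$ is an invariant manifold of \eqref{Eqn:1} — on it the system reduces to $\dot x=0$, $\dot y=\big(\tfrac{\beta\xi}{1+\alpha\xi}-\delta\big)y$ — and that, since the vector field of \eqref{Eqn:1} is smooth on the closed first quadrant (the denominator $1+\alpha\xi+x\ge 1+\alpha\xi>0$ never vanishes there), uniqueness of solutions forbids any trajectory with $x(0)>0$ from ever reaching this manifold in finite time. That observation alone settles the statement, but I would prefer the quantitative version below, because it makes transparent that the threshold $\xi>\tfrac{\delta}{\beta-\delta\alpha}$ plays no role in finite-time eradication (it only governs the \emph{asymptotic} behaviour).

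First I would fix a solution with $x(0)=x_0>0$, $y(0)=y_0>0$, defined on its maximal interval $[0,T_{\max})$, and establish a priori bounds on an \emph{arbitrary} finite subinterval $[0,T]\subset[0,T_{\max})$. Positivity of $x,y$ is the standard invariance argument. From $\dot x\le x(1-x/\gamma)$ and the logistic comparison principle, $x(t)\le K:=\max\{x_0,\gamma\}$ on $[0,T]$. Rewriting the second equation as $\dot y=y\big(\tfrac{\beta(x+\xi)}{1+\alpha\xi+x}-\delta\big)$ and noting that $x\mapsto\tfrac{\beta(x+\xi)}{1+\alpha\xi+x}$ is a bounded monotone rational function on $[0,\infty)$, say bounded above by a constant $C_0$ depending only on the parameters, we get $\dot y\le C_0 y$, hence $y(t)\le Y_T:=y_0 e^{C_0 T}$ on $[0,T]$.

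Next I would integrate the first equation in multiplicative form:
\[
x(t)=x_0\exp\!\left(\int_0^t\Big[1-\frac{x(s)}{\gamma}-\frac{y(s)}{1+\alpha\xi+x(s)}\Big]\,ds\right),\qquad t\in[0,T].
\]
By the bounds just obtained, the bracketed integrand is $\ge 1-\tfrac{K}{\gamma}-\tfrac{Y_T}{1+\alpha\xi}=:-M_T>-\infty$ on $[0,T]$, whence $x(t)\ge x_0\,e^{-M_T t}>0$ for every $t\in[0,T]$. Since $T<T_{\max}$ was arbitrary, $x$ stays strictly positive on all of $[0,T_{\max})$, so the pest population is never zero at any finite time — and this conclusion holds for every admissible $\xi$, in particular for $\xi>\tfrac{\delta}{\beta-\delta\alpha}$, which is precisely the negation of Lemma~\ref{lem:1}. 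I would close by remarking that the threshold only affects the \emph{sign} of the integrand for large $t$: when $\xi>\tfrac{\delta}{\beta-\delta\alpha}$ it is eventually negative and bounded away from $0$, which forces $x(t)\to 0$ as $t\to\infty$ (and $y(t)\to\infty$, the content of Theorem~\ref{thm:t1ug}); thus the correct statement is asymptotic, not finite-time, eradication.

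The main obstacle here is bookkeeping rather than anything deep: one must resist assuming global existence and instead run every estimate on a generic finite interval $[0,T]\subset[0,T_{\max})$, keeping the $y$-bound uniform there; once that is done, strict positivity of $x$ falls out of the exponential representation for free. (Upgrading $T_{\max}$ to $\infty$ is also possible, but is not needed for this theorem and is better deferred to the blow-up analysis.)
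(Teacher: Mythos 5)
Your argument is correct: the bounds $x\le\max\{x_0,\gamma\}$ and $y\le y_0e^{C_0T}$ on any finite interval, combined with the multiplicative representation $x(t)=x_0\exp\bigl(\int_0^t[\cdots]\,ds\bigr)$, give $x(t)\ge x_0e^{-M_Tt}>0$, which is exactly the invariance/uniqueness obstruction to finite-time extinction. The paper itself does not reprove this statement (it cites [Theorem 2.2, \cite{PWB20}]), but the proof given there is essentially this same exponential lower-bound argument, so your proposal matches the intended approach.
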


\begin{remark}
Although pest extinction is not possible in finite time, [Theorem 2.2, \cite{PWB20}], we have shown pest extinction is possible in infinite time, [Proposition 1, \cite{PWB20}] - here we show the pest can decay to the extinction state, exponentially $(\approx e^{-t})$. 
\end{remark}

Note, the resulting dynamics of the predator population $y$, in the case of pest extinction is \emph{not} explored in \cite{PWB20}. Neither has this dynamic been rigorously explored in \emph{any} of the literature on additional food models \cite{SP07, SP10, SP11, SPV18, SPM13, SPD17}.
What we find, is that in the case of pest extinction, the resulting dynamics of the predator could \emph{hinder} bio-control - and so is altogether unpragmatic as a pest management strategy. We provide details to this end next.

%In the current manuscript manuscript we consider the effect of additional food driven competition amongst the introduced predators. We show that this mechanism can lead to richer dynamics in the system, as well as control the unbounded predator growth.

\subsection{Unbounded Predator growth}
We see that providing additional food in the large regime, can drive the pest extinct asymptotically (c.f. [Proposition 1, \cite{PWB20}]) - but only at the cost of \emph{unbounded} growth of the introduced predator. That is, if $\xi > \frac{\delta}{\beta - \delta \alpha}$, the dynamics of \eqref{Eqn:1} always result in blow-up in infinite time of the predator density. To this end we state the following theorem,
 
% However if $\xi > \frac{\delta}{\beta - \delta \alpha}$, then the predator grows unbounded, possibly causing a host of non-target effects, this is seen trivially via the following theorem,

\begin{theorem}
\label{thm:t1ug}
Consider the predator-pest system described via \eqref{Eqn:1}, and assume the parametric restrictions, $\beta - \delta \alpha >0$. If the quantity of additional food $\xi$ is chosen s.t., 
$\xi \in \left( \frac{\delta}{\beta - \delta \alpha}, \infty \right)$, then solutions initiating from any positive initial condition, blow up in infinite time. That is,

\begin{equation*}
(x(t),y(t)) \rightarrow (0,\infty) \ \mbox{as} \ t \rightarrow \infty.
\end{equation*}

\end{theorem}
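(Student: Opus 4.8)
The plan is to turn the geometric picture behind Fig. \ref{fig:Blow-up} --- ``the vertical predator nullcline has been pushed into the second quadrant'' --- into a quantitative statement: under $\xi>\frac{\delta}{\beta-\delta\alpha}$ the predator's per-capita growth rate stays bounded below by a positive constant for \emph{all} $x\ge 0$, which drives $y$ up without bound, which in turn drives $x$ down to $0$. First I would record the standing estimates. The axes $\{x=0\}$ and $\{y=0\}$ are invariant, so a solution from the open positive quadrant stays there; from $\dot x\le x(1-x/\gamma)$ and comparison with the logistic equation, $x(t)\le M:=\max\{x(0),\gamma\}$ for all $t\ge0$ (and $\limsup_{t\to\infty}x(t)\le\gamma$); since the $y$-equation then has at most linear growth in $y$, solutions are global.

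Next, the predator explodes. Writing $h(x):=\dfrac{\beta(x+\xi)}{1+\alpha\xi+x}$, the predator equation reads $\dot y=y\big(h(x)-\delta\big)$, and the hypothesis $\xi(\beta-\delta\alpha)>\delta$ is precisely $\beta\xi>\delta(1+\alpha\xi)$, i.e. $h(0)>\delta$. As a M\"obius function with its pole at $x=-(1+\alpha\xi)<0$, $h$ is monotone on $[0,\infty)$ with $h(+\infty)=\beta$; invoking the standing (biologically natural) assumption $\beta>\delta$, one gets $h(x)-\delta\ge\eta_0:=\min\{h(0)-\delta,\ \beta-\delta\}>0$ for every $x\ge0$. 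Since $x(t)\ge0$ throughout, $\dot y\ge\eta_0\,y$, hence $y(t)\ge y(0)e^{\eta_0 t}\to\infty$. This is genuine \emph{infinite-time} blow-up: $h$ is also bounded above on $[0,\infty)$, so $\dot y\le Cy$ and $y$ cannot escape in finite time.

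Then the pest vanishes. Using $x(t)\le M$, $\dot x=x\big(1-\tfrac{x}{\gamma}-\tfrac{y}{1+\alpha\xi+x}\big)\le x\big(1-\tfrac{y(t)}{1+\alpha\xi+M}\big)$. Since $y(t)\to\infty$, there is $T$ with $\tfrac{y(t)}{1+\alpha\xi+M}\ge2$ for $t\ge T$, so $\dot x\le-x$ on $[T,\infty)$ and $x(t)\le x(T)e^{-(t-T)}\to0$. Combined with the previous paragraph this yields $(x(t),y(t))\to(0,\infty)$ as $t\to\infty$.

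I expect the only delicate point to be the second step: one must verify that ``the nullcline is in the second quadrant'' really delivers a \emph{uniform} positive lower bound on $h(x)-\delta$ across the whole ray $x\ge0$ --- this means tracking the (constant) sign of $h'$ and, crucially, using $\beta>\delta$, since without it $h$ could dip back below $\delta$ for large $x$, an interior equilibrium could reappear, and the ``for any positive initial condition'' conclusion would fail. The bound $x(t)\le M$ from the first step is equally load-bearing: it is exactly what guarantees the predation term $y/(1+\alpha\xi+x)$ genuinely diverges, rather than being diluted by a simultaneously growing $x$.
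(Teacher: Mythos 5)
Your proof is correct, and it takes a genuinely different and in fact tighter route than the paper's. The paper only establishes $\frac{dy}{dt}\ge 0$ globally, then runs a phase-plane argument (splitting the quadrant by the prey nullcline to argue trajectories funnel toward the predator axis), obtains exponential growth of $y$ only once $x<\epsilon$, and finally excludes a bounded limit $(0,y^{*})$ by a separate contradiction with a suitably large initial condition. You instead observe that the numerical response $h(x)=\frac{\beta(x+\xi)}{1+\alpha\xi+x}$ is a monotone M\"obius function on $[0,\infty)$ whose two ``endpoint'' values $h(0)=\frac{\beta\xi}{1+\alpha\xi}$ and $h(+\infty)=\beta$ both exceed $\delta$ (the first by the hypothesis $\xi>\frac{\delta}{\beta-\delta\alpha}$, the second by the standing assumption $\beta>\delta$ which the paper also invokes via ``WLOG $\beta>\delta$''), so the per-capita predator growth rate is bounded below by $\eta_{0}>0$ on the whole ray; this gives $y(t)\ge y_{0}e^{\eta_{0}t}$ from time zero with no phase-plane bookkeeping, and then the a priori bound $x\le M$ makes the predation term $\frac{y}{1+\alpha\xi+x}$ diverge, forcing $\frac{dx}{dt}\le -x$ eventually and hence exponential decay of $x$. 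Your version is shorter, fully quantitative (explicit rates for both the explosion of $y$ and the extinction of $x$), dispenses with the paper's informal ``trajectories will turn and move towards the y-axis'' step and its separate exclusion of $(0,y^{*})$, and correctly flags that $\beta>\delta$ is load-bearing for the uniform bound; what the paper's argument buys in exchange is the geometric picture of orbits being swept toward the predator axis, which motivates the figures.
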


\begin{proof}

Note, WLOG $\beta > \delta$, and so if $\xi > \frac{\delta}{\beta - \delta \alpha}$, then $\beta \xi  > \delta(1+\alpha \xi )$ 

or $(\beta x + \beta \xi)  > \delta(1+\alpha \xi + x)$. Thus,

%\begin{equation*}
%(\beta x + \beta \xi)  > \delta(1+\alpha \xi + x)
%\end{equation*}

\begin{equation*}
 \frac{dy}{dt} = \frac{\beta xy}{1+\alpha \xi + x} + \frac{\beta \xi y}{1+\alpha \xi + x} - \delta y = \left(\frac{(\beta x + \beta \xi) - \delta(1+\alpha \xi + x)}{1+\alpha \xi + x} \right) y \geq 0,
\end{equation*}

if $x,y>0$. First we consider the region

\begin{equation*}
\left(1-\frac{x}{\gamma}\right)(1+\alpha \xi + x) < y.
\end{equation*}

Herein
\begin{equation*}
\frac{dx}{dt} = x\left(1-\frac{x}{\gamma}\right) - \frac{xy}{1+\alpha \xi + x} < 0.
\end{equation*}

Thus in this region the $x$ component of trajectories is decreasing, as $\frac{dx}{dt}<0$ whereas the $y$ component is increasing as $\frac{dy}{dt} \geq 0$. Thus all trajectories move towards the y-axis as there is no  positive interior equilibrium, and the boundary equilibria $(0,0)$ is a source, while $(\gamma, 0)$ is a saddle, so both of those are non attracting. Next consider the region of the phase space,

\begin{equation*}
\left(1-\frac{x}{\gamma}\right)(1+\alpha \xi + x) > y.
\end{equation*}

Then $\frac{dx}{dt}>0$ and $\frac{dy}{dt} \geq 0$, thus trajectories will move outwards, till they enter the region $(1-\frac{x}{\gamma})(1+\alpha \xi + x) < y$. However, once they enter this region, $\frac{dx}{dt}<0$ and $\frac{dy}{dt} \geq 0$, and so trajectories will turn and move towards the y-axis.

Now given the parametric restriction $\xi > \frac{\delta}{\beta - \delta \alpha} \Leftrightarrow \left( \frac{\beta \xi }{1+\alpha \xi } - \delta \right) > 0$, $\exists~ \delta_{1} > 0$ s.t $\left( \frac{\beta \xi }{1+\alpha \xi } - \delta \right) >  \delta_{1}  > 0$. Thus for a given set of parameters s.t $\left( \frac{\beta \xi }{1+\alpha \xi } - \delta \right) >  \delta_{1}  > 0$, $\exists~ \epsilon(\delta_{1})$ s.t.

\begin{equation}
\label{eq:1e}
 \left( \frac{\beta \xi }{1+\alpha \xi + \epsilon} - \delta \right)> \frac{\delta_{1}}{2}.
\end{equation}

Once trajectories are close enough to the y-axis, that is when for an $\epsilon$ in \eqref{eq:1e}, $x < \epsilon$, from \eqref{Eqn:1} we have,

\begin{equation*}
 \frac{dy}{dt} =  \frac{\beta xy}{1+\alpha \xi + x} + \frac{\beta \xi y}{1+\alpha \xi + x} - \delta y \geq \left( \frac{\beta \xi }{1+\alpha \xi + \epsilon} - \delta \right) y > \left( \frac{\delta_{1}}{2}\right) y.
\end{equation*}

 Via simple comparison $y > e^{\left(\frac{\delta_{1}}{2}\right)  t}y_{0}$, so trajectories will move upwards along the y-axis. 
 
Note, a $(0,y^{*})$ state is clearly not achievable. Assume it was, we could consider an initial condition $(x_{0}, y_{0})$ with $x_{0} < \gamma$, $y_{0} >> \max \left(y^{*}, \frac{(1+\alpha \xi + \gamma)^{2}}{4\gamma}\right)$. Here $\frac{(1+\alpha \xi + \gamma)^{2}}{4\gamma}$, is the maximum value of the prey nullcline. Since $\frac{dx}{dt} > 0, \frac{dy}{dt} \geq 0$, for solutions initiating from this initial condition, trajectories cannot turn downwards towards the alleged steady state $(0,y^{*})$. This yields a contradiction. 

Thus the only eventuality is that,

\begin{equation*}
(x(t),y(t)) \rightarrow (0,\infty) \ \mbox{as} \ t \rightarrow \infty,
\end{equation*}

that is the predator $y$ blows up in infinite time. This proves the Theorem.

\end{proof}

%\begin{remark}
%Theorem \ref{thm:t1ug} is true even if the dynamics between predator and pest is modeled via a type III response as in \cite{}. This is easily seen via the form of the models in \cite{}. 
%\end{remark}

\begin{remark}
Note, a stable (and bounded) pest free state is only possible if $\xi = \left( \frac{\delta}{\beta- \delta \alpha} \right)$.
\end{remark}

\begin{figure}[!htb]
\begin{center}
      \includegraphics[width=6cm, height=6cm]{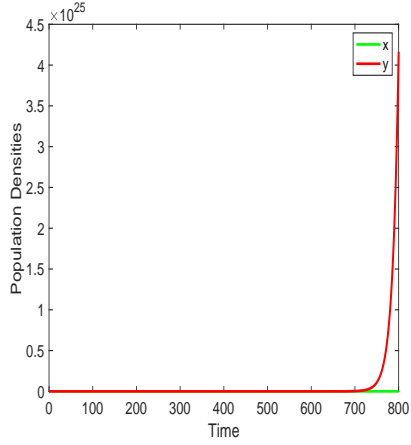}   
\end{center}
\caption{Time series of the population densities depicting the blow up of the predator in infinite time with IC$=[3,2]$. Here $\gamma=6, \alpha=0.1, \beta=0.3, \delta=0.2$ and $\xi=1$. }
\label{fig:Blow-up}
\end{figure}

\section{A New Modeling Formulation}

%\section{Alternative Modeling Formulation }
\subsection{Additional Food Mediated Competition}

Since unbounded growth of the predator established via theorem \ref{thm:t1ug} can lead to various non-target effects \cite{PQB16}, increasing additional food in the regime $\xi \in \left(\frac{\delta}{\beta- \delta \alpha}, \infty\right)$, via \eqref{Eqn:1} is not pragmatic as a pest control strategy. To circumvent this we introduce a new model. To this end we first make the following assumption,

\begin{assumption}
The additional food initiates intraspecific competition amongst the introduced predators, which is not present without the additional food.
\end{assumption}
 
 This leads to the following model,

\begin{equation}
\label{Eqn:1nn}
\frac{dx}{dt} = x\left(1-\frac{x}{\gamma}\right) - \frac{xy}{1+\alpha \xi + x}, \    \frac{dy}{dt} =\frac{\beta xy}{1+\alpha \xi + x} + \frac{\beta \xi y}{1+\alpha \xi + x} - \delta y - c\xi y^{2},
\end{equation}

with a classical quadratic competition term - which is dependent on the quantity additional food $\xi$. That is if there is no additional food ($\xi = 0$), then there is no intraspecific competition among the predators, and we recover the classical Rosenzweig-McArthur model.

%\begin{remark}
%One could also model the ``competition" term via   $c y^{p}, 1 \leq p \leq 2$, and then it would have different ecological connotations. If $p=1$, one obtains classical linear harvesting, and if $p=2$, one has classical  natural intraspecific competition. If $1<p<2$, one has a nonlinear form of harvesting (or possibly density dependent death).
%\end{remark}

The competition foremost prevents unbounded growth of the predator.

\begin{lemma}
\label{lem:ugp}
Consider the predator-pest system described via \eqref{Eqn:1nn}, and assume the parametric restrictions, $\beta - \delta \alpha >0$,
$\xi > \frac{\delta}{\beta - \delta \alpha}$, then all solutions $(x(t),y(t))$ initiating from any positive initial condition, remain bounded for all time.

\end{lemma}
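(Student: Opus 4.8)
The plan is a routine differential-inequality (comparison) argument: the quadratic competition term $-c\xi y^{2}$ dominates the \emph{bounded} positive growth terms once $y$ is large, which forces an ultimate bound. First I would record positivity. The lines $\{x=0\}$ and $\{y=0\}$ are invariant for \eqref{Eqn:1nn}, so a trajectory starting in the open first quadrant stays there; hence $x(t)>0$ and $y(t)>0$ on the maximal interval of existence, which legitimizes the sign manipulations below. Next, bound $x$: since the predation term $\tfrac{xy}{1+\alpha\xi+x}$ is nonnegative for $x,y>0$, we have $\frac{dx}{dt}\le x\bigl(1-\tfrac{x}{\gamma}\bigr)$, and comparison with the logistic equation gives $x(t)\le\max\{x(0),\gamma\}$ for all $t$, with $\limsup_{t\to\infty}x(t)\le\gamma$.

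The crux is bounding $y$. Here the point is that the predator's per-capita gain from prey and food is uniformly bounded in $x$: for every $x\ge 0$,
$$\frac{\beta x}{1+\alpha\xi+x}<\beta,\qquad \frac{\beta\xi}{1+\alpha\xi+x}\le\frac{\beta\xi}{1+\alpha\xi}.$$
Writing $M:=\beta+\dfrac{\beta\xi}{1+\alpha\xi}-\delta$, any positive solution of \eqref{Eqn:1nn} therefore satisfies
$$\frac{dy}{dt}< My-c\xi y^{2}.$$
The scalar right-hand side $h(z):=Mz-c\xi z^{2}$ is locally Lipschitz and logistic-type (recall $c\xi>0$), so by the scalar comparison theorem $y(t)\le z(t)$, where $z$ solves $\dot z=h(z)$ with $z(0)=y(0)$. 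Since $z(0)>0$, the logistic solution obeys $z(t)\le\max\bigl\{y(0),\,\max(0,M)/(c\xi)\bigr\}$ for all $t$, with $\limsup_{t\to\infty}z(t)\le\max(0,M)/(c\xi)$. Hence $y$ is bounded, with an ultimate bound independent of the initial condition.

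Putting this together, both components stay bounded on the maximal interval of existence, so (by the usual continuation criterion) solutions are global, and they remain in the compact set $\bigl\{0<x\le\max(x(0),\gamma),\ 0<y\le\max(y(0),\max(0,M)/(c\xi))\bigr\}$; in fact $\bigl\{0\le x\le\gamma,\ 0\le y\le\max(0,M)/(c\xi)\bigr\}$ is a global attractor for the positive quadrant. I do not anticipate a genuine obstacle: the only thing to watch is that the additional-food contribution $\tfrac{\beta\xi y}{1+\alpha\xi+x}$ does not blow up as $x\to 0^{+}$, which the estimate $\tfrac{\beta\xi}{1+\alpha\xi+x}\le\tfrac{\beta\xi}{1+\alpha\xi}$ handles; the rest is standard logistic comparison. (Note that the hypotheses $\beta-\delta\alpha>0$ and $\xi>\tfrac{\delta}{\beta-\delta\alpha}$ are not actually used for boundedness — they enter only in the later equilibrium and bifurcation analysis — so the argument in fact gives boundedness for all parameter values with $c,\xi>0$.)
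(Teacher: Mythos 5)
Your proof is correct and follows essentially the same route as the paper's: logistic comparison for $x$, then a uniform bound on the predator's per-capita gain so that $\frac{dy}{dt} < My - c\xi y^{2}$, and logistic comparison again for $y$. The only (harmless) difference is that you bound $\frac{\beta x}{1+\alpha\xi+x}$ by $\beta$ uniformly rather than by $\frac{\beta\gamma}{1+\alpha\xi}$ via $x\le\gamma$ as the paper does, and your observation that the hypotheses $\beta-\delta\alpha>0$, $\xi>\frac{\delta}{\beta-\delta\alpha}$ are not needed for boundedness is accurate.
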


\begin{proof}
We have from the equation for $y$, positivity of solutions, and the boundedness of the pest $x$ by its carrying capacity $\gamma$ that,
\begin{equation*}
 \frac{dy}{dt} = \frac{\beta xy}{1+\alpha \xi + x} + \frac{\beta \xi y}{1+\alpha \xi + x} - \delta y -c\xi y^{2} < \left(\frac{\beta \gamma + \beta \xi}{1+\alpha \xi } \right) y - c\xi y^{2}.
\end{equation*}

Now comparison with a logistic ODE gives the boundedness of $y$. The equation for the pest $x$ is unchanged, so boundedness of $x$ follows also via comparison to a logistic ODE.

\end{proof}

\section{Equilibrium Analysis}
\label{sec:app}

\subsection{Linear Stability}

The Jacobian matrix $\left(J\right)$ of the system ~\eqref{Eqn:1nn} can be represented as\\
%\begin{center}
%$J=
%\begin{bmatrix}
%1-\frac{2x^*}{\gamma}-\frac{\left(1+\alpha\xi \right)y^*}{\left(1+\alpha\xi+x^*\right)^2} &  -\frac{x^*}{\left(1+\alpha\xi+x^*\right)} \\
%\frac{\left(1+\alpha\xi-\xi\right)\beta y^*}{\left(1+\alpha\xi+x^*\right)^2} & \frac{\beta\left(x^*+\xi\right) }{\left(1+\alpha\xi+x^*\right)}-\delta-2c\xi y^*
%\end{bmatrix}
%$,\\
%\end{center}
\begin{align}\label{Eqn:JacobianMain}
J=
\begin{bmatrix}
J_{11} &  J_{12}\\
J_{21} & J_{22}
\end{bmatrix},
\end{align}
where $J_{11}=1-\frac{2x^*}{\gamma}-\frac{\left(1+\alpha\xi \right)y^*}{\left(1+\alpha\xi+x^*\right)^2},~ J_{12}= -\frac{x^*}{\left(1+\alpha\xi+x^*\right)},~ J_{21}=\frac{\left(1+\alpha\xi-\xi\right)\beta y^*}{\left(1+\alpha\xi+x^*\right)^2}$ and $J_{22}=\frac{\beta\left(x^*+\xi\right) }{\left(1+\alpha\xi+x^*\right)}-\delta-2c\xi y^*.$ \\

\noindent Then the trace (tr) and determinant (det) of $J$ is given by

\begin{align}\label{equa1}
    \Tr(J)=& J_{11}+J_{22}  \nonumber\\
    =& 1-\frac{2x^*}{\gamma}-\frac{\left(1+\alpha\xi \right)y*}{\left(1+\alpha\xi+x^*\right)^2}+\frac{\beta\left(x^*+\xi\right) }{\left(1+\alpha\xi+x^*\right)}-\delta-2c\xi y^* \nonumber\\
     =& \frac{x^*}{\left(1+\alpha\xi+x^*\right)}\left[ 1-\frac{1}{\gamma}\left(1+\alpha\xi+2x^*\right)\right]-c\xi y^*. 
\end{align}

\begin{align}\label{equa2}
 %   & \det(J) &\nonumber \\
    \det(J)=& J_{11}J_{22}-J_{12}J_{21} \\
    =& \left[ 1-\frac{2x^*}{\gamma}-\frac{\left(1+\alpha\xi \right)y*}{\left(1+\alpha\xi+x^*\right)^2} \right]\cdot\left[\frac{\beta\left(x^*+\xi\right) }{\left(1+\alpha\xi+x^*\right)}-\delta-2c\xi y^*\right] \nonumber\\
     -& \left[\frac{-x^*}{\left(1+\alpha\xi+x^*\right)^2} \right]\cdot\left[\frac{\left(1+\alpha\xi-\xi\right)\beta y^*}{\left(1+\alpha\xi+x^*\right)^2}\right]. \nonumber \\
\end{align}
%%%%%%%%%%%%%%%%%%%%%%%%%%%%%%%%%%%%%%%%%%%%%%%%%%%%%%%%%%%%%%%%%%%%%%%%%%%%%%%%%%%%%%%%%%%%%%%%%%%%%%%%%%%%%%%%%%%%%%%%%%%%
\subsubsection{\bf{Extinction state $(0,0)$}}
%We substitute  $x^*=0$  and $y^*=0$ into the equations  ~\eqref{equa1} and  ~\eqref{equa2}, to obtain

\begin{lemma}
\label{lem:lems1s0}
Consider $\xi> \frac{\delta}{\beta-\delta\alpha}$, then the extinction state $(0, 0)$ is unstable as a source.
\end{lemma}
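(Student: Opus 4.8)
The plan is to linearize the system \eqref{Eqn:1nn} at the origin and read off the eigenvalues of the Jacobian $J$ evaluated at $(x^*,y^*)=(0,0)$. First I would substitute $x^*=0,\ y^*=0$ into the entries $J_{11},J_{12},J_{21},J_{22}$ given just above the lemma. The off-diagonal entries vanish: $J_{12}=-x^*/(1+\alpha\xi+x^*)=0$ and $J_{21}=(1+\alpha\xi-\xi)\beta y^*/(1+\alpha\xi+x^*)^2=0$, so at the origin $J$ is diagonal, hence upper (and lower) triangular, and its eigenvalues are exactly the diagonal entries $J_{11}$ and $J_{22}$ evaluated at $(0,0)$.

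Next I would compute those diagonal entries. We get $J_{11}\big|_{(0,0)} = 1 - 0 - 0 = 1 > 0$, and $J_{22}\big|_{(0,0)} = \dfrac{\beta\xi}{1+\alpha\xi} - \delta - 0 = \dfrac{\beta\xi - \delta(1+\alpha\xi)}{1+\alpha\xi}$. The hypothesis $\xi > \frac{\delta}{\beta-\delta\alpha}$ (together with $\beta-\delta\alpha>0$, which is the standing assumption whenever this inequality is invoked) is precisely equivalent to $\beta\xi > \delta(1+\alpha\xi)$, so $J_{22}\big|_{(0,0)} > 0$ as well. Both eigenvalues are real and strictly positive, so $(0,0)$ is a hyperbolic unstable node — a source — which is exactly the claim.

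The argument is essentially a one-line eigenvalue computation, so there is no genuine obstacle; the only thing to be careful about is making the equivalence $\xi > \frac{\delta}{\beta-\delta\alpha} \iff \beta\xi > \delta(1+\alpha\xi)$ explicit, noting that cross-multiplying is legitimate because $\beta-\delta\alpha>0$ and $1+\alpha\xi>0$. I would also remark that this recovers, as expected, the classical behavior of the Rosenzweig–MacArthur model at the trivial equilibrium (where the prey per-capita growth rate is $1>0$), with the second eigenvalue being the predator's net per-capita rate when subsisting only on additional food; the competition term $-c\xi y^2$ contributes nothing at $y^*=0$, so it plays no role here. Finally I would state the conclusion: since both eigenvalues of $J(0,0)$ are positive, $(0,0)$ is unstable, and more precisely an unstable node (source), completing the proof.
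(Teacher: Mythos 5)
Your proof is correct and follows essentially the same route as the paper: linearize at the origin and check the signs of the spectrum of $J(0,0)$, using that $\xi>\frac{\delta}{\beta-\delta\alpha}$ (with $\beta-\delta\alpha>0$) is equivalent to $\beta\xi>\delta(1+\alpha\xi)$. The paper phrases it via $\Tr(J)>0$ and $\det(J)>0$, while you read off the eigenvalues $1$ and $\frac{\beta\xi}{1+\alpha\xi}-\delta$ directly from the diagonal matrix — the same computation, stated slightly more explicitly.
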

\begin{proof}
The trace at the extinction state is given by
$ \Tr(J) = 1 + \frac{\beta \xi}{1 + \alpha \xi} - \delta $, thus if $\xi> \frac{\delta}{\beta-\delta\alpha}$, then  $ \Tr(J) > 0, \det(J) > 0$, thus $(0,0)$ is unstable.
\end{proof}
\subsubsection{\bf{Pest free state $(0,y^*)$}}

\begin{lemma}
\label{lem:lems1s}
Consider $\xi> \frac{\delta}{\beta-\delta\alpha}$, then the pest free state $(0, y^{*})$ is a stable node if 
$ c<\frac{(\beta-\alpha\delta)\xi- \delta}{\xi(1+\alpha\xi)^2}$, and is a saddle if $ c > \frac{(\beta-\alpha\delta)\xi- \delta}{\xi(1+\alpha\xi)^2}$.
\end{lemma}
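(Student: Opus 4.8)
The plan is to compute the Jacobian \eqref{Eqn:JacobianMain} at the pest free equilibrium and read off its eigenvalues directly, since the equilibrium lies on the invariant predator axis $\{x=0\}$ and hence $J$ is triangular there. First I would pin down the equilibrium: setting $x^*=0$ in the $y$-equation of \eqref{Eqn:1nn} gives $\frac{\beta\xi}{1+\alpha\xi}-\delta-c\xi y^*=0$, so $y^* = \frac{1}{c\xi}\left(\frac{\beta\xi}{1+\alpha\xi}-\delta\right) = \frac{(\beta-\alpha\delta)\xi-\delta}{c\xi(1+\alpha\xi)}$, which is positive precisely because $\xi>\frac{\delta}{\beta-\delta\alpha}$ (the hypothesis), so the state exists and is biologically meaningful.

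Next I would substitute $x^*=0$ into the four entries $J_{11},J_{12},J_{21},J_{22}$ from \eqref{Eqn:JacobianMain}. Since $J_{12}=-\frac{x^*}{1+\alpha\xi+x^*}$ vanishes at $x^*=0$, the matrix becomes lower-triangular, so its eigenvalues are exactly $J_{11}$ and $J_{22}$ evaluated at $(0,y^*)$. One eigenvalue is $J_{22}=\frac{\beta\xi}{1+\alpha\xi}-\delta-2c\xi y^*$; using the equilibrium relation $\frac{\beta\xi}{1+\alpha\xi}-\delta=c\xi y^*$ this simplifies to $J_{22}=-c\xi y^*<0$, so this eigenvalue is always negative. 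The other eigenvalue is $J_{11}=1-\frac{(1+\alpha\xi)y^*}{(1+\alpha\xi)^2}=1-\frac{y^*}{1+\alpha\xi}$. Substituting the formula for $y^*$ and simplifying, $J_{11}<0 \iff y^* > 1+\alpha\xi \iff \frac{(\beta-\alpha\delta)\xi-\delta}{c\xi(1+\alpha\xi)} > 1+\alpha\xi \iff c < \frac{(\beta-\alpha\delta)\xi-\delta}{\xi(1+\alpha\xi)^2}$, which is exactly the stated threshold. When this inequality holds both eigenvalues are real and negative, so $(0,y^*)$ is a stable node; when the reverse strict inequality holds, $J_{11}>0>J_{22}$, giving a saddle. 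I would also note the degenerate case $c=\frac{(\beta-\alpha\delta)\xi-\delta}{\xi(1+\alpha\xi)^2}$ corresponds to $J_{11}=0$, a transcritical bifurcation (the interior equilibrium collides with the pest free state), which is consistent with the paper's later discussion but not needed for the lemma as stated.

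The only mild subtlety — and the step to execute carefully rather than a genuine obstacle — is the algebraic simplification: making sure the $(1+\alpha\xi)$ factors cancel correctly between the numerator of $J_{11}$'s $y^*$-term and the $(1+\alpha\xi+x^*)^2$ denominator at $x^*=0$, and that the equilibrium substitution into $J_{22}$ is done consistently. There is no analytic difficulty here because triangularity hands us the eigenvalues for free; the whole argument is a short computation plus the positivity bookkeeping on $y^*$, which rides on the standing hypothesis $\xi>\frac{\delta}{\beta-\delta\alpha}$ and the high-quality-food assumption $\beta-\delta\alpha>0$.
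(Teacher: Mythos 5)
Your proposal is correct and takes essentially the same route as the paper: substitute $(0,y^*)$ into the Jacobian \eqref{Eqn:JacobianMain} and classify by the signs of the resulting entries, arriving at the same threshold $c=\frac{(\beta-\alpha\delta)\xi-\delta}{\xi(1+\alpha\xi)^2}$ via $y^*\gtrless 1+\alpha\xi$. Your observation that $J_{12}=0$ makes the matrix triangular is in fact a slight improvement over the paper's trace--determinant argument, since it hands you real eigenvalues directly and thereby justifies the ``node'' (as opposed to spiral) classification.
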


\begin{proof}
We substitute  $x^*=0$  and $y^*=\frac{(\beta -\delta \alpha)\xi-\delta}{c\xi(1+\alpha \xi)}$ into the equations  ~\eqref{equa1} and  ~\eqref{equa2}, to obtain,
$ \Tr(J) = -c \xi y^{*} < 0, \ \det(J) = - \left( 1-\frac{y^{*}}{1+\alpha \xi} \right)y^{*}$.
Via standard theory $(0,y^{*})$ is locally stable when,
$y^{*} = \frac{(\beta -\delta \alpha)\xi-\delta}{c\xi(1+\alpha \xi)} > 1+ \alpha \xi$, as then, $\det(J)  > 0$, and when $y^{*} = \frac{(\beta -\delta \alpha)\xi-\delta}{c\xi(1+\alpha \xi)} < 1+ \alpha \xi$, $\det(J)  < 0$, so we have a saddle.
\end{proof}

\subsubsection{\bf Predator free state $(\gamma,0)$}

\begin{lemma}
\label{lem:lems1}
Consider $\xi> \frac{\delta}{\beta-\delta\alpha}$, then the predator free state $(\gamma, 0)$ is a saddle.

\end{lemma}

\begin{proof}
We substitute $x^*=\gamma$ and $y^*=0$ into the equations  ~\eqref{equa1} and  ~\eqref{equa2}, and obtain \\

    $\Tr(J)|_{(\gamma,0)}=1-\frac{2\gamma}{\gamma}+\frac{\beta\left(\gamma+\xi\right) }{\left(1+\alpha\xi+\gamma\right)}-\delta=-1+\frac{\beta\left(\gamma+\xi\right) }{\left(1+\alpha\xi+\gamma\right)}-\delta$,\\\\

$ \det(J)|_{(\gamma,0)}=\left[ 1-\frac{2\gamma}{\gamma}\right]\cdot\left[\frac{\beta\left(\gamma+\xi\right) }{\left(1+\alpha\xi+\gamma\right)}-\delta \right]=-\frac{\left[\beta\left(\gamma+\xi\right)-\delta\left(1+\alpha\xi+\gamma\right)\right] }{\left(1+\alpha\xi+\gamma\right)}.  $\\\\

Thus when $\xi> \frac{\delta}{\beta-\delta\alpha}$ then $-\frac{(\beta-\delta\alpha)\xi-\delta}{\beta-\delta}<0$, yielding a negative determinant, and the result.
\end{proof}

\subsubsection{\bf{Coexistence equilibrium point  $(x^*,y^*)$ }}

To find the coexistence equilibrium point(s), of \eqref{Eqn:1nn}, we set the predator and prey nullcline equal to yield, 
 
 \begin{equation*}
 (1-\frac{x}{\gamma})(1+\alpha \xi+x)=\frac{\beta (x+\xi)}{c\xi(1+\alpha \xi+x)}-\frac{\delta}{c\xi}. 
 \end{equation*}

After some algebra, this yields the cubic equation,
% \end{equation}
%   -c\xi x^3+\left[\gamma-2(1+\alpha\xi) \right]c\xi x^2+\left[(1+\alpha \xi)\left(-1-\alpha \xi +2\gamma \right)c\xi-(\beta-\delta)\gamma \right]x\\+\gamma\left[(1+\alpha \xi)^{2} c\xi-(\beta-\delta\alpha)\xi+\delta\right]=0
%    \end{equation}
%  
%We set 
% \end{equation}
% A=-c\xi,
% B=\left[\gamma-2(1+\alpha\xi) \right]c\xi,
% C=\left(1+\alpha \xi\right)\left(-1-\alpha \xi +2\gamma \right)c\xi-\left(\beta-\delta\
% \right)\gamma ,
% D=\gamma\left[\left(1+\alpha \xi \right)^{2} c\xi-\left(\beta-\delta\alpha\right)\xi+\delta\right]
%  \end{equation}
% 
% Now we have the standard form,$     Ax^3+Bx^2+Cx+D=0$.
% 
% 
% Putting prey and predator nullclines as zero to get a cubic equation for the interior equilibrium points.
%
%\begin{equation}
%\label{Eqn:1nn}
%(1-\frac{x}{\gamma}) - \frac{y}{1+\alpha \xi + x}&=& 0;   \newline
%\frac{\beta x}{1+\alpha \xi + x} + \frac{\beta \xi }{1+\alpha \xi + x} - \delta  - c\xi y &=& 0,
%\end{equation}
%
%The third order algebraic equation from the nullclines can have one, two or three roots.

\begin{align*}
 c \xi x^3 +[2c \xi (1+\alpha \xi) - c \xi \gamma]x^2 + [c\xi (1+\alpha \xi)^2 -2\gamma c\xi(1+\alpha \xi) + \gamma(\beta - \delta)]x \\
  +[\beta \gamma \xi - \gamma c \xi (1+\alpha \xi)^2 - \delta \gamma (1+\alpha \xi)]=0.
\end{align*}

\noindent The canonical form for the above is given by $x^3+ax^2+bx+d=0$, where \\
$a=2(1+\alpha \xi) -\gamma$,\\
$b=(1+\alpha \xi)^2 - 2\gamma (1+\alpha \xi) + \frac{\gamma}{c \xi}(\beta - \delta)$,\\
$d= \frac{\beta \gamma}{c} - \gamma (1+\alpha \xi)^2 - \frac{\delta \gamma}{c \xi}(1+\alpha \xi)$.

This third order algebraic equation derived from the nullclines can have one, two or three real roots.
Using the transformation $x=y-\frac{a}{3}$ we get the form to use Cardano's Formula $y^3+py+q=0$ where $p=b-\frac{a^3}{3}$ and $q=\frac{2a^3}{27}-\frac{ab}{3}+d$.

Simplifying for $p$ and $q$ we get,
$p= -\frac{(1+\alpha \xi)^3}{3} - \frac{2\gamma (1+\alpha \xi)}{3} + \frac{\gamma (\beta - \delta)}{c \xi}-\frac{\gamma^2}{3}$ and\\
$q=-\frac{2(1+\alpha \xi)^3}{27}+\frac{10\gamma(1+\alpha \xi)^2}{27}+\gamma(1+\alpha \xi)[-\frac{14\gamma}{27}-\frac{(2\beta+\delta)}{3c\xi}]+\frac{\gamma^2(\beta-\delta)}{3c\xi}-\frac{2\gamma^3}{27}+\frac{\beta \gamma}{c}$.\\

The discriminant is a function of all the parameters in the model where,
$\Delta=(\frac{q}{2})^2+(\frac{p}{3})^3$. Thus, the following roots can be obtained 
\begin{enumerate}
    \item[(i)] $x_1=A+B -\frac{a}{3}$
    \item[(ii)] $x_2=-\frac{A+B}{2} -\frac{a}{3} + \frac{i\sqrt{3}}{2}(A-B)$
    \item[(iii)] $x_3=-\frac{A+B}{2} -\frac{a}{3} - \frac{i\sqrt{3}}{2}(A-B)$
\end{enumerate}

where $A=\sqrt[3]{-\frac{q}{2}+\sqrt{\Delta}}$ and $B=\sqrt[3]{-\frac{q}{2}-\sqrt{\Delta}}$.

\begin{theorem}
 If $\beta(\gamma+\xi)\leq \delta(1+\alpha\xi)$, then the model \eqref{Eqn:1nn} does not have an interior equilibrium. If $\beta(\gamma+\xi) > \delta(1+\alpha\xi)$, then the model \eqref{Eqn:1nn} has at least one positive real equilibrium, or at most three positive equilibria. The region can be separated as
 \begin{enumerate}
     \item Suppose $\Delta > 0$, 
     \begin{enumerate}
         \item If $(A+B)>\frac{a}{3}$ and $q<0$, then we have a unique positive equilibrium $A_1=(x_1,y_1)$ where $x_1=A+B-\frac{a}{3}$.
         \item If $(A+B)>\frac{a}{3}$, $q>0$ and $p<0$, then we have a unique positive equilibrium  $A_1=(x_1,y_1)$ where $x_1=A+B-\frac{a}{3}$.
     \end{enumerate}
     \item Suppose $\Delta < 0$ and $p\neq 0$, then we have three unique equilibria.  
     \item Suppose $\Delta = 0$, 
     \begin{enumerate}
         \item If $q<0$ and $a>0$, then we have positive unique equilibria $A^*=(x^*,y^*)$ where $x^*=2A-\frac{a}{3}$.
        \item If $q>0$, $a<0$ and $A>\frac{a}{6}$, then we have two positive roots, a single root $A^*=(x^*,y^*)$ and a double root $A_{2,3}=(x_{2,3},y_{2,3})$  where $x^*=2A-\frac{a}{3}$ and $x_{2,3}=-A-\frac{a}{3}$.
        \item If $q<0$, $a<0$ and $-A<\frac{a}{3}$, then we have a positive unique equilibria $A^*=(x^*,y^*)$ where $x^*=2A-\frac{a}{3}$.
        \item If $q<0$, $a<0$ and $-A>\frac{a}{3}$, then we have two positive roots, a single root $A^*=(x^*,y^*)$ and a double root $A_{2,3}=(x_{2,3},y_{2,3})$  where $x^*=2A-\frac{a}{3}$ and $x_{2,3}=-A-\frac{a}{3}$.
        \item If $q=0$, then we have a unique equilibrium $A_1=(x_1,y_1)$ where $x_1=-a/3$.
     \end{enumerate}
     \end{enumerate}
\end{theorem}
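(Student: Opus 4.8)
The plan is to translate the existence question into locating the real roots of the cubic $P(x)=x^{3}+ax^{2}+bx+d$ inside the biologically admissible interval $(0,\gamma)$, and then to carry out a discriminant/Cardano case analysis. First I would record the correspondence underlying everything: since the prey nullcline forces $y^{*}=(1-x^{*}/\gamma)(1+\alpha\xi+x^{*})$, a pair $(x^{*},y^{*})$ is an interior equilibrium of \eqref{Eqn:1nn} if and only if $x^{*}$ is a root of $P$ with $0<x^{*}<\gamma$, and then $y^{*}>0$ automatically. I would also precompute the endpoint values $P(0)=d$ and, after simplifying the unnormalized cubic, $P(\gamma)=\frac{\gamma}{c\xi}\big(\beta(\gamma+\xi)-\delta(1+\alpha\xi+\gamma)\big)$, together with the elementary-symmetric identities, in particular $x_{1}+x_{2}+x_{3}=-a=\gamma-2(1+\alpha\xi)$; these facts control the sign changes of $P$ and the location of its roots.

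For the non-existence claim I would argue with nullclines rather than with $P$. Writing the scaled predator nullcline as $N(x)=\frac{1}{c\xi}\big(\frac{\beta(x+\xi)}{1+\alpha\xi+x}-\delta\big)$, one checks, using $\beta>\delta$, that $N(x)\le 0$ for all $x\in[0,\gamma]$ precisely when $\gamma(\beta-\delta)\le\delta(1+\alpha\xi)-\beta\xi$, i.e. when $\beta(\gamma+\xi)\le\delta(1+\alpha\xi+\gamma)$, which is implied by the hypothesis $\beta(\gamma+\xi)\le\delta(1+\alpha\xi)$. Since the prey nullcline is strictly positive on $(0,\gamma)$, the two curves cannot intersect there and no interior equilibrium exists. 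The ``at most three'' half of the complementary statement is immediate because $P$ is cubic; for ``at least one'' I would use the sign of $P(\gamma)$: the hypothesis $\beta(\gamma+\xi)>\delta(1+\alpha\xi)$ together with the standing large-food assumption $\xi>\delta/(\beta-\delta\alpha)$ (and $\beta>\delta$) forces $P(\gamma)>0$, so $P$ changes sign on $(0,\gamma)$ whenever $d=P(0)<0$, giving an odd, hence nonzero, number of roots there, while the remaining configurations are covered by the refined count.

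For the enumerated sub-cases I would pass to the depressed cubic $y^{3}+py+q=0$ through $x=y-a/3$ and split on the sign of $\Delta=(q/2)^{2}+(p/3)^{3}$: $\Delta>0$ gives exactly one real root $x_{1}=A+B-a/3$ (with $A,B$ real), $\Delta<0$ with $p\ne 0$ gives three distinct real roots (the \emph{casus irreducibilis}, where the trigonometric form of Cardano is the convenient one), and $\Delta=0$ gives a repeated root, the double root being $-A-a/3$ and the simple root $2A-a/3$. Within each regime, positivity of a given root is re-expressed through the stated inequalities among $a$, $p$, $q$ using Vieta's relations for the depressed cubic --- for instance when $\Delta>0$ the unique real root of $y^{3}+py+q$ has the sign of $-q$, which is the source of the dichotomy ``$q<0$'' versus ``$q>0,\ p<0$'', and in the $\Delta=0$ branches one compares $A$ with $a/6$ and $a/3$ to decide which of the double and simple roots are positive. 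Admissibility ($x^{*}<\gamma$) is then recovered from $P(\gamma)>0$ together with the sum-of-roots identity. Collecting the outcomes produces exactly cases (1)(a)--(b), (2), and (3)(a)--(e).

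The step I expect to be the genuine obstacle is this last reconciliation of ``positive root of $P$'' with ``admissible equilibrium'': a positive root of $P$ need not automatically lie in $(0,\gamma)$, so in several sub-cases positivity alone, as read off from Cardano's formulas, is not enough, and one must combine $P(\gamma)>0$, the constraint $x_{1}+x_{2}+x_{3}=\gamma-2(1+\alpha\xi)$, and the sign of $d$ to confine the roots. The $\Delta=0$ branches are the most delicate, since there the sign of the double root and of the simple root must be extracted simultaneously from the single quantity $A$ and the sign of $a$, and the surviving root(s) must still be checked against $x^{*}<\gamma$.
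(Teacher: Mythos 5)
Your proposal is correct and its core is the same route the paper intends: reduce the intersection of the nullclines to the cubic $x^{3}+ax^{2}+bx+d=0$, pass to the depressed form $y^{3}+py+q=0$, and read the root count off the sign of $\Delta$ via Cardano. The paper, however, records only this reduction and states the theorem with no proof at all, so the two places where you go beyond it are the substantive content. First, your nullcline lemma for the non-existence half is a genuinely different (and cleaner) argument than anything in the paper, and it in fact proves slightly more: since $\beta>\delta$ makes $x(\beta-\delta)$ increasing, $N(x)\le 0$ on $[0,\gamma]$ already under the weaker hypothesis $\beta(\gamma+\xi)\le\delta(1+\alpha\xi+\gamma)$, which is exactly the condition $P(\gamma)\le 0$ in your endpoint computation. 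Second, you correctly identify the issue the theorem glosses over, namely that an interior equilibrium needs $0<x^{*}<\gamma$ and not merely $x^{*}>0$; your evaluation $P(\gamma)=\frac{\gamma}{c\xi}\bigl(\beta(\gamma+\xi)-\delta(1+\alpha\xi+\gamma)\bigr)>0$ (which does follow from $\beta>\delta$ and $\xi>\delta/(\beta-\delta\alpha)$) combined with the sign of $d=P(0)$ is the right mechanism for confining roots to $(0,\gamma)$, and no such check appears in the paper. One caution worth making explicit: the ``at least one positive equilibrium'' clause is only guaranteed when $d<0$ (equivalently $c>c^{*}_{3}$, the condition under which the pest free state is a saddle); for $d>0$ the hypothesis $\beta(\gamma+\xi)>\delta(1+\alpha\xi)$ is automatically satisfied under the standing assumptions and yet zero interior equilibria can occur, consistent with Theorem \ref{thm:glopestfree}, so the qualified existence statement in your write-up is the defensible version of the theorem's claim.
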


%\noindent By using standard criteria for stability we obtain that a unique coexistence equilibrium point $(x^*, y^*)$ is said to be stable provided,
%
%
%\begin{equation}
%    \frac{x^*}{y^*\xi\left(1+\alpha\xi+x^*\right)}\left[ 1-\frac{1}{\gamma}\left(1+\alpha\xi+2x^*\right)\right]<c<\frac{\beta-\delta}{\xi\left(1+\alpha\xi+x^*\right)\left(2-\frac{1}{\gamma}\left(1+\alpha \xi+3x^*\right)\right)}
%\end{equation}
%
%and 
%
%\begin{equation}
%    \frac{(\beta -\delta \alpha)\xi-\delta}{\xi(1+\alpha \xi)^{2}}<c.
%\end{equation}

\section{Dynamics of the additional food model with competition}
We will consider the effect of this competition on the predator pest dynamics in both the small and large additional food regimes.

%\subsection{Pest Extinction State with $\xi > \frac{\delta}{\beta - \delta \alpha}$}
%
%
%
% The pest free state $(0, y^*) $ is given by, setting $x=0$, in \eqref{Eqn:1nn}, to yield,
% \begin{equation*}
%   y^*=\frac{1}{c\xi}\left(\frac{\beta \xi}{1+\alpha \xi}-\delta  \right) > 0,
%  \end{equation*}
% 
% since we restrict $\xi>\frac{\delta}{\beta-\delta \alpha}$. 

\subsection{Concave down predator nullcline}
Note, the prey nullcline maintains the same shape as earlier, however the predator nullcline could be concave up or down, depending on whether the pest extinction state $(0,y^*)$ is above the horizontal asymptote $y=\frac{\beta-\delta}{c\xi}$, of the predator nullcline, or below. That is, $y^* > \frac{\beta-\delta}{c\xi} \Leftrightarrow \xi>\frac{ 1}{1-\alpha}$, or
$y^* < \frac{\beta-\delta}{c\xi} \Leftrightarrow \xi < \frac{ 1}{1-\alpha}$. 
We consider the latter case first, and note first that the pest free state $(0, y^*) $ is given by, setting $x=0$, in \eqref{Eqn:1nn}, to yield,
 \begin{equation*}
   y^*=\frac{1}{c\xi}\left(\frac{\beta \xi}{1+\alpha \xi}-\delta  \right) > 0,
  \end{equation*}
 
 since we restrict $\xi>\frac{\delta}{\beta-\delta \alpha}$.

\subsubsection{Persistence}
\begin{definition}
The system \eqref{Eqn:1nn} is said to be persistent if,
\begin{equation}
\liminf_{t \rightarrow \infty} x(t) > 0, \ \liminf_{t \rightarrow \infty} y(t) > 0.
\end{equation}

\end{definition}

\begin{definition}
The system \eqref{Eqn:1nn} is said to be \emph{uniformly} persistent if there exists a positive number $\epsilon$ s.t.,

\begin{equation}
\liminf_{t \rightarrow \infty} x(t) \geq \epsilon > 0, \ \liminf_{t \rightarrow \infty} y(t) \geq \epsilon > 0.
\end{equation}

\end{definition}

We now state some persistence type results for the $y$ component.

\begin{lemma}
\label{lem:ylb}
Consider the predator-pest system described via \eqref{Eqn:1nn}, and assume the parametric restrictions, $\beta - \delta \alpha >0$,
$\xi > \frac{\delta}{\beta - \delta \alpha}$, then for all solutions $(x(t), y(t))$ initiating from any positive initial condition $(x_{0}, y_{0})$, $y$ is  persistent.
\end{lemma}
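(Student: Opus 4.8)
The plan is to bound the predator equation from below by a logistic ordinary differential equation whose nonzero equilibrium is strictly positive, and then to close the argument with a standard differential-inequality comparison. As in the proof of Theorem~\ref{thm:t1ug}, we may take $\beta > \delta$, and the standing hypothesis $\xi > \frac{\delta}{\beta - \delta\alpha}$ is equivalent to $\beta\xi - \delta(1+\alpha\xi) > 0$; both facts will be used below. Throughout, positivity and forward invariance of the first quadrant is used without comment, and boundedness of $(x(t),y(t))$ is supplied by Lemma~\ref{lem:ugp}.

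First I would record that the pest is eventually below (a hair above) its carrying capacity: from $\frac{dx}{dt} = x\left(1-\frac{x}{\gamma}\right) - \frac{xy}{1+\alpha\xi+x} \le x\left(1-\frac{x}{\gamma}\right)$ and positivity of $y$, comparison with the logistic equation gives $\limsup_{t\to\infty} x(t) \le \gamma$. Hence, fixing any $\epsilon > 0$, there exists $T = T(\epsilon,x_0,y_0)$ with $0 \le x(t) < \gamma + \epsilon$ for all $t \ge T$.

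Next I would rewrite the part of the predator growth rate that is linear in $y$ as a single fraction,
\begin{equation*}
\frac{\beta x + \beta\xi}{1+\alpha\xi+x} - \delta = \frac{(\beta-\delta)x + \bigl(\beta\xi - \delta(1+\alpha\xi)\bigr)}{1+\alpha\xi+x},
\end{equation*}
and observe that, since $\beta - \delta > 0$ and $\beta\xi - \delta(1+\alpha\xi) > 0$, the numerator is positive and nondecreasing in $x \ge 0$, while the denominator is at most $1+\alpha\xi+\gamma+\epsilon$ on $[0,\gamma+\epsilon]$. Combining this with $\frac{\beta x y}{1+\alpha\xi+x} \ge 0$ and $y > 0$, we get, for all $t \ge T$,
\begin{equation*}
\frac{dy}{dt} \;\ge\; \left(\frac{\beta\xi - \delta(1+\alpha\xi)}{1+\alpha\xi+\gamma+\epsilon}\right) y - c\xi y^2 \;=:\; m_\epsilon\, y - c\xi\, y^2, \qquad m_\epsilon > 0.
\end{equation*}

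Finally I would compare $y$ on $[T,\infty)$ with the solution $u$ of the logistic equation $\dot u = m_\epsilon u - c\xi u^2$ with $u(T) = y(T) > 0$: the comparison theorem gives $y(t) \ge u(t)$ for $t \ge T$, and $u(t) \to m_\epsilon/(c\xi)$ as $t \to \infty$ (the equilibrium $0$ is repelling, $m_\epsilon/(c\xi)$ attracting). Therefore $\liminf_{t\to\infty} y(t) \ge m_\epsilon/(c\xi) > 0$, which is the asserted persistence; since $m_\epsilon/(c\xi)$ does not depend on the initial data, this in fact gives uniform persistence of $y$. The only step requiring genuine care is the combined-fraction estimate: one must keep $\frac{\beta x}{1+\alpha\xi+x}$ together with $-\delta$ rather than discarding it, because $\frac{\beta\xi}{1+\alpha\xi+x} - \delta$ alone need not be positive once $x$ is large; after recognizing that the numerator is positive and monotone in $x$, the remainder is routine.
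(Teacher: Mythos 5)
Your proof is correct, but it takes a genuinely different route from the paper. The paper's argument for this lemma is purely qualitative: it invokes non-negativity of solutions together with the instability of the boundary equilibria $(0,0)$ and $(\gamma,0)$ (Lemmas \ref{lem:lems1s0} and \ref{lem:lems1}) and concludes $\liminf_{t\to\infty} y(t)>0$ in one line. Your argument is instead a constructive comparison: you use $\limsup x\le\gamma$, rewrite the per-capita growth rate as $\frac{(\beta-\delta)x+(\beta\xi-\delta(1+\alpha\xi))}{1+\alpha\xi+x}$, bound it below by the positive constant $m_\epsilon$ on the eventually-absorbing strip $0\le x<\gamma+\epsilon$, and compare with a logistic ODE. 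What your approach buys is substantial: it is self-contained (the paper's one-liner tacitly needs an argument that the $\omega$-limit set cannot meet the $x$-axis, which instability of the boundary equilibria alone does not supply), and it yields the explicit, initial-condition-independent bound $\liminf y\ge\frac{\beta\xi-\delta(1+\alpha\xi)}{c\xi(1+\alpha\xi+\gamma+\epsilon)}$, i.e.\ \emph{uniform} persistence for all positive initial data — strengthening the paper's second persistence lemma, which both restricts to $y_0\ge\frac{1}{c\xi}\left(\frac{\beta\xi}{1+\alpha\xi}-\delta\right)$ and relies on a monotonicity of $\frac{\beta(x+\xi)}{1+\alpha\xi+x}$ that in fact holds only when $\xi<\frac{1}{1-\alpha}$; your single-fraction estimate sidesteps that issue entirely, as you correctly flag. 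The only caveat is your standing use of $\beta>\delta$ (needed so the numerator is nondecreasing and minimized at $x=0$), but this matches the paper's own convention in Theorem \ref{thm:t1ug}.
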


\begin{proof}

Note the non-negativity of solutions, and the instability of the $(0,0)$ and the 
$(\gamma, 0)$ states, see Lemma \ref{lem:lems1s}, Lemma \ref{lem:lems1}, yield,

\begin{equation*}
\liminf_{t \rightarrow \infty} y(t)  > 0.
\end{equation*}

The result follows.
\end{proof}

\begin{lemma}
\label{lem:ylb}
Consider the predator-pest system described via \eqref{Eqn:1nn}, and assume the parametric restrictions, $\beta - \delta \alpha >0$,
$\xi > \frac{\delta}{\beta - \delta \alpha}$, then for all solutions $(x(t),y(t))$ initiating from any positive initial $x_{0}$, and $y_{0} > \frac{1}{c \xi} \left( \frac{\beta \xi}{1 + \alpha \xi} - \delta \right)$, $y$ is uniformly persistent.

\end{lemma}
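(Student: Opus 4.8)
The plan is to exploit the fact that, along any trajectory, the per‑capita growth rate of $y$ is bounded below by an \emph{autonomous} logistic expression, and then to invoke a scalar ODE comparison argument.

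First I would record positivity: since the coordinate axes are invariant for \eqref{Eqn:1nn}, any solution from a positive initial condition has $x(t)>0$ and $y(t)>0$ for all $t\ge 0$. Writing the predator equation as
\[
\frac{dy}{dt} = \bigl( h(x) - \delta - c\xi y \bigr)\, y, \qquad h(x) := \frac{\beta(x+\xi)}{1+\alpha\xi+x},
\]
I would compute $h'(x) = \dfrac{\beta\bigl(1-\xi(1-\alpha)\bigr)}{(1+\alpha\xi+x)^2} \ge 0$ precisely in the concave‑down regime $\xi < \frac{1}{1-\alpha}$ under consideration in this subsection. Hence $h$ is nondecreasing on $[0,\infty)$ and $h(x(t)) \ge h(0) = \frac{\beta\xi}{1+\alpha\xi}$ for all $t$, which yields the differential inequality
\[
\frac{dy}{dt} \ge \Bigl( \tfrac{\beta\xi}{1+\alpha\xi} - \delta - c\xi y \Bigr)\, y \;=\; c\xi\, y\,\bigl(y^* - y\bigr), \qquad y^* := \frac{1}{c\xi}\Bigl(\frac{\beta\xi}{1+\alpha\xi} - \delta\Bigr),
\]
where $y^*>0$ is exactly the pest‑free equilibrium value, positive by the hypothesis $\xi > \frac{\delta}{\beta-\delta\alpha}$.

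Next I would let $z(t)$ solve the logistic ODE $\dot z = c\xi\, z(y^*-z)$ with $z(0)=y_0$. Since its right‑hand side is locally Lipschitz and $y(\cdot)$ is a supersolution, the comparison theorem gives $y(t) \ge z(t)$ for all $t \ge 0$. Because $y_0 > y^*$, uniqueness of solutions for this scalar autonomous equation forbids $z$ from crossing the equilibrium $y^*$, so $z$ decreases monotonically with $z(t) \downarrow y^*$; in particular $z(t) > y^*$ for all $t$. Therefore $y(t) \ge z(t) > y^*$, so $\liminf_{t\to\infty} y(t) \ge y^*$, and uniform persistence of $y$ holds with $\epsilon = y^* = \frac{1}{c\xi}\bigl(\frac{\beta\xi}{1+\alpha\xi}-\delta\bigr)$, a value that does not depend on the particular trajectory.

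The argument is essentially routine; the two points needing care are (a) the sign of $h'$, which legitimizes the bound $h(x)\ge h(0)$ and is where the concave‑down hypothesis $\xi<\frac{1}{1-\alpha}$ enters — outside that regime one would instead use $\limsup_{t\to\infty} x(t)\le\gamma$ to bound $h$ below by $h(\gamma)$, at the cost of a slightly smaller $\epsilon$ and an eventual‑time statement — and (b) confirming that the logistic comparison solution stays strictly above its carrying capacity $y^*$, which is immediate from uniqueness for the scalar ODE. I do not anticipate a genuine obstacle.
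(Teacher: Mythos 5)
Your proof is correct and follows essentially the same route as the paper's: bound the predator's per-capita growth rate below by its value at $x=0$, i.e.\ $h(x(t))\ge h(0)=\frac{\beta\xi}{1+\alpha\xi}$, and then compare with the logistic ODE $\dot z = c\xi z(y^{*}-z)$ to conclude $\liminf_{t\to\infty} y(t)\ge y^{*}$. Your explicit computation that $h'(x)=\frac{\beta(1-\xi(1-\alpha))}{(1+\alpha\xi+x)^{2}}\ge 0$ only when $\xi\le\frac{1}{1-\alpha}$ is a worthwhile refinement: the paper invokes ``the monotone increasing property of the functional response on $x\in[0,\infty)$'' without noting that this holds only in the concave-down regime of the surrounding subsection, a hypothesis not restated in the lemma itself.
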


\begin{proof}
We have from the equation for $y$, positivity of solutions, and the boundedness of the pest $x$ by its carrying capacity $\gamma$ that,
\begin{equation*}
 \frac{dy}{dt} = \frac{\beta xy}{1+\alpha \xi + x} + \frac{\beta \xi y}{1+\alpha \xi + x} - \delta y -c\xi y^{2}  
>   \left( \frac{\beta \xi }{1+\alpha \xi } - \delta \right) y -c\xi y^{2}. 
 \end{equation*}

This follows from the monotone increasing property of the functional response, $ \frac{\beta x + \beta \xi}{1+\alpha \xi + x}$ on $x \in [0,\infty)$. Now comparison with a logistic ODE yields 

\begin{equation*}
y \geq \frac{1}{c \xi} \left( \frac{\beta \xi}{1 + \alpha \xi} - \delta \right),
\end{equation*}

the lower boundedness of $y$, given that we start above this bound initially, so if $y_{0} \geq \frac{1}{c \xi} \left( \frac{\beta \xi}{1 + \alpha \xi} - \delta \right)$. Thus we have that for any $\epsilon$ chosen s.t.,
$0 < \epsilon < \frac{1}{c \xi} \left( \frac{\beta \xi}{1 + \alpha \xi} - \delta \right)$,

\begin{equation*}
\liminf_{t \rightarrow \infty} y(t) \geq \frac{1}{c \xi} \left( \frac{\beta \xi}{1 + \alpha \xi} - \delta \right) > \epsilon > 0.
\end{equation*}

This proves the lemma.
\end{proof}

\subsubsection*{Bi-stability}

Since the persistence of the $y$ component has been established we classify the possible dynamics into various cases depending mainly on the $\xi-c$ parameter space.
We state the following theorem,

\begin{theorem}
\label{thm:1o3}
Consider the parametric restriction $ \frac{\delta}{\beta -\delta \alpha} < \xi < \frac{ 1}{1-\alpha}$. Suppose that,

\begin{equation*}
     c > c^{*}_{3} = \frac{(\beta -\delta \alpha)\xi-\delta}{\xi(1+\alpha \xi)^2},
\end{equation*}

then $(0,y^*)$ exists as a saddle, with one or three interior equilibrium. If

\begin{equation*}
   \frac{4\gamma(\beta-\delta)}{\xi \left(1+\alpha \xi+\gamma \right)^2} = c^{*}_{2} < c < c^{*}_{3} = \frac{(\beta -\delta \alpha)\xi-\delta}{\xi(1+\alpha \xi)^2}.
\end{equation*}

 Then $(0,y^*)$ is locally stable, and there are two interior equilibrium, in which case bi-stability is a possibility. 
\end{theorem}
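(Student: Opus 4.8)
The plan is to analyze the two regimes separately, in each case combining the local stability classification of the pest-free state $(0,y^*)$ with a count of interior equilibria obtained from the cubic in Section~\ref{sec:app}. For the first regime, $c > c^*_3 = \frac{(\beta-\delta\alpha)\xi-\delta}{\xi(1+\alpha\xi)^2}$, Lemma~\ref{lem:lems1s} immediately gives that $(0,y^*)$ is a saddle, since $c > c^*_3$ is precisely the stated saddle condition there. What remains is to show that one or three interior equilibria exist. Here I would argue via the nullcline geometry: the prey nullcline $y = (1-\frac{x}{\gamma})(1+\alpha\xi+x)$ is a downward parabola starting at $(0, 1+\alpha\xi)$ and crossing zero at $x=\gamma$, while the predator nullcline $y = \frac{1}{c\xi}\left(\frac{\beta(x+\xi)}{1+\alpha\xi+x}-\delta\right)$ is monotone increasing (concave down, since $\xi < \frac{1}{1-\alpha}$ puts $y^*$ below the horizontal asymptote $\frac{\beta-\delta}{c\xi}$) and starts at $(0,y^*)$. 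Since $(0,y^*)$ is a saddle, the predator nullcline lies \emph{below} the prey nullcline at $x=0$ (that is, $y^* < 1+\alpha\xi$, which is exactly the determinant-sign computation in Lemma~\ref{lem:lems1s}); because the prey nullcline eventually dips below the predator nullcline (the prey nullcline hits $0$ at $x=\gamma$ while the predator nullcline is still positive there, using $\beta(\gamma+\xi) > \delta(1+\alpha\xi)$ which follows from $\xi > \frac{\delta}{\beta-\delta\alpha}$), the two curves must cross an odd number of times in $(0,\gamma)$, hence once or three times. This parity/intermediate-value argument, together with the degree-three bound from the cubic, gives exactly one or three interior equilibria.

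For the second regime, $c^*_2 = \frac{4\gamma(\beta-\delta)}{\xi(1+\alpha\xi+\gamma)^2} < c < c^*_3$, the local stability of $(0,y^*)$ is again immediate from Lemma~\ref{lem:lems1s}, since $c < c^*_3$ is the stable-node condition. The substantive claim is that there are exactly two interior equilibria. Now $(0,y^*)$ stable means the predator nullcline starts \emph{above} the prey nullcline at $x=0$ (i.e. $y^* > 1+\alpha\xi$). Since again the prey nullcline drops to $0$ at $x=\gamma$ while the predator nullcline stays positive, the predator nullcline must cross \emph{below} the prey nullcline — but starting above and ending below forces an \emph{odd} number of crossings, which seems to contradict "two." The resolution, and the role of the lower threshold $c^*_2$: for the predator nullcline to dip below the prey nullcline and come back up (creating two crossings before it finally passes below near $x=\gamma$ — wait, that would be three). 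The correct picture is that $c^*_2$ is exactly the tangency/discriminant value at which the predator nullcline is tangent to the prey parabola near its vertex; I would verify that $c^*_2$ arises as the condition $\Delta = 0$ (or equivalently that $c^*_2$ corresponds to the prey nullcline's maximum value $\frac{(1+\alpha\xi+\gamma)^2}{4\gamma}$ being matched appropriately by the predator nullcline's slope), so that for $c$ slightly above $c^*_2$ two transversal interior intersections persist, and as $c \to c^*_3^-$ these two merge with or approach the boundary equilibrium $(0,y^*)$ in a transcritical event. Concretely I would compute the discriminant $\Delta = (q/2)^2 + (p/3)^3$ at $c = c^*_2$ and show it vanishes (or changes sign there), and check that on $(c^*_2, c^*_3)$ exactly two of the three real roots of the cubic are positive while the third is negative — using the signs of $a$, $b$, $d$ (Descartes' rule / the explicit root formulas in the preceding theorem) evaluated against these thresholds.

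The main obstacle I anticipate is the second regime's equilibrium count: establishing \emph{exactly two} (not zero, not one, not three) positive roots requires pinning down how the threshold $c^*_2$ enters — most cleanly by identifying it as the saddle-node/tangency value where two interior equilibria are born, which means verifying $\Delta = 0$ at $c = c^*_2$ by direct substitution into the formulas for $p$ and $q$. That algebra is messy; a cleaner route is purely geometric: show the prey-nullcline parabola attains maximum $\frac{(1+\alpha\xi+\gamma)^2}{4\gamma}$ at $x = \frac{\gamma - 1 - \alpha\xi}{2}$, show the increasing predator nullcline passes through $(0, y^*)$ with $y^* > 1+\alpha\xi$ (stability) yet must cross $0 < y$(prey) somewhere in $(\,\cdot\,,\gamma)$, and count intersections by monotonicity of the predator nullcline against the single-humped prey nullcline — a monotone increasing curve meets a concave parabola in at most two points, and the sign conditions at $x=0$ and $x=\gamma$ force it to be exactly two when $(0,y^*)$ is stable and one (or three, if we instead track the full cubic) when it is a saddle. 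I would lean on this monotone-versus-unimodal intersection lemma to avoid the Cardano bookkeeping entirely, invoking the cubic only to cap the count at three. Finally, the phrase "bi-stability is a possibility" I would justify by noting that with $(0,y^*)$ locally stable and two interior equilibria, the standard index/Poincaré–Bendixson argument forces one interior equilibrium to be a saddle and the other to be either a stable node/focus or surrounded by a stable limit cycle, so that two attractors ($(0,y^*)$ and the interior attractor) coexist for suitable parameters.
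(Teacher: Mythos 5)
Your final argument is essentially the paper's: both proofs rest on the nullcline geometry, comparing the prey nullcline's maximum $\frac{(1+\alpha\xi+\gamma)^2}{4\gamma}$ against the predator nullcline's horizontal asymptote $\frac{\beta-\delta}{c\xi}$ (which is what $c>c^*_2$ encodes) and the values at $x=0$ (which is what $c \lessgtr c^*_3$, i.e.\ $y^*\gtrless 1+\alpha\xi$, encodes), then counting intersections of a monotone increasing curve against the unimodal parabola. Two remarks. First, your parity argument in the first regime (predator nullcline below the prey nullcline at $x=0$, above it at $x=\gamma$, hence an odd number of crossings capped at three by the cubic) actually supplies the ``one or three'' count that the paper asserts but does not really prove, so that is a useful addition; note only that the correct positivity condition at $x=\gamma$ is $\beta(\gamma+\xi)>\delta(1+\alpha\xi+\gamma)$, which still follows from $\xi>\frac{\delta}{\beta-\delta\alpha}$ together with $\beta>\delta$. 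Second, your aside proposing to verify $\Delta=0$ at $c=c^*_2$ is wrong and you were right to abandon it: $c^*_2$ is not the saddle-node/tangency locus but merely the threshold at which the prey maximum exceeds the predator nullcline's asymptote, a \emph{sufficient} condition for the parabola to poke above the (everywhere sub-asymptotic) predator nullcline and hence for exactly two transversal crossings; the actual discriminant zero occurs at a different, parameter-dependent value of $c$. Also, your intermediate sentence ``starting above and ending below forces an odd number of crossings'' is garbled --- in the second regime the predator nullcline is above the prey nullcline at both $x=0$ and $x=\gamma$, so parity gives zero or two, and $c>c^*_2$ is exactly what rules out zero. Your closing ``cleaner route'' paragraph states the correct argument and matches the paper.
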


\begin{proof}

We can find the maximum of the prey nullcline by taking the derivative of the prey nullcline, and setting it to zero,
$\frac{dy(x)}{dx}=1- \frac{(1+\alpha \xi)}{\gamma}-\frac{2x}{\gamma} = 0 \Leftrightarrow  x=\frac{1}{2}\left[\gamma- 1-\alpha\xi\right]$, and plugging this value of $x$ into the prey nullcline, yields the maximum of the prey nullcline  $y_{max}(x)=\left(1-\frac{\gamma- 1-\alpha\xi}{2\gamma}\right) \left(1+\alpha \xi+\frac{\gamma- 1-\alpha\xi}{2}\right)=\frac{\left(1+\alpha \xi+\gamma\right)^2}{4\gamma}$. Note the horizontal asymptote of predator nullcline is given by,

$\lim_{x \to\infty} \frac{\beta (1+\frac{\xi}{x})}{c\xi\left[\frac{(1+\alpha \xi)}{x}+1\right]} -\frac{-\delta}{c\xi}= \frac{\beta-\delta}{c\xi}$.

If we enforce that the maximum of the prey nullcline is above the horizontal asymptote of the predator nullcline, while  $y^*>1+\alpha\xi$, then this yields two interior equilibrium, with a pest free stable state. That is  $\frac{\left(1+\alpha \xi+\gamma\right)^2}{4\gamma}>\frac{\beta-\delta}{c\xi}$. Equivalently, this gives, 

\begin{equation*}
   \frac{4\gamma(\beta-\delta)}{\xi \left(1+\alpha \xi+\gamma \right)^2} < c < \frac{(\beta -\delta \alpha)\xi-\delta}{\xi(1+\alpha \xi)^2}.
\end{equation*}

Now if $y^{*} < 1+\alpha \xi$, we have that it is a saddle via lemma \ref{lem:lems1s}. This happens if,

\begin{equation*}
    c > \frac{4\gamma(\beta-\delta)}{\xi\left(1+\alpha \xi+\gamma\right)^2}.
\end{equation*}

\end{proof}

\begin{theorem}\label{thm:gloint}
Consider the parametric restriction $ \frac{\delta}{\beta -\delta \alpha} < \xi < \frac{ 1}{1-\alpha}$. One can choose $c$ sufficiently large, that is $c > \frac{1}{ 2 \xi (1+\alpha \xi) }$, s.t. there exists only one interior equilibrium, which is globally attracting.  

\end{theorem}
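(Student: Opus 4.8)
The plan is to show, for $c$ in the stated range, that (i) there is a unique interior equilibrium, obtained as the unique transversal intersection of the two nullclines; (ii) this equilibrium is automatically a sink; (iii) the flow has no periodic orbits, via a Dulac function whose divergence becomes negative precisely when $c>\frac{1}{2\xi(1+\alpha\xi)}$; and (iv) by the Poincar\'e--Bendixson theorem, together with boundedness of all orbits (Lemma~\ref{lem:ugp}) and the fact that the coordinate axes repel the open quadrant (Lemmas~\ref{lem:lems1s0}, \ref{lem:lems1s}, \ref{lem:lems1}), the sink attracts every positive orbit.

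For the first two points I would argue with the nullclines directly. The prey nullcline is $y=\phi(x):=(1-x/\gamma)(1+\alpha\xi+x)$, a downward parabola with $\phi(0)=1+\alpha\xi$, $\phi(\gamma)=0$ and apex $\phi(x_v)=\frac{(1+\alpha\xi+\gamma)^2}{4\gamma}$ (which is $\ge 1+\alpha\xi$ by AM--GM) at $x_v=\tfrac12(\gamma-1-\alpha\xi)$. In the regime $\xi<\frac{1}{1-\alpha}$ the predator nullcline is $y=\psi(x):=\frac{\beta-\delta}{c\xi}-\frac{\beta(1+\alpha\xi-\xi)}{c\xi(1+\alpha\xi+x)}$, which is strictly increasing, concave down, with $\psi(0)=y^*$ and $0<\psi(x)<\frac{\beta-\delta}{c\xi}$ on $[0,\infty)$; interior equilibria are the roots of $\phi-\psi$ in $(0,\gamma)$, since $\phi\le 0<\psi$ for $x\ge\gamma$. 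Once $c$ is large enough that $\frac{\beta-\delta}{c\xi}<1+\alpha\xi$ (which also forces $c>c^{*}_{3}$, so $(0,y^*)$ is a saddle by Lemma~\ref{lem:lems1s}), one has $\psi<1+\alpha\xi\le\phi$ on $[0,\max\{x_v,0\}]$, while on $[\max\{x_v,0\},\gamma]$ the function $\phi-\psi$ is strictly decreasing ($\phi$ decreasing, $\psi$ increasing) from a positive value down to $-\psi(\gamma)<0$; hence there is a unique interior equilibrium $(x^*,y^*)$, and it satisfies $x^*>x_v$. Feeding the equilibrium identities $1-x^*/\gamma=\frac{y^*}{1+\alpha\xi+x^*}$ and $\frac{\beta(x^*+\xi)}{1+\alpha\xi+x^*}-\delta=c\xi y^*$ into \eqref{Eqn:JacobianMain} yields $J_{22}=-c\xi y^*<0$, $J_{11}=x^*\big(\frac{1-x^*/\gamma}{1+\alpha\xi+x^*}-\frac1\gamma\big)<0$ (the inequality being exactly $x^*>x_v$), and $J_{12}<0<J_{21}$ (using $\xi<\frac{1}{1-\alpha}$); so $\Tr(J)<0$ and $\det(J)>0$, and the equilibrium is a sink with no further computation. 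Alternatively, clearing denominators turns the equilibrium condition into a cubic converging, as $c\to\infty$, to $(x-\gamma)(x+1+\alpha\xi)^2$, whose only positive root is $\gamma$; the negative double root perturbs into a pair that stays off the positive axis, giving the same count.

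The substantive step is ruling out periodic orbits. I would take the Dulac function $B(x,y)=\frac{1}{xy}$ on the simply connected open first quadrant; with $P,Q$ the right-hand sides of \eqref{Eqn:1nn}, a direct computation gives
\[
\frac{\partial (BP)}{\partial x}+\frac{\partial (BQ)}{\partial y}
= -\frac{1}{\gamma y}+\frac{1}{(1+\alpha\xi+x)^2}-\frac{c\xi}{x}.
\]
Since $(1+\alpha\xi+x)^2\ge 2(1+\alpha\xi)x$, the middle term is at most $\frac{1}{2(1+\alpha\xi)x}$, so the whole expression is bounded above by $-\frac{1}{\gamma y}+\frac{1}{x}\big(\frac{1}{2(1+\alpha\xi)}-c\xi\big)$, which is strictly negative on the open quadrant exactly when $c>\frac{1}{2\xi(1+\alpha\xi)}$ --- the hypothesis of the theorem. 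By the Bendixson--Dulac criterion there is then no periodic orbit (nor homoclinic loop) in the open quadrant. This is the step where the largeness of $c$ is used in an essential way: it is precisely the additional-food-induced competition that destroys the limit cycles of the underlying Rosenzweig--MacArthur picture and upgrades local to global stability.

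Finally I would close the argument. Fix any positive initial datum; by Lemma~\ref{lem:ugp} its orbit is bounded, so its $\omega$-limit set is a nonempty compact invariant set, which by the previous paragraph is not a periodic orbit, and which cannot be a boundary polycycle either: along each axis the flow runs away from the source $(0,0)$ (toward $(\gamma,0)$, resp.\ $(0,y^*)$), so no cyclic chain of boundary orbits closes up, while the only interior equilibrium is the sink $(x^*,y^*)$, which cannot lie on a polycycle; moreover, by positivity, an interior orbit never meets the axes and so cannot converge to any boundary equilibrium. Poincar\'e--Bendixson then forces the $\omega$-limit set to equal $\{(x^*,y^*)\}$, and since the datum was arbitrary, $(x^*,y^*)$ is globally attracting. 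The main obstacle, in my estimation, is two-fold: obtaining the \emph{uniqueness} of the interior equilibrium (the concave-down predator nullcline, pushed down by large $c$, must clear the hump of the prey nullcline and meet it exactly once) and choosing the Dulac function so that its divergence turns negative at exactly the threshold $\frac{1}{2\xi(1+\alpha\xi)}$; everything else is standard bookkeeping built on the lemmas already in place.
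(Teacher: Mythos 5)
Your proof is correct and follows essentially the same route as the paper: the identical Dulac function $\frac{1}{xy}$, the identical estimate $(1+\alpha\xi+x)^2 \ge 2(1+\alpha\xi)x$ producing the threshold $c > \frac{1}{2\xi(1+\alpha\xi)}$, and the same closing step via boundedness, the non-attracting boundary equilibria, and Poincar\'e--Bendixson. Your nullcline argument for uniqueness and the Jacobian verification that the interior equilibrium is a sink are details the paper leaves implicit (it simply asserts a ``solitary interior equilibrium''), and your observation that eliminating extra interior equilibria may additionally require $c > \frac{\beta-\delta}{\xi(1+\alpha\xi)}$ is a fair refinement of the stated threshold.
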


\begin{proof}

We first apply the Dulac criterion to \eqref{Eqn:1nn}. Consider the Auxilliary function. $\phi(x,y) = \frac{1}{x y}$, 

\begin{eqnarray}
&& \nabla \cdot (\phi(x,y) \frac{dx}{dt}, \phi(x,y)\frac{dy}{dt}) \nonumber \\
&=&\frac{ \partial}{\partial x} \left(\frac{1}{x y} \left( x(1-\frac{x}{\gamma}) - \frac{xy}{1+\alpha \xi + x}\right)\right)  \nonumber \\
 &+&  \frac{ \partial}{\partial y}\left(\frac{1}{x y}  \left( \frac{\beta xy}{1+\alpha \xi + x} + \frac{\beta \xi y}{1+\alpha \xi + x} - \delta y - c\xi y^{2}\right)  \right)    \nonumber \\
&=& -\frac{1}{\gamma y} + \frac{1}{(1+\alpha \xi + x)^{2}} - \frac{c \xi}{x} \nonumber \\
 & \leq &  \frac{1}{(1+\alpha \xi)^{2} + 2 (1+\alpha \xi)x + x^{2}} - \frac{c \xi}{x} \nonumber \\
 &\leq& \frac{1}{ 2 (1+\alpha \xi)x } - \frac{c \xi}{x} \nonumber \\
 &=& \frac{1}{x}\left( \frac{1}{2 (1+\alpha \xi)} - c \xi \right)  < 0 \nonumber \\
 \end{eqnarray}
 
\noindent as long as, $\frac{1}{ 2 (1+\alpha \xi) }- c \xi < 0$, or $c > \frac{1}{ 2 \xi (1+\alpha \xi) }$. This precludes the existence of limit cycles via the Dulac criterion.
Note, the extinction equilibrium $(0,0)$ is unstable, see section \ref{sec:app} the predator free equilibrium $(\gamma, 0)$ is a saddle, see lemma \ref{lem:lems1}, and the pest free equilibrium $(0, y^{*})$ is also a saddle under the parametric restrictions assumed, see lemma \ref{lem:lems1}. The stable manifolds of these are the predator and pest axis respectively, and so their $\omega-$limit sets cannot intersect in $\mathbb{R}^{2}_{+}$. Thus the solitary interior equilibrium is globally attracting.
\end{proof}

We now give sufficient conditions in terms of explicit parametric restrictions for global stability of $(0,y^*)$
\begin{theorem}\label{thm:glopestfree}

Consider the parametric restriction given by 
\begin{equation}
c < c^{*}_{1} = \min \left( \frac{4\gamma(\beta-\delta)}{\xi\left(1+\alpha \xi+\gamma\right)^2} ,  \frac{-\delta+(\beta -\delta \alpha)\xi}{\xi(1+\alpha \xi)^{2}} ,   \frac{4 \beta}{\xi} \left(\frac{1 + \alpha \xi - \xi}{1 + \alpha \xi + \gamma}\right) \left( \frac{\gamma}{\gamma -(1 + \alpha \xi )}\right) \right),
  \end{equation}

  then $(0,y^*)$ is globally stable, with no interior equilibrium.
 % \frac{\beta+c_1-\delta}{c\xi} 
\end{theorem}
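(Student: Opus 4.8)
The plan is to rule out interior equilibria under the first two bounds in the minimum, and then to upgrade local stability of $(0,y^*)$ (already provided by Lemma \ref{lem:lems1s} once $c < \frac{-\delta+(\beta-\delta\alpha)\xi}{\xi(1+\alpha\xi)^2}$, which forces $y^* > 1+\alpha\xi$) to global stability by a Lyapunov/Dulac argument together with Poincar\'e--Bendixson. First I would argue nonexistence of an interior equilibrium: an interior equilibrium requires the maximum of the prey nullcline $y_{\max} = \frac{(1+\alpha\xi+\gamma)^2}{4\gamma}$ to meet the predator nullcline in the positive quadrant. As in the proof of Theorem \ref{thm:1o3}, if $c < \frac{4\gamma(\beta-\delta)}{\xi(1+\alpha\xi+\gamma)^2}$ then the horizontal asymptote $\frac{\beta-\delta}{c\xi}$ of the (concave-down, since $\xi < \frac{1}{1-\alpha}$) predator nullcline lies strictly above $y_{\max}$, so the predator nullcline stays above the prey nullcline for all $x \in [0,\gamma]$ and the two curves cannot intersect in the interior. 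Hence the cubic has no root in $(0,\gamma)$.

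Next I would establish that no periodic orbits exist. The third term in the minimum, $\frac{4\beta}{\xi}\left(\frac{1+\alpha\xi-\xi}{1+\alpha\xi+\gamma}\right)\left(\frac{\gamma}{\gamma-(1+\alpha\xi)}\right)$, is the one I expect to come from a Dulac function of the form $\phi(x,y) = x^{m-1} y^{k}$ (or, more simply, $\phi = 1/(xy)$ as in Theorem \ref{thm:gloint}, but that gave a different constant, so a weighted version is needed here). I would compute $\nabla\cdot(\phi\, \dot x, \phi\, \dot y)$, collect the sign-indefinite terms, bound $x \le \gamma$, $1+\alpha\xi+x \le 1+\alpha\xi+\gamma$, and use $y^* = \frac{1}{c\xi}(\frac{\beta\xi}{1+\alpha\xi}-\delta)$ together with the trapping region (solutions are bounded, Lemma \ref{lem:ugp}, and $y$ is persistent by Lemma \ref{lem:ylb}) to show the divergence is of one sign on the relevant invariant region precisely when $c$ is below that third threshold. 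This kills limit cycles.

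Finally I would assemble the global picture on $\mathbb{R}^2_{+}$: $(0,0)$ is a source (Lemma \ref{lem:lems1s0}), $(\gamma,0)$ is a saddle with the $x$-axis as its stable manifold (Lemma \ref{lem:lems1}), $(0,y^*)$ is a stable node with the $y$-axis as stable manifold (Lemma \ref{lem:lems1s}, using $c < \frac{-\delta+(\beta-\delta\alpha)\xi}{\xi(1+\alpha\xi)^2}$), and there is no interior equilibrium and no periodic orbit. Every solution is forward-bounded, so by Poincar\'e--Bendixson its $\omega$-limit set is one of the equilibria or a cycle; cycles and the interior point are excluded, the boundary equilibria other than $(0,y^*)$ are non-attracting from the open quadrant (their stable manifolds lie on the axes), and the source repels, leaving $(0,y^*)$ as the only possible $\omega$-limit point. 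Hence $(0,y^*)$ is globally attracting, and being locally stable it is globally asymptotically stable. The main obstacle I anticipate is getting the Dulac computation to yield \emph{exactly} the stated third bound --- identifying the correct weight exponents in $\phi$ and tracking the $\frac{1+\alpha\xi-\xi}{1+\alpha\xi+\gamma}$ and $\frac{\gamma}{\gamma-(1+\alpha\xi)}$ factors will require care, and one must also check $\gamma > 1+\alpha\xi$ (which follows from $\xi < \frac{1}{1-\alpha}$ only together with an assumption like $\gamma \ge 1$, so that hypothesis may need to be invoked or noted).
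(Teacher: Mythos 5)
There is a genuine gap in your treatment of the nonexistence of interior equilibria, and it stems from a misreading of the geometry in the concave-down case. You claim that $c < \frac{4\gamma(\beta-\delta)}{\xi(1+\alpha\xi+\gamma)^2}$ forces the predator nullcline above the prey nullcline because the horizontal asymptote $\frac{\beta-\delta}{c\xi}$ then exceeds $y_{\max}$. But when $\xi < \frac{1}{1-\alpha}$ the predator nullcline $g(x)$ is \emph{increasing and concave down}, so it lies strictly \emph{below} its asymptote and approaches it from beneath; the asymptote sitting above $y_{\max}$ therefore says nothing about whether $g$ itself clears the prey nullcline $f$. (The asymptote argument does work in the concave-up case, where $g$ decreases toward the asymptote from above --- that is the later lemma for $\xi > \frac{1}{1-\alpha}$ --- but not here.) Indeed, the bistability regime of Theorem \ref{thm:1o3} shows that $g(0)=y^*>f(0)$ is compatible with two interior intersections. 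The paper closes this gap with exactly the third bound you tried to assign to a Dulac function: it is a slope-comparison condition, requiring $g(0)\ge f(0)$ together with $\min g'(x) > \max f'(x)$ on $[0,\tfrac{1}{2}(\gamma-1-\alpha\xi)]$, where the minimum of $g'$ is at the right endpoint (giving the factor $\frac{4\beta(1+\alpha\xi-\xi)}{c\xi(1+\alpha\xi+\gamma)^2}$-type expression) and the maximum of $f'$ is $f'(0)=\frac{\gamma-(1+\alpha\xi)}{\gamma}$. That is what prevents $f$ from overtaking $g$ on the rising part of the parabola; past the parabola's vertex $f$ decreases while $g$ increases, so no intersection can occur. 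Your anticipated Dulac computation would not reproduce this constant, because it is not a divergence bound at all.

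Two further remarks. First, once nonexistence of interior equilibria is correctly established, you do not need a Dulac function to exclude periodic orbits: any closed orbit in the open positive quadrant must enclose an equilibrium of index $+1$, and there are none, so cycles are excluded for free and your Poincar\'e--Bendixson assembly then goes through. Second, your final paragraph (bounded orbits, classification of $\omega$-limit sets, non-attracting boundary saddles) is actually \emph{more} complete than the paper's own proof, which ends mid-sentence with ``We next show there can be no periodic orbits'' and never does; and your closing caveat about needing $\gamma > 1+\alpha\xi$ is a legitimate observation --- the third bound is negative or undefined otherwise, so that hypothesis is implicitly required.
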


\begin{proof}
We first require that $y^{*} > 1 + \alpha\xi$, else $(0,y^*)$ is a saddle. This is true if

\begin{equation}
\label{eq:1lk}
c <  \left( \frac{-\delta+(\beta -\delta \alpha)\xi}{\xi(1+\alpha \xi)^{2}} \right).
  \end{equation}
  
  We also require that the maximum of the prey nullcline is below the horizontal asymptote of predator nullcline. This happens if,
 
  \begin{equation}
  \label{eq:2lk}
 c < \frac{4\gamma(\beta - \delta)}{\xi\left(1+\alpha \xi+\gamma\right)^2}. 
    \end{equation}
 
 Next, if there are to be no intersections between the predator and prey nullclines, we require that the predator nullcline $g(x)$ remain higher than the prey nullcline $f(x)$ on the interval $x \in [0, \frac{1}{2}(\gamma -1 - \alpha \xi)]$. A sufficient condition for this is if $g(0) \geq f(0)$ and $\min\left( g^{'}(x)  \right) > \max \left( f^{'}(x)  \right)$ for $x \in [0, \frac{1}{2}(\gamma -1 - \alpha \xi)]$, since the minimum of $g^{'}(x)$ occurs at $x=\frac{1}{2}(\gamma -1 - \alpha \xi)$, and the maximum of 
  $f^{'}(x)$ is at $x=0$, computing the above yields,
  
  \begin{equation}
  \label{eq:3lk}
c < \frac{4 \beta}{\xi} \left(\frac{1 + \alpha \xi - \xi}{1 + \alpha \xi + \gamma}\right) \left( \frac{\gamma}{\gamma -(1 + \alpha \xi )}\right).
  \end{equation}
  
  Taking the minimum of the bounds for $c$ in \eqref{eq:1lk}, \eqref{eq:2lk} and \eqref{eq:3lk} yields that there is no interior equilibrium. Thus there exists $(0,0)$ which is unstable, and $(\gamma, 0)$ which is a saddle with its stable manifold as the x-axis (prey axis). This precludes the existence of homoclinic and heteroclinic connections. We next show there can be no periodic orbits.

\end{proof}

\subsection{Concave up predator nullcline}

We now turn to the case where the predator nullcline is concave up, that is when $\xi > \frac{1}{1-\alpha}$.

\begin{lemma}
 Consider the parametric restriction $\max\left(\frac{1}{1-\alpha}, \frac{\delta}{\beta -\delta \alpha}\right) < \xi$. Suppose that,

\begin{equation*}
 \frac{4\gamma(\beta-\delta)}{\xi\left(1+\alpha \xi+\gamma\right)^2} < c< \frac{(\beta -\delta \alpha)\xi-\delta}{\xi(1+\alpha \xi)^2}.
  \end{equation*}

Then there exists two interior equilibrium, with $(0,y^*)$ being locally stable. In this case bi-stability is a possibility. If

\begin{equation*}
 c > \frac{(\beta -\delta \alpha)\xi-\delta}{\xi(1+\alpha \xi)^2},
  \end{equation*}
then $(0,y^*)$ is a saddle, and there exists one interior equilibrium.

\end{lemma}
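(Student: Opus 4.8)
The plan is to mirror the argument used for the concave-down case (Theorem \ref{thm:1o3}), exploiting the fact that the only structural change when $\xi > \frac{1}{1-\alpha}$ is that the pest-free state $(0,y^*)$ now lies \emph{above} the horizontal asymptote $y=\frac{\beta-\delta}{c\xi}$ of the predator nullcline, so the predator nullcline $g(x)$ is concave up rather than concave down. First I would recall the explicit formulas: the prey nullcline $f(x) = \left(1-\frac{x}{\gamma}\right)(1+\alpha\xi+x)$ is a downward parabola with $f(0) = 1+\alpha\xi$ and maximum $y_{\max} = \frac{(1+\alpha\xi+\gamma)^2}{4\gamma}$ attained at $x = \frac{1}{2}(\gamma-1-\alpha\xi)$; the predator nullcline is $g(x) = \frac{\beta(x+\xi)}{c\xi(1+\alpha\xi+x)} - \frac{\delta}{c\xi}$ with $g(0) = y^* = \frac{1}{c\xi}\left(\frac{\beta\xi}{1+\alpha\xi}-\delta\right)$ and horizontal asymptote $\frac{\beta-\delta}{c\xi}$. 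Counting interior equilibria amounts to counting intersections of these two curves in the open first quadrant.

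The key steps, in order: (1) Apply Lemma \ref{lem:lems1s} verbatim — it does not use the concavity of $g$, only the sign of $y^* - (1+\alpha\xi)$ — to conclude that $(0,y^*)$ is a stable node precisely when $c < \frac{(\beta-\delta\alpha)\xi-\delta}{\xi(1+\alpha\xi)^2}$ and a saddle when the reverse inequality holds; this immediately gives the local stability claims in both halves of the statement. (2) For the first regime $c^*_2 < c < c^*_3$: since $c < c^*_3$ we have $y^* > 1+\alpha\xi = f(0)$, so $g(0) > f(0)$, meaning the predator nullcline starts above the prey nullcline at $x=0$; and since $c > c^*_2 = \frac{4\gamma(\beta-\delta)}{\xi(1+\alpha\xi+\gamma)^2}$ we have $\frac{\beta-\delta}{c\xi} < y_{\max}$, i.e. the prey nullcline's peak rises above the predator nullcline's asymptote. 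Because $g$ is increasing and concave up and bounded above by its asymptote while $f$ rises to a peak exceeding that asymptote and then falls to $0$ at $x=\gamma$, a continuity/intermediate-value argument forces the curves to cross exactly twice on $(0,\gamma)$ — once on the rising branch of $f$ (where $f$ overtakes $g$) and once on the falling branch (where $f$ drops back below $g$, since $f(\gamma)=0 < g(\gamma)$). Hence two interior equilibria, and together with the locally stable $(0,y^*)$ and the known instability of $(0,0)$ and saddle nature of $(\gamma,0)$, bi-stability is possible. (3) For the second regime $c > c^*_3$: now $y^* < 1+\alpha\xi$, so $g(0) < f(0)$; the prey nullcline starts above, must return to $0$ at $x=\gamma$ while $g(\gamma)>0$, so by the intermediate value theorem there is at least one crossing; a convexity/monotonicity count (comparing the concave-up $g$ against the concave-down $f$) rules out a second crossing under this parameter restriction, giving exactly one interior equilibrium, which by Lemma \ref{lem:lems1s} coexists with the saddle $(0,y^*)$.

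The main obstacle I expect is step (2)'s and step (3)'s \emph{exact} equilibrium count — establishing "exactly two" (resp. "exactly one") rather than merely "at least one." The intermediate value theorem readily supplies lower bounds on the number of intersections, but ruling out extra intersections requires controlling the relative curvature: one wants to show $f - g$ changes sign the minimal number of times, e.g. by arguing that $f'' < 0 < g''$ on the relevant interval so that $f-g$ is strictly concave there and hence has at most two zeros, then pinning down which configuration occurs from the boundary data $f(0)$ vs $g(0)$, $y_{\max}$ vs $\frac{\beta-\delta}{c\xi}$, and $f(\gamma)=0 < g(\gamma)$. Care is also needed to verify the relevant portion of the prey nullcline's peak actually sits at positive $x$ (i.e. $\gamma > 1+\alpha\xi$, which follows from $\xi < \frac{\gamma-1}{\alpha}$-type considerations, or must be assumed), and to confirm all intersection points have positive $y$-coordinate so they genuinely lie in the interior. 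These are the delicate points; the stability assertions themselves are immediate from the already-proved lemmas.
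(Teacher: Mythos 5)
Your overall strategy — Lemma \ref{lem:lems1s} for the stability of $(0,y^*)$ plus a geometric count of intersections of the prey nullcline $f(x)=\left(1-\frac{x}{\gamma}\right)(1+\alpha\xi+x)$ with the predator nullcline $g(x)=\frac{\beta(x+\xi)}{c\xi(1+\alpha\xi+x)}-\frac{\delta}{c\xi}$ — is exactly the paper's (the paper's own proof is a two-line version of the same argument). However, there is a concrete error in your description of the geometry, and it sits precisely at the feature that distinguishes this lemma from Theorem \ref{thm:1o3}. You assert that in this regime $g$ is ``increasing and concave up and bounded above by its asymptote.'' Computing $g'(x)=\frac{\beta(1+\alpha\xi-\xi)}{c\xi(1+\alpha\xi+x)^2}$, the hypothesis $\xi>\frac{1}{1-\alpha}$ gives $1+\alpha\xi-\xi<0$, so $g$ is \emph{decreasing} and convex, approaching its horizontal asymptote $\frac{\beta-\delta}{c\xi}$ from \emph{above}; indeed an increasing convex function cannot approach a horizontal asymptote from below, so your stated picture is internally inconsistent. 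Consequently your intermediate-value argument in step (2) (``$f$ overtakes $g$ on the rising branch because $g$ is trapped below its asymptote while $f$ rises past it'') does not go through as written: $g(x)>\frac{\beta-\delta}{c\xi}$ for all $x$, so the hypothesis $c>c_2^*$ (asymptote below $y_{\max}$) does not by itself force $f$ to exceed $g$ anywhere. What does force a crossing is the fact that $g$ is decreasing, so $g$ evaluated at the peak of $f$ is below $g(0)=y^*$; one then needs $y^*<y_{\max}$ (or an equivalent condition) to conclude $f>g$ at the peak. This is in fact the condition the paper's proof invokes ($1+\alpha\xi<y^*<y_{\max}$), and it is not literally the same as the stated bound $c>c_2^*$ — a looseness you inherit rather than resolve.

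On the positive side, the curvature device you flag as a potential obstacle actually closes the exact-count question cleanly here and is worth carrying out rather than merely anticipating: since $f''=-\frac{2}{\gamma}<0$ and $g''=\frac{-2\beta(1+\alpha\xi-\xi)}{c\xi(1+\alpha\xi+x)^3}>0$ under $\xi>\frac{1}{1-\alpha}$, the difference $f-g$ is strictly concave on $[0,\gamma]$, hence has at most two zeros and its positivity set is an interval. In regime (2) one has $f(0)-g(0)<0$, $f(\gamma)-g(\gamma)<0$, and (granting the peak inequality above) positivity somewhere in between, giving exactly two interior equilibria; in regime (3) one has $f(0)-g(0)>0$ and $f(\gamma)-g(\gamma)<0$, giving exactly one. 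So the repair is short, but it requires first correcting the monotonicity and asymptote orientation of $g$, without which the case analysis is built on the wrong picture.
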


\begin{proof}
The result follows by enforcing $ 1+\alpha \xi < y^{*} < y_{max}$, where $y_{max}$ is the maximum value of the prey nullcline.
Also, if  $y^{*} < 1+\alpha \xi$, this enforces $(0,y^*)$ to be a saddle via lemma \ref{lem:lems1s}.
\end{proof}

We provide a result of sufficient conditions in terms of explicit parametric restrictions for global stability of $(0,y^*)$.

\begin{lemma}
 Consider the parametric restriction $\max\left(\frac{1}{1-\alpha}, \frac{\delta}{\beta -\delta \alpha}\right) < \xi$. Suppose that,

\begin{equation}
  c < \frac{4\gamma(\beta-\delta)}{\xi\left(1+\alpha \xi+\gamma\right)^2}. 
  \end{equation}

Then $(0,y^*)$ is globally stable.
\end{lemma}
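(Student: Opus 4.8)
The plan is to show that under the hypotheses $\xi > \max\!\big(\tfrac{1}{1-\alpha},\tfrac{\delta}{\beta-\delta\alpha}\big)$ and $c < \tfrac{4\gamma(\beta-\delta)}{\xi(1+\alpha\xi+\gamma)^2}$ (which is the threshold $c^{*}_{2}$ of Theorem \ref{thm:1o3}), the pest-free state $(0,y^{*})$ with $y^{*}=\tfrac{1}{c\xi}\big(\tfrac{\beta\xi}{1+\alpha\xi}-\delta\big)>0$ is both locally stable and the $\omega$-limit set of every trajectory starting in the open first quadrant, hence globally asymptotically stable there. First I would pin down local stability. By Lemma \ref{lem:lems1s}, $(0,y^{*})$ is a stable node exactly when $c<c^{*}_{3}=\tfrac{(\beta-\delta\alpha)\xi-\delta}{\xi(1+\alpha\xi)^2}$, i.e. when $y^{*}>1+\alpha\xi$, so it suffices to verify $c^{*}_{2}\le c^{*}_{3}$ in this regime. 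The bound $c<c^{*}_{2}$ places the maximum $y_{\max}=\tfrac{(1+\alpha\xi+\gamma)^2}{4\gamma}$ of the prey nullcline strictly below the horizontal asymptote $\tfrac{\beta-\delta}{c\xi}$ of the predator nullcline; in the concave-up regime $\xi>\tfrac{1}{1-\alpha}$ that asymptote lies below $y^{*}$, so $y^{*}>y_{\max}\ge 1+\alpha\xi$, the last inequality being the AM--GM estimate $(1+\alpha\xi+\gamma)^2\ge 4\gamma(1+\alpha\xi)$. Hence $c<c^{*}_{3}$ and $(0,y^{*})$ is a stable node.

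Second, I would show that, unlike in Theorem \ref{thm:glopestfree}, a single inequality suffices to kill all interior equilibria here. Writing $u=1+\alpha\xi$, the predator nullcline is $g(x)=\tfrac{1}{c\xi}\big(\beta-\delta+\tfrac{\beta(\xi-u)}{x+u}\big)$, and $\xi>\tfrac{1}{1-\alpha}$ gives $\xi-u>0$, so $g(x)>\tfrac{\beta-\delta}{c\xi}$ for every $x\ge 0$. The prey nullcline $f(x)=(1-\tfrac{x}{\gamma})(1+\alpha\xi+x)$ satisfies $f(x)\le y_{\max}<\tfrac{\beta-\delta}{c\xi}$ for all $x$ by the choice of $c$, and $f(x)<0$ for $x>\gamma$. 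Therefore $f(x)<g(x)$ on all of $x\ge 0$: the nullclines never meet in the first quadrant, and $(0,0)$, $(\gamma,0)$, $(0,y^{*})$ are the only equilibria.

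Third I would close the argument with Poincar\'e--Bendixson. By Lemma \ref{lem:ugp} all solutions are bounded, so each $\omega$-limit set of a trajectory in the positively invariant open first quadrant $\mathbb{R}^2_{+}$ is a nonempty compact connected invariant set, hence an equilibrium, a periodic orbit, or a cycle graph of equilibria and connecting orbits. A periodic orbit would be a simple closed curve in the simply connected set $\mathbb{R}^2_{+}$, so its interior would lie in $\mathbb{R}^2_{+}$ and would have to contain an equilibrium of total index $+1$; but there is no equilibrium in the open first quadrant, a contradiction. (The Dulac function of Theorem \ref{thm:gloint} is unavailable here, since it needs the conflicting bound $c>\tfrac{1}{2\xi(1+\alpha\xi)}$; the index argument replaces it.) A cycle graph is likewise impossible: it would require a saddle with both a stable and an unstable separatrix inside $\mathbb{R}^2_{+}$, but the only saddle $(\gamma,0)$ (Lemma \ref{lem:lems1}) has the entire positive $x$-axis as its stable manifold, while $(0,y^{*})$ is a node and $(0,0)$ a source (Lemma \ref{lem:lems1s0}), so no homoclinic or heteroclinic loop sits in $\mathbb{R}^2_{+}$. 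Thus every $\omega$-limit set is a single equilibrium; since $(0,0)$ repels and the stable set of $(\gamma,0)$ is the $x$-axis, it must be $(0,y^{*})$. Combined with the local stability established above, $(0,y^{*})$ is globally asymptotically stable on $\mathbb{R}^2_{+}$.

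The main obstacle is not any individual estimate but assembling the global picture correctly: checking $c^{*}_{2}\le c^{*}_{3}$ so that ``no interior equilibrium'' is compatible with ``$(0,y^{*})$ stable'' rather than ``$(0,y^{*})$ a saddle'', and, since the Dulac criterion fails throughout this parameter window, carrying out the Poincar\'e--Bendixson/index bookkeeping that rules out periodic orbits and connecting cycles.
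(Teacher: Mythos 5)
Your proof is correct, and its central geometric step --- the hypothesis on $c$ is exactly the statement that the horizontal asymptote $\frac{\beta-\delta}{c\xi}$ of the predator nullcline sits above $y_{\max}=\frac{(1+\alpha\xi+\gamma)^2}{4\gamma}$, while concavity ($\xi>1+\alpha\xi$ reversed, i.e. $\xi - (1+\alpha\xi)>0$) forces $g(x)$ to stay strictly above its asymptote, so the nullclines cannot meet --- is precisely the paper's entire one-line proof of this lemma. Where you genuinely go beyond the paper is in finishing the argument: (i) you verify that ``no interior equilibrium'' lands you in the stable-node rather than the saddle case of Lemma \ref{lem:lems1s}, via the clean chain $y^{*}=g(0)>\frac{\beta-\delta}{c\xi}>y_{\max}\ge 1+\alpha\xi$ (the last step by AM--GM), which amounts to checking $c^{*}_{2}\le c^{*}_{3}$ in this regime; and (ii) you actually rule out periodic orbits and cycle graphs, which the paper's proof never addresses and which its Dulac argument (Theorem \ref{thm:gloint}) cannot supply here because that requires the incompatible lower bound $c>\frac{1}{2\xi(1+\alpha\xi)}$. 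Your replacement --- boundedness from Lemma \ref{lem:ugp}, Poincar\'e--Bendixson, and the index-$(+1)$ obstruction given the absence of interior equilibria, plus the observation that the only saddle $(\gamma,0)$ has its stable manifold confined to the $x$-axis --- is exactly the bookkeeping needed to upgrade ``no interior equilibrium and $(0,y^{*})$ locally stable'' to genuine global stability. In short: same organizing idea, but your write-up is a complete proof where the paper's is a sketch.
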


\begin{proof}
The result follows by enforcing the horizontal asymptote of the predator nullcline is above $y_{max}$, the maximum value of the prey nullcline.
\end{proof}

\begin{figure}[!htb]
\begin{center}
%\subfigure[]{
    \includegraphics[scale=.457]{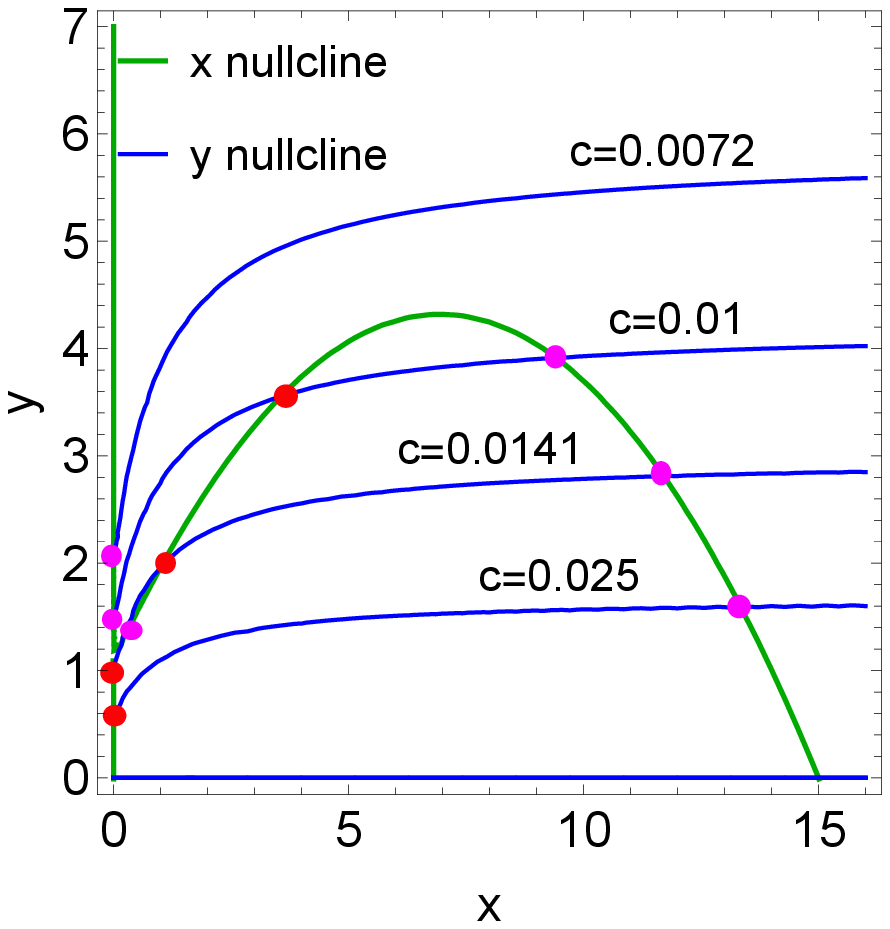}
%\subfigure[]{    
    \includegraphics[scale=.467]{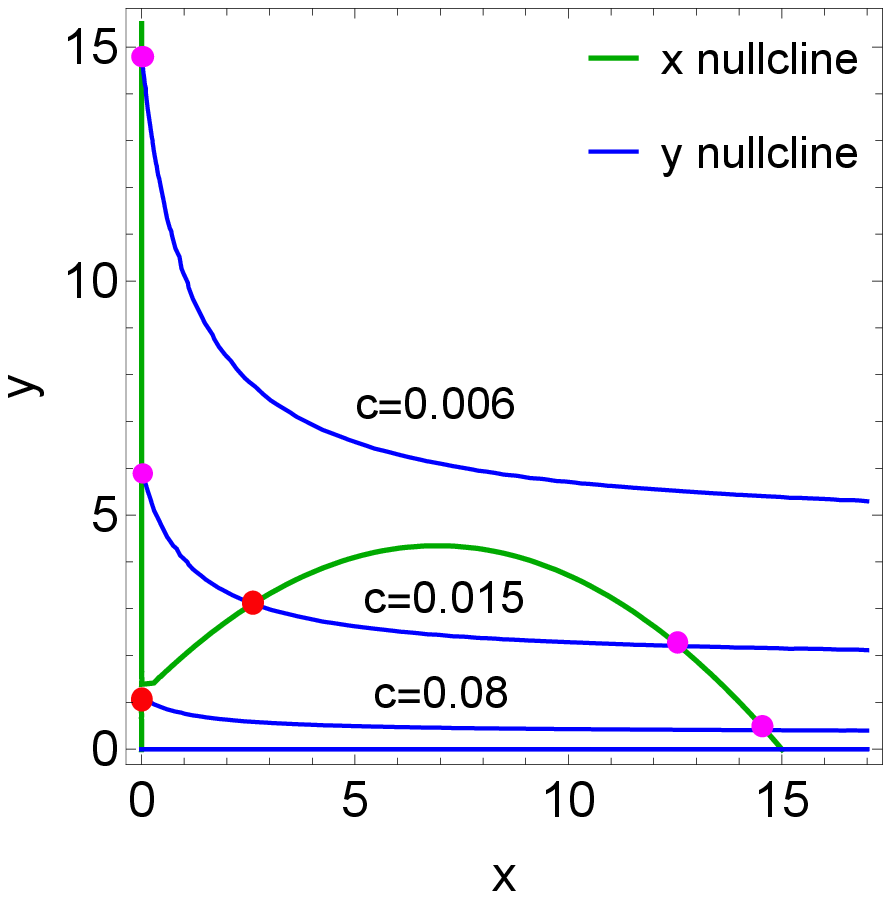}
%\subfigure[]{    
    \includegraphics[scale=.467]{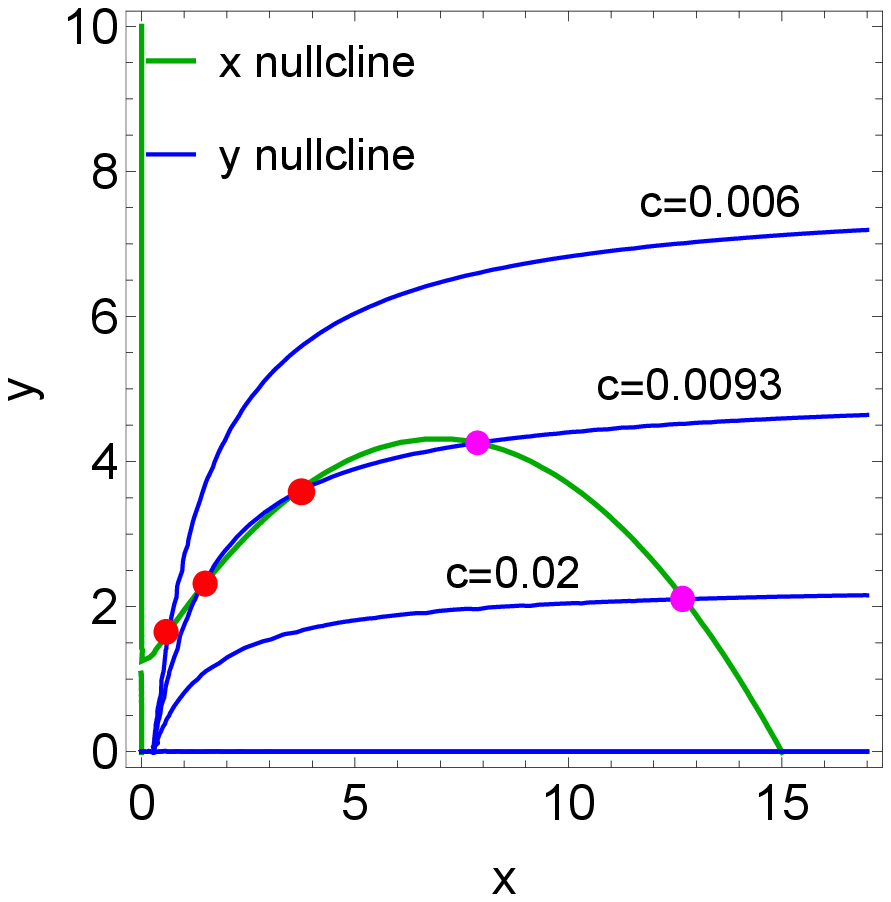}    
\end{center}
\caption{(a) (or left figure) Concave down predator nullcline for $\xi=1$ and varying $c$. (b) (or middle figure) Concave up predator nullcline for $\xi=1.5$ and varying c. (c) (or right figure) Biologically unfeasible pest extinction state for $\xi=0.9$ varying $c$. All other parameters are given by $\gamma=15, \alpha=0.1, \beta=0.3, \delta=0.258$. Solid magenta and red circles denote stable and unstable equilibrium points (pest free or interior) respectively.}
\label{fig:Nullclines}
\end{figure}

\subsubsection{Pest Extinction State with $\xi < \frac{\delta}{\beta - \delta \alpha}$}

We investigate the pest extinction state $(0, y^{*})$, in \eqref{Eqn:1nn}
under the parametric restriction $\xi < \frac{\delta}{\beta - \delta \alpha}$. This restriction enforces the predator nullcline to intersect the prey nullcline in the positive quadrant when $c=0$. In this case we know pest eradication is not achievable for any quantity of additional food, $\xi \in \left[0, \frac{\delta}{\beta - \delta \alpha}  \right)$. In that a positive interior equilibrium always occurs, and increasing the quantity of additional food can only lower the pest density - but the pest ultimately remains in the ecosystem. We state the following Lemma,

\begin{lemma}
\label{lem:l1p}
Consider the predator-pest system described via \eqref{Eqn:1nn}, and assume the parametric restriction 
$\xi < \frac{\delta}{\beta - \delta \alpha}$, then the pest extinction state $(0,y^{*})$, is biologically unfeasible.
\end{lemma}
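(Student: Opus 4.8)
The plan is to exhibit the pest extinction equilibrium explicitly and show its predator coordinate is negative under the hypothesis, so that $(0,y^*)\notin\mathbb{R}^2_{+}$. First I would set $x=0$ in the second equation of \eqref{Eqn:1nn} and divide by $y$ (legitimate for a nontrivial equilibrium), obtaining $\frac{\beta\xi}{1+\alpha\xi}-\delta-c\xi y^{*}=0$, hence
\begin{equation*}
y^{*}=\frac{1}{c\xi}\left(\frac{\beta\xi}{1+\alpha\xi}-\delta\right)=\frac{(\beta-\delta\alpha)\xi-\delta}{c\xi(1+\alpha\xi)},
\end{equation*}
which is exactly the expression already used in Lemma \ref{lem:lems1s}. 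Since $c>0$, $\xi>0$ and $1+\alpha\xi>0$, the sign of $y^{*}$ is governed entirely by the sign of the numerator $(\beta-\delta\alpha)\xi-\delta$.

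Next I would invoke the standing ``high quality additional food'' convention $\beta-\delta\alpha>0$, so that the stated restriction $\xi<\frac{\delta}{\beta-\delta\alpha}$ multiplies through to give $(\beta-\delta\alpha)\xi<\delta$, i.e. $(\beta-\delta\alpha)\xi-\delta<0$, and therefore $y^{*}<0$. A negative value for a population density is biologically meaningless, so the equilibrium $(0,y^{*})$ fails to lie in the first quadrant and is biologically unfeasible, which is the assertion. For completeness I would remark that this is consistent with the geometry of Fig.\ \ref{fig:Nullclines}(c): with $\xi<\frac{\delta}{\beta-\delta\alpha}$ the vertical predator nullcline sits strictly to the right of the predator axis, so the intersection of the predator nullcline with the line $x=0$ occurs below the $x$-axis.

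I do not anticipate a genuine obstacle here; the only point requiring a moment's care is ensuring the sign condition $\beta-\delta\alpha>0$ is in force, since it is what preserves the direction of the inequality when clearing the denominator $\beta-\delta\alpha$ from $\xi<\frac{\delta}{\beta-\delta\alpha}$. That condition is part of the blanket hypotheses carried throughout the ``type II'' analysis (high quality additional food), so no separate case analysis is needed.
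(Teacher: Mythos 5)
Your proposal is correct and is essentially the paper's own argument: set $x=0$ in the predator equation, solve for $y^{*}=\frac{1}{c\xi}\left(\frac{\beta\xi}{1+\alpha\xi}-\delta\right)$, and observe that $\xi<\frac{\delta}{\beta-\delta\alpha}$ (with $\beta-\delta\alpha>0$) forces this to be negative. The paper phrases the sign condition directly as $\frac{\beta\xi}{1+\alpha\xi}-\delta<0$ rather than clearing the denominator, but the computation is identical.
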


\begin{proof}

The proof follows by solving for equilibrium when $x=0$, which yields

\begin{equation}
0 = \left(\frac{\beta \xi }{1+\alpha \xi } - \delta \right) y - c \xi y^{2}. 
\end{equation}

Since $\xi < \frac{\delta}{\beta - \delta \alpha} \Leftrightarrow \left( \frac{\beta \xi }{1+\alpha \xi } - \delta \right) < 0$,
and so $y^{*} = \frac{\left( \frac{\beta \xi }{1+\alpha \xi } - \delta \right) }{c \xi } < 0$. The result follows.

\end{proof}

\section{Bifurcations}

A goal of bifurcation theory is to classify qualitative changes in equilibrium as one or more control parameters are varied. When one traces the locus of these in two parameter space, the space is separated into regions of different qualitative behavior. In this section, we will provide numerical guidelines for co-dimension one and two bifurcations observed for \eqref{Eqn:1nn}. We have used the MATCONT package \cite{G05} in MATLAB R2022a and Mathematica 13 to produce the bifurcation plots.

\subsection{Co-dimension one bifurcations}

\subsubsection{Hopf and saddle-node bifurcation}
We first explore numerically the possibility of the occurrence of Hopf and saddle-node bifurcations around the interior equilibrium point $(x^*,y^*)$ with respect to varying the parameter $c$. These are clearly observed in Fig. \ref{fig:Hopf-SN-RegimesA} when the parameter $c$ crosses some critical thresholds. We delineate the bifurcation plot in Fig. \ref{fig:Hopf-SN-RegimesA}(a) into four regions i.e. $R_1, R_2, R_3$ and $R_4$. There are no interior equilibrium point in $R_1$ and the pest free $(0,y^*)$ is globally asymptotically stable, see proof in Theorem \ref{thm:glopestfree}. In $R_2$, we observe a saddle-node bifurcation at $c=c_{sn}=0.008836$ when $(x^*,y^*)=(6.51033,4.30728)$ and a subcritical Hopf bifurcation at  $c=c_{h}=0.008837$ when $(x^*,y^*)=(6.61665,4.31276)$. Two contrasting equilibrium points are observed in $R_3$. Additionally, in $R_4$, there exists three interior equilibrium points or a unique interior equilibrium point as $c$ is varied.  Also, we observed a saddle-node bifurcation at $c=c_{sn}=0.01507$ when $(x^*,y^*)=(0.41111,1.46969)$ and a subcritical Hopf bifurcation at $c=c_{h}=0.01359$ when $(x^*,y^*)=(0.01836,1.11699)$. For $c$ greater than the saddle-node point in $R_4$, the unique interior equilibrium point is globally asymptotically stable, see Theorem \ref{thm:gloint} for detailed proof.

\subsubsection{Transcritical bifurcation}

\begin{theorem}\label{thm:transcritical}
Consider \eqref{Eqn:1nn} when $\xi > \frac{1}{1 - \alpha}$, with $ \frac{4\gamma(\beta-\delta)}{\xi\left(1+\alpha \xi+\gamma\right)^2} < c< \frac{(\beta -\delta \alpha)\xi-\delta}{\xi(1+\alpha \xi)^2}$. Then  $c \nearrow  \frac{(\beta -\delta \alpha)\xi-\delta}{\xi(1+\alpha \xi)^2}$, results in a transcritical bifurcation, at $c= c^{*} = \frac{(\beta -\delta \alpha)\xi-\delta}{\xi(1+\alpha \xi)^2}$, in which the pest free state $(0,y^{*})$ changes stability, from a stable node to a saddle.
\end{theorem}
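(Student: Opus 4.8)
The plan is to show that the transition at $c=c^{*}$ has the three defining features of a transcritical bifurcation: the pest free equilibrium $(0,y^{*})$ persists for all $c$ near $c^{*}$; exactly one eigenvalue of its Jacobian crosses zero transversally at $c^{*}$ while the other stays negative; and a branch of interior equilibria collides with $(0,y^{*})$ at $c^{*}$ and exchanges stability with it. Since $(0,y^{*})$ does not disappear at $c^{*}$, a saddle-node is excluded outright, so the only competitors are the transcritical and the pitchfork, and the pitchfork will be ruled out by checking that the quadratic coefficient in the reduced one-dimensional flow is nonzero.

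The steps, in order. (1) Record that $(0,y^{*}(c))$ with $y^{*}(c)=\frac{1}{c\xi}\big(\frac{\beta\xi}{1+\alpha\xi}-\delta\big)$ is an equilibrium of \eqref{Eqn:1nn} for every $c>0$, lying in the open positive quadrant because $\xi>\frac{1}{1-\alpha}$ forces $\xi>\frac{\delta}{\beta-\delta\alpha}$ (using $\beta>\delta$). (2) From \eqref{Eqn:JacobianMain}, the Jacobian at $(0,y^{*}(c))$ is lower triangular because $J_{12}=0$ when $x^{*}=0$; its eigenvalues are $\lambda_{1}(c)=1-\frac{y^{*}(c)}{1+\alpha\xi}$ and $\lambda_{2}(c)=-c\xi y^{*}(c)<0$. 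As $y^{*}(c)$ is strictly decreasing, $\lambda_{1}$ is strictly increasing, $\lambda_{1}(c^{*})=0$ and $\lambda_{1}'(c^{*})>0$; this recovers the stable-node/saddle dichotomy of Lemma \ref{lem:lems1s} and supplies the transversal zero crossing. (3) Locate the colliding branch: an interior equilibrium with $x\in(0,\gamma)$ is exactly a zero of $\Phi(x,c):=\frac{\beta(x+\xi)}{1+\alpha\xi+x}-\delta-c\xi\big(1-\frac{x}{\gamma}\big)(1+\alpha\xi+x)$, and $\Phi(0,c)=0$ precisely when $c=c^{*}$; since $\partial_{x}\Phi(0,c^{*})\neq 0$ under the hypotheses, the implicit function theorem gives a $C^{1}$ branch $c\mapsto x_{\mathrm{int}}(c)$ with $x_{\mathrm{int}}(c^{*})=0$ and $x_{\mathrm{int}}'(c^{*})<0$, so this interior equilibrium is biologically admissible for $c$ just below $c^{*}$ and leaves the positive quadrant for $c$ just above, while its ordinate $\big(1-\frac{x_{\mathrm{int}}}{\gamma}\big)(1+\alpha\xi+x_{\mathrm{int}})$ tends to $1+\alpha\xi$, i.e.\ it merges with $(0,y^{*})$ at $(0,1+\alpha\xi)$. (4) Reduce to the center manifold at $(0,1+\alpha\xi)$ for $c=c^{*}$: the center subspace is spanned by $v=(J_{22},-J_{21})$ with $J_{21}<0$, $J_{22}<0$ (so $v$ is transverse to the predator axis, which is the stable subspace), the center manifold is a graph $y=y^{*}(c)+h_{0}x+O(x^{2})$ with slope $h_{0}=-J_{21}/J_{22}<0$, and substituting into $\dot{x}=x\big(1-\frac{x}{\gamma}-\frac{y}{1+\alpha\xi+x}\big)$ yields the reduced flow $\dot{u}=\mu(c)\,u+b\,u^{2}+O(u^{3})$ with $\mu(c)=\lambda_{1}(c)$ and $b=-\frac{1}{\gamma}+\frac{1-h_{0}}{1+\alpha\xi}>-\frac{1}{\gamma}+\frac{1}{1+\alpha\xi}>0$, the last step using $\gamma>1+\alpha\xi$. (5) Conclude: $\mu(c^{*})=0$, $\mu'(c^{*})\neq 0$, $b\neq 0$ is exactly the transcritical normal form, so the fixed point $u=0$ (that is, $(0,y^{*})$) and the branch $u=-\mu(c)/b+\cdots$ (that is, $x_{\mathrm{int}}(c)$) cross and swap stability at $c=c^{*}$, whence $(0,y^{*})$ passes from stable node to saddle.

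The main obstacle is computing the quadratic coefficient $b$ correctly, i.e.\ certifying that the two branches cross transversally and genuinely exchange stability rather than undergoing a pitchfork or a more degenerate collision. The delicate point, and the reason one cannot merely invoke Sotomayor's transcritical criterion, is that the parameter $c$ enters only the predator equation while the left null vector of the Jacobian at the bifurcation point is $w=(1,0)^{\top}$; consequently both Sotomayor quantities $w^{\top}f_{c}$ and $w^{\top}(D_{x}f_{c})v$ vanish identically, and the information distinguishing transcritical from the degenerate cases is carried entirely by the center-manifold reduction (or an equivalent Lyapunov--Schmidt computation) — in particular by the slope $h_{0}$ — so that reduction cannot be bypassed. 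The remaining items, namely the sign of $\partial_{x}\Phi(0,c^{*})$, the signs of $\lambda_{1}'(c^{*})$ and $x_{\mathrm{int}}'(c^{*})$, and $b>0$, are routine sign chases from $\xi>\frac{1}{1-\alpha}$, $\beta>\delta$, and $\gamma>1+\alpha\xi$.
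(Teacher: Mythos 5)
Your proof is correct, and it takes a genuinely different route from the paper's. The paper argues geometrically: it considers the area $\int_0^{x^{*}(c)}(f(x)-g(x))\,dx$ between the nullclines, observes that the constant term $D$ of the cubic for the interior abscissa vanishes exactly at $c=c^{*}$ so that $x^{*}(c)\to 0$, and concludes that the interior saddle collides with the pest‑free node, reading the stability exchange off Lemma \ref{lem:lems1s}. You instead run the full local bifurcation machinery: persistence of $(0,y^{*}(c))$, the transversal zero crossing of $\lambda_{1}(c)=1-y^{*}(c)/(1+\alpha\xi)$, an implicit‑function‑theorem branch of interior equilibria through $(0,1+\alpha\xi)$, and a center‑manifold reduction to $\dot u=\lambda_1(c)u+bu^{2}+O(u^{3})$. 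What your route buys is the non‑degeneracy certificate $b\neq 0$, which the paper never checks, and your observation that both Sotomayor quantities $w^{\top}f_{c}$ and $w^{\top}(D_{x}f_{c})v$ vanish (since $c$ enters only the predator equation and $w=(1,0)^{\top}$) is correct and genuinely explains why the reduction cannot be bypassed: the transversality lives in the $c$‑dependence of $y^{*}(c)$ itself. It is also worth noting that your coefficient satisfies $h_{0}=g'(0)$ and hence $b=\bigl(f'(0)-g'(0)\bigr)/(1+\alpha\xi)$, so your non‑degeneracy condition is literally the transversal crossing of the two nullclines at $x=0$, i.e.\ the same quantity $\partial_x\Phi(0,c^{*})$ that drives your IFT step — this ties your analytic computation back to the paper's nullcline picture.

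One caveat: your sign chase for $b>0$ (and for $\partial_{x}\Phi(0,c^{*})<0$) invokes $\gamma>1+\alpha\xi$, which is not literally among the theorem's hypotheses. It is implicit in the setting — the lower bound $c>\frac{4\gamma(\beta-\delta)}{\xi(1+\alpha\xi+\gamma)^{2}}$ is only meaningful when the prey nullcline has an interior maximum, i.e.\ $\gamma>1+\alpha\xi$ — and it is exactly complementary to the hypothesis $\gamma<1+\alpha\xi$ of the pitchfork Theorem \ref{thm:pc1}, where $b$ can vanish. Stating that assumption explicitly would close the only small gap; otherwise the argument is complete and, in fact, tighter than the paper's.
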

We present a different proof than is usually presented in the literature, which uses Sotoyomar's theorem \cite{K04}. Herein we are exploiting certain structural symmetries in our system.
\begin{proof}
We consider the area between the predator and prey nullclines, $y=f(x), y=g(x)$, from \eqref{Eqn:1nn} , see appendix.
\begin{equation}
\int^{x^{*}(c)}_{0}(f(x) - g(x))dx.
\end{equation}

The integral is considered, as $x$ varies from 0 to $x^{*}$, which is where the nullclines intersect. Note, $x^{*}$ has explicit dependence on the parameter $c$, and is a solution of the cubic,  $Ax^3+Bx^2+Cx+D=0$, where in particular,
 $D=c\xi\gamma(1+\alpha \xi)^2+\delta(1+\alpha \xi)\gamma-\beta\gamma\xi$, see appendix.
We now increase only the $c$ parameter, $ c \nearrow c^{*}$, and compute,

 \begin{equation}
\lim_{c \rightarrow c^{*}}\int^{x^{*}(c)}_{0}(f(x) - g(x))dx,
\end{equation}

where $c^{*} = \frac{(\beta -\delta \alpha)\xi-\delta}{\xi(1+\alpha \xi)^2}$, which is the transcritical bifurcation point. Now

 \begin{eqnarray}
&& 0 \leq \lim_{c \rightarrow c^{*}} \int^{x^{*}(c)}_{0}(f(x) - g(x))dx \nonumber \\
&& < \left(\lim_{c \rightarrow c^{*}} x^{*}(c)\right) |f(x) - g(x)| = 0. |f(x) - g(x)| = 0. \nonumber \\
\end{eqnarray}

This follows from continuity and the boundedness of the integral. The limit of $x^{*}(c)$ is evaluated directly from the expression for the cubic that $x^{*}$, is a solution, see .

 When $D=0$, 
 \begin{equation}
     \gamma\left[\left(1+\alpha \xi \right)^{2} c\xi-(\beta-\delta\alpha)\xi+\delta\right]=
 \frac{\gamma}{\xi \left(1+\alpha \xi \right)^{2}}\left[c-\frac{\left(\beta-\delta\alpha\right)\xi - \delta}{\xi \left(1+\alpha \xi \right)^{2}} \right]=0.
 \end{equation}

 This yields, 
  \begin{equation}
      c=\frac{\left(\beta-\delta\alpha\right)\xi - \delta}{\xi \left(1+\alpha \xi \right)^{2}}  = c^{*}.
 \end{equation}

Thus as $c \rightarrow c^{*}$, $D \rightarrow 0$, yielding $x^{*}=0$, and we have,

 \begin{equation}
\lim_{c \rightarrow c^{*}}\int^{x^{*}(c)}_{0}(f(x) - g(x))dx \rightarrow 0,
\end{equation}

implying that the saddle interior equilibrium collides with the pest free node, and this happens at the point of transcriticality $c=c^{*}$, after which the pest free state changes stability to a saddle, which follows directly from lemma \ref{lem:lems1s}, and the interior saddle equilibrium, now becomes a node (albeit negative, so is not biologically relevant) - this follows via standard theory for planar dynamical systems \cite{K04}, and the fact that the adjacent equilibrium, the pest free state is now a saddle so is unstable.
 This by definition is a transcritical bifurcation, and proves the theorem.

\end{proof}

\subsubsection{Pitchfork bifurcation}

\begin{theorem} 
\label{thm:pc1}
Consider \eqref{Eqn:1nn} when $\xi > \frac{1}{1 - \alpha}$, $\gamma < 1 + \alpha\xi$, 
$\frac{\beta \xi}{c\xi(1+\alpha \xi)}-\frac{\delta}{c\xi} < 1 + \alpha\xi$,
then a pitchfork bifurcation occurs as $c \searrow c^{*}$, where,
\begin{equation*}
    c^{*}=\frac{\beta\gamma (1+\alpha \xi-\xi)}{\xi(1+\alpha \xi)^2\left[\gamma-(1+\alpha\xi)\right]}.
\end{equation*}

\end{theorem}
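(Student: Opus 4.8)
The plan is to argue, in the spirit of the proof of Theorem~\ref{thm:transcritical}, by tracking the relative position of the prey and predator nullclines $y=f(x)=(1-\tfrac{x}{\gamma})(1+\alpha\xi+x)$ and $y=g(x)=\tfrac1{c\xi}\big(\tfrac{\beta(x+\xi)}{1+\alpha\xi+x}-\delta\big)$ near the invariant axis $x=0$, and to show that at $c=c^{*}$ the reorganization of equilibria there is governed by an odd, $\mathbb Z_{2}$–type normal form — a pitchfork — rather than the generic quadratic one. The first observation is that $c^{*}$ is exactly the value at which the two nullclines have equal slope at the prey axis: $f'(0)=\tfrac{\gamma-(1+\alpha\xi)}{\gamma}$ and $g'(0)=\tfrac{\beta(1+\alpha\xi-\xi)}{c\xi(1+\alpha\xi)^{2}}$, and $f'(0)=g'(0)$ rearranges to $c=c^{*}=\tfrac{\beta\gamma(1+\alpha\xi-\xi)}{\xi(1+\alpha\xi)^{2}[\gamma-(1+\alpha\xi)]}$; the hypotheses $\xi>\tfrac1{1-\alpha}$ and $\gamma<1+\alpha\xi$ make numerator and denominator both negative, so $c^{*}>0$, and $y^{*}<1+\alpha\xi$ places $c^{*}$ in the range where $(0,y^{*})$ is a saddle.

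Next I would set up the one–dimensional reduction. Since $x=0$ is invariant and $\dot y|_{x=0}=c\xi\,y(y^{*}-y)$ drives $y\to y^{*}$, the local equilibrium structure is captured by the $x$–dynamics $\dot x = x\big[(1-\tfrac x\gamma)-\tfrac{y}{1+\alpha\xi+x}\big]$ restricted to the relevant invariant curve (equivalently, by the signed area $\int_{0}^{x^{*}(c)}(f(x)-g(x))\,dx$ and its first variations in $c$, handled by continuity and boundedness as in Theorem~\ref{thm:transcritical}). Writing the reduced field as $\dot u = A(c)u+B(c)u^{2}+C(c)u^{3}+O(u^{4})$, a pitchfork at $c=c^{*}$ amounts to $A(c^{*})=0$, $B(c^{*})=0$, $C(c^{*})\neq0$, plus transversality $\partial_{c}A(c^{*})\neq0$. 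The vanishing of $A$ is the slope–matching condition above; $\partial_{c}A(c^{*})\neq0$ is immediate since $A$ enters only through $y^{*}(c)$ and $g'(0)$, both strictly monotone in $c$; and $C(c^{*})$ is a finite computation from the cubic part of the Holling–II term $\tfrac{xy}{1+\alpha\xi+x}$, its sign (with that of $\partial_{c}A$) fixing sub– versus supercriticality — I anticipate a subcritical pitchfork, consistent with the subcritical Hopf and saddle–node behaviour reported for nearby parameters in Fig.~\ref{fig:Hopf-SN-RegimesA}.

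The crux is $B(c^{*})=0$, and here the structural symmetry is used, exactly as for the SNTC and CPTC statements. Expanding $f-g$ about $x=0$ and using the concavity facts $f''\equiv-\tfrac2\gamma<0$ and $g''>0$ (the latter precisely because $\xi>1+\alpha\xi$, so the predator nullcline is concave up), the second–order coefficient of the reduced equation collapses to a multiple of $f'(0)-g'(0)$ once the linear–in–$x$ correction of the slow manifold is substituted — hence it vanishes identically at $c=c^{*}$. With $A(c^{*})=0$, $B(c^{*})=0$, $C(c^{*})\neq0$ and $\partial_{c}A(c^{*})\neq0$ in hand, the pitchfork bifurcation theorem (Sotomayor's form) applies: the branch through the degenerate point persists while a symmetric pair of equilibria (in the reduced coordinate) is created or destroyed and stability is exchanged as $c$ crosses $c^{*}$.

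The main obstacle I expect is precisely $B(c^{*})=0$. The $x\mapsto-x$ symmetry of the $x$–equation near $x=0$ is only approximate — the Holling–II term and the finite carrying capacity break it at second order — so one must carry the correction of the slow manifold to the needed order (not merely freeze $y=y^{*}$) and verify that, after this correction, the quadratic normal–form coefficient is genuinely zero at $c=c^{*}$ rather than just small, while the cubic one is nonzero there. Getting this bookkeeping right, and pinning down the sign of $C(c^{*})$ so as to read off the criticality and the direction of the pitchfork as $c\searrow c^{*}$, is the delicate part; the rest is routine.
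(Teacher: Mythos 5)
Your identification of $c^{*}$ is exactly the paper's: match the slopes of the prey nullcline $f$ and the predator nullcline $g$ at $x=0$ and solve for $c$, checking positivity from $\xi>\frac{1}{1-\alpha}$ and $\gamma<1+\alpha\xi$. After that, however, your route diverges from the paper's and contains a genuine gap in two places. First, you set up a Sotomayor-type reduction at the pest free equilibrium and claim that ``the vanishing of $A$ is the slope-matching condition.'' It is not. The Jacobian at $(0,y^{*})$ is lower triangular with transverse eigenvalue $J_{11}=1-\frac{y^{*}}{1+\alpha\xi}$, which vanishes precisely when $f(0)=g(0)$, i.e.\ $y^{*}=1+\alpha\xi$ --- the \emph{transcritical} condition of Theorem \ref{thm:transcritical} --- and not when $f'(0)=g'(0)$. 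Under the theorem's own hypothesis $\frac{\beta\xi}{c\xi(1+\alpha\xi)}-\frac{\delta}{c\xi}<1+\alpha\xi$ this eigenvalue is strictly positive, so $(0,y^{*})$ is hyperbolic at $c=c^{*}$ and no center-manifold reduction or normal form $\dot u=Au+Bu^{2}+Cu^{3}$ can be anchored there. The degeneracy being exploited at $c=c^{*}$ is not a zero eigenvalue of the boundary equilibrium but the coalescence of the \emph{interior} roots of $f-g$, which is why the paper argues entirely through the global shape of the nullclines (concave-down $f$, concave-up decreasing $g$ since $1+\alpha\xi-\xi<0$, with $g(0)<f(0)$) and concludes that the three interior equilibria merge as $c\searrow c^{*}$, leaving only the stable pest free state afterwards.

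Second, even granting your framework, the decisive condition $B(c^{*})=0$ --- the only thing that separates a pitchfork from a transcritical or saddle-node --- is asserted, not established. Your stated mechanism, that the quadratic coefficient ``collapses to a multiple of $f'(0)-g'(0)$'' after correcting the slow manifold, is not substantiated and is not what a direct expansion gives: the second-order coefficient of $f-g$ at $x=0$ is $\tfrac12\bigl(f''(0)-g''(0)\bigr)=\tfrac12\bigl(-\tfrac{2}{\gamma}-g''(0)\bigr)$, which has no reason to vanish when the first derivatives match. You acknowledge in your closing paragraph that this bookkeeping is ``the delicate part,'' but it is in fact the entire content of the claim, so as written the proposal does not prove a pitchfork. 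To align with the paper you would instead track the three roots of the cubic obtained by clearing denominators in $f(x)=g(x)$ (using the structural symmetry about the predator axis noted in the paper's remark, which forces one of the colliding equilibria to sit at negative $x$) and show they coalesce at $c=c^{*}$ while the pest free branch persists on both sides.
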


\begin{proof}
The prey nullcline  is  given by $y=f(x)=(1-\frac{x}{\gamma})(1+\alpha \xi+x)=(1+\alpha \xi)+x-(1+\alpha \xi)\frac{x}{\gamma}-\frac{x^2}{\gamma}$, and the 
predator nullcline is given by $y=g(x)=\frac{\beta (x+\xi)}{c\xi(1+\alpha \xi+x)}-\frac{\delta}{c\xi} $.
So the slope of  prey nullcline at $x=0$ is
$\left.\frac{d f(x)}{dx}\right\vert_{x=0}=1- \frac{(1+\alpha \xi)}{\gamma}$. While the
slope of  predator nullcline at $x=0$ is
$\left.\frac{d g(x)}{dx}\right\vert_{x=0}=\frac{\beta (1+\alpha \xi-\xi)}{c\xi(1+\alpha \xi)^2}$.
Setting the two slopes equal,
$1- \frac{(1+\alpha \xi)}{\gamma}=\frac{\beta (1+\alpha \xi-\xi)}{c\xi(1+\alpha \xi)^2}$,
yields the following equation for $c^{*}$,
\begin{equation*}
    c^{*}=\frac{\beta\gamma (1+\alpha \xi-\xi)}{\xi(1+\alpha \xi)^2\left[\gamma-(1+\alpha\xi)\right]}.
\end{equation*}

Now as $c \searrow c^{*}$, the predator nullcline moves upwards, and the three interior equilibrium come closer together. This follows by the shape of the nullclines, under the parametric restriction enforced. Also, due to the parametric restrictions the pest free state is unstable as a saddle, as $g(0) < f(0)$. Since the two slopes of prey and predator nullcines at $x=0$ are chosen to be the same, by continuity, the three equilibrium merge into to one as $c \searrow c^{*}$, (see Theorem \ref{thm:transcritical}). Now as $c$ is further increased, the predator nullcline is completely above the prey nullcline, $g(x) > f(x)$ $\forall x$, and there is a single pest free equilibrium, which is stable. Thus by definition a pitchfork bifurcation has occurred.

\end{proof}

\subsection{Co-dimension two bifurcations}
Furthermore, we explore the possibility of the occurrence of some rich class of co-dimension two bifurcations in system \eqref{Eqn:1nn}. Since we are most interested in the interplay between the additional food and the predator competition, and the ecological and bio-control connotations, we choose to consider the two parameters $\xi$ and $c$ as our bifurcation parameters.

\subsubsection{Bogdanov-Takens  bifurcation}
Bogdanov-Takens (BT) bifurcation occurs for \eqref{Eqn:1nn}, when the Jacobian matrix at $(x^*,y^*)$ has a double zero eigenvalue in a two parameter plane. This is observed in Fig. \ref{fig:Hopf-SN-RegimesA}(b) at $(c,\xi)=(c_{bt},\xi_{bt})=(0.00882,1.02120)$ when $(x^*,y^*)=(6.60896,4.31360)$. 

\subsubsection{Cusp bifurcation}
A cusp bifurcation occurs when two saddle nodes collide. Typically for such a situation we need the 
existence of three interior equilibrium, where one is a saddle and two others are stable/unstable nodes. Here there is a possibility for one node to collide with the saddle along a certain path in the two parameter space, while along another path the other node collides with the saddle, setting up a collision
at the cusp point. This also becomes an organizing center for the considered model.
Typically in the neighborhood of a cusp one has a BT, and multiple (at least 3 equilibrium).
A cusp bifurcation occurs for \eqref{Eqn:1nn}, and is seen in Fig. \ref{fig:Hopf-SN-RegimesA}(b) at  $(c,\xi)=(c_{cp},\xi_{cp})=(0.00877,0.83313)$ when $(x^*,y^*)=(4.27778,3.83219)$.

\subsubsection{Saddle-node-transcritical bifurcation}
The saddle-node-transcritical bifurcation (SNTC), occurs when curves of saddle-node and transcritical bifurcations intersect. Amongst the first reported types of SNTC (in the predator prey literature) was in \cite{ BS07}, where a Lokta-Volterra predator-prey model with type I response was considered, and then in \cite{KSV10, KSV15}, where  predator competition was considered, as well as a constant term added to the prey, which (depending on its sign) can be interpreted as harvesting or migration. These are classified into the \emph{elliptic} case (single zero eigenvalue) or \emph{saddle} case (double zero eigenvalue). The SNTC is further complicated as it is not detected correctly by MATCONT, and only very recently has a numerical procedure been developed for its detection \cite{VH19}.

A saddle-node-transcritical (SNTC) bifurcation occurs for \eqref{Eqn:1nn}, and is demonstrated in Fig. \ref{fig:SNTC}. Here, one has the collision of a saddle-node (formed when the stable interior equilibrium $(x^*,y^*)$ and saddle $(x^{**},y^{**})$ collide) with the pest free state $(0,y^*)$, as it undergoes a transcritical bifurcation (which occurs when $y^* = 1 + \xi \alpha$), all in a single collision. 
We see that the single zero SNTC point occurred at $(\xi,c)=(\xi_{sntc},c_{sntc})=(1.01347,0.02155)$. This helps us to identify a critical organizing center of the system \eqref{Eqn:1nn}, when the parameters $\xi$ and $c$ are considered. For more details on the detection of saddle-node-transcritical interactions, please see \cite{BS07, VH19, V22} and references therein. We also observed BT bifurcation at $(\xi,c)=(\xi_{bt},c_{bt})=(1.01106,0.02109)$.

\subsubsection{Cusp-transcritical bifurcation}
A cusp-transcritical bifurcation (CPTC) occurs, when a cusp point collides with a transcritical bifurcation point . This bifurcation is demonstrated in Fig. \ref{fig:Cusp-Transcritical}. In Fig. \ref{fig:Cusp-Transcritical}(a), a one parameter bifurcation is constructed with $\xi$ as the bifurcation parameter. Locally, we observed the existence of a saddle-node and branch points. With the aid of MATCONT, the branch point (BP) was continued to produce transcritical bifurcation at $\xi=1.59411$ around the pest free state $(0,y^*)=(0,1.15941)$.  The saddle-node point (SN) at $\xi=1.59416$ around the coexistence equilibrium point $(x^*,y^*)=(0.00474,1.15598)$ was continued to produce the two parameter space diagram in  Fig. \ref{fig:Cusp-Transcritical}(b) as an additional parameter $c$ was varied. We observed a cusp point (CP) and two cusp-transcritical points ($CPTC_1$ and $CPTC_2$) of co-dimension two as $\xi$ and $c$ are varied. To the best of our knowledge this is the first time CPTC has occurred twice and in co-dimension two. In Fig. \ref{fig:Cusp-Transcritical}(c), we demonstrated that $CPTC_1$ and $CPTC_2$ lie on the transcritical curve $c(\xi)$ where $CPTC_1=TC_1=(\xi,c)=(1.52290,0.07778)$ and $CPTC_2=TC_2=(\xi,c)=(2.82631,0.11063)$. For more details on CPTC in co-dimension three, please refer to \cite{BS07,D20}.

\subsubsection{Pitchfork-transcritical bifurcation and an organising center}

When one traces the locus of bifurcation parameters in two parameter space, this separates regions of different qualitative behavior. Points in the parameter space where two or more bifurcation curves meet are called ``organizing centers" \cite{BSW07}. For generic dynamical centers the classification of the theory of these centers is nearly complete \cite{K04}. However, in systems with special structure bifurcations can occur in \emph{lower} co-dimension than the generic case. 

\begin{definition}[Pitchfork-transcritical bifurcation]
A pitchfork-transcritical bifurcation (PTC) occurs, when a pitchfork bifurcation point collides with a transcritical bifurcation point. 
\end{definition}

We first state the following result,

\begin{theorem} [Organizing center]
\label{thm:oc}
Consider \eqref{Eqn:1nn} when $\xi > \frac{1}{1 - \alpha}$, $\gamma < 1 + \alpha \xi$, and define  $a_{1}= (\beta - \delta \alpha) \alpha, \ b_{1}= -(\beta - \delta \alpha)(\gamma-1) + \beta \gamma (1-\alpha) + \delta \alpha, \ c_{1} = \beta \gamma - \delta + \delta \gamma$. Then if $b_{1}^{2}-4a_{1}c_{1}=0, \ \frac{-b_{1}}{2a_{1}} > 0$, there exists an organising center, in the $c-\xi$ parameter space, where a transcritical bifurcation point and pitchfork bifurcation point meet. 
\end{theorem}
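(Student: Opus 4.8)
The plan is to exhibit the organizing center as the point of the $(c,\xi)$ parameter plane lying simultaneously on the transcritical bifurcation locus of Theorem~\ref{thm:transcritical} and on the pitchfork bifurcation locus of Theorem~\ref{thm:pc1}. The guiding observation is that both bifurcations are governed by the prey and predator nullclines $y=f(x)$ and $y=g(x)$ of \eqref{Eqn:1nn} at $x=0$: the transcritical bifurcation of $(0,y^{*})$ occurs exactly when $g(0)=f(0)$, equivalently $y^{*}=1+\alpha\xi$, which is the stability boundary of $(0,y^{*})$ from Lemma~\ref{lem:lems1s}, i.e. the curve $c=c_{TC}(\xi):=\frac{(\beta-\delta\alpha)\xi-\delta}{\xi(1+\alpha\xi)^{2}}$; and, by the proof of Theorem~\ref{thm:pc1}, the pitchfork bifurcation occurs exactly when the nullclines \emph{also} have equal slope there, $g'(0)=f'(0)$, i.e. along the curve $c=c_{PF}(\xi):=\frac{\beta\gamma(1+\alpha\xi-\xi)}{\xi(1+\alpha\xi)^{2}\left[\gamma-(1+\alpha\xi)\right]}$. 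A pitchfork--transcritical interaction is therefore precisely a tangency of the two nullclines at $x=0$, and locating it amounts to solving $c_{TC}(\xi)=c_{PF}(\xi)$ under the standing restrictions $\xi>\frac{1}{1-\alpha}$ and $\gamma<1+\alpha\xi$.

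First I would set $c_{TC}(\xi)=c_{PF}(\xi)$. The common prefactor $\left(\xi(1+\alpha\xi)^{2}\right)^{-1}$ cancels, and multiplying through by $\gamma-(1+\alpha\xi)$ reduces the equality to the polynomial identity $\left((\beta-\delta\alpha)\xi-\delta\right)\left(\gamma-(1+\alpha\xi)\right)=\beta\gamma\,(1+\alpha\xi-\xi)$. Expanding both sides and collecting powers of $\xi$ collapses this to a quadratic $a_{1}\xi^{2}+b_{1}\xi+c_{1}=0$, with $a_{1}=(\beta-\delta\alpha)\alpha$ and $c_{1}=\beta\gamma-\delta+\delta\gamma$ read off at once, and $b_{1}$ the linear coefficient displayed in the statement; this linear coefficient is the one place where the sign bookkeeping must be done carefully, since both $1+\alpha\xi-\xi$ and $\gamma-(1+\alpha\xi)$ are negative in the assumed regime.

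Next I would impose the hypothesis $b_{1}^{2}-4a_{1}c_{1}=0$, so that the quadratic has the single double root $\xi^{*}=-b_{1}/(2a_{1})$; the hypothesis $-b_{1}/(2a_{1})>0$ then guarantees $\xi^{*}$ is a positive, admissible additional-food level. Putting $c^{*}:=c_{TC}(\xi^{*})=c_{PF}(\xi^{*})$, the point $(c^{*},\xi^{*})$ lies on both bifurcation curves, so Theorem~\ref{thm:transcritical} yields a transcritical bifurcation there and Theorem~\ref{thm:pc1} yields a pitchfork bifurcation there; the two loci meet, which by definition is an organizing center---indeed a PTC point---of \eqref{Eqn:1nn} in the $c$--$\xi$ plane.

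The main obstacle I anticipate is not the quadratic reduction itself but verifying that the double root $\xi^{*}$ actually falls inside the parameter window in which \emph{both} bifurcations genuinely occur, rather than merely satisfying the algebraic equalities: one must check that $\xi^{*}>\frac{1}{1-\alpha}$, that $\gamma<1+\alpha\xi^{*}$, and that the remaining existence and nondegeneracy conditions of Theorems~\ref{thm:transcritical} and~\ref{thm:pc1} (notably the transcritical window $\frac{4\gamma(\beta-\delta)}{\xi(1+\alpha\xi+\gamma)^{2}}<c^{*}$, and the persistence of the relevant triple of interior equilibria as $c\to c^{*}$) remain compatible with the standing hypotheses. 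I would argue that these compatibility conditions are forced by $b_{1}^{2}=4a_{1}c_{1}$ together with $-b_{1}/(2a_{1})>0$, $\xi>\frac{1}{1-\alpha}$ and $\gamma<1+\alpha\xi$, which would complete the proof.
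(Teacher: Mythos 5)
Your proposal is correct and follows essentially the same route as the paper: equate the pitchfork curve $c_{PF}(\xi)$ (equal nullcline slopes at $x=0$) with the transcritical curve $c_{TC}(\xi)$ (equal nullcline intercepts, i.e.\ $y^{*}=1+\alpha\xi$), cancel the common factor $\xi(1+\alpha\xi)^{2}$, reduce to the quadratic $a_{1}\xi^{2}+b_{1}\xi+c_{1}=0$, and use $b_{1}^{2}-4a_{1}c_{1}=0$ with $-b_{1}/(2a_{1})>0$ to produce the single admissible intersection $(c^{*},\xi^{*})$. Your version is in fact slightly more careful than the paper's, which drops the $-\delta$ from the numerator of the displayed transcritical curve and does not flag the compatibility of $\xi^{*}$ with the standing window $\xi>\frac{1}{1-\alpha}$, $\gamma<1+\alpha\xi$ that you correctly note must be checked.
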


\begin{proof}
By setting the slopes of the predator and prey nullclines equal when $x=0$, we obtain the following equation for $c$,
\begin{equation}
    c\left(\xi\right)=\frac{\beta\gamma (1+\alpha \xi-\xi)}{\xi(1+\alpha \xi)^2\left[\gamma-(1+\alpha\xi)\right]}.
\end{equation}

Now, note the TC curve

\begin{equation}
    c\left(\xi\right)= \frac{\left(\beta-\delta\alpha\right)\xi}{\xi \left(1+\alpha \xi \right)^{2}},
\end{equation}

setting these two expressions for $c(\xi)$ equal, yields a quadratic for $\xi$, 

\begin{equation}
a_{1} \xi^{2} + b_{1} \xi + c_{1} = 0, 
\end{equation}

where

\begin{equation}
    a_{1}= (\beta - \delta \alpha) \alpha, \ b_{1}= -(\beta - \delta \alpha)(\gamma-1) + \beta \gamma (1-\alpha) + \delta \alpha, \ c_{1} = \beta \gamma - \delta + \delta \gamma.
\end{equation}

Thus if $b_{1}^{2}-4a_{1}c_{1}=0, \ \frac{-b_{1}}{2a_{1}} > 0$, we have the occurrence of one positive root, $\xi^{*}$, using this we can compute $c^{*}=c(\xi^{*})$, which yields the organizing center $(c^{*},\xi^{*})$ in $c-\xi$ parameter space. This proves the theorem. 

\end{proof}

\begin{proposition}
\label{prop:pt2}
Consider \eqref{Eqn:1nn} when $\xi > \frac{1}{1 - \alpha}$, $\gamma < 1 + \alpha \xi$, $\frac{\beta \xi}{c\xi(1+\alpha \xi)}-\frac{\delta}{c\xi} < 1 + \alpha\xi$. A pitchfork-transcritical bifurcation can occur, in co-dimension two, where the bifurcating equilibrium has a single zero eigenvalue.
\end{proposition}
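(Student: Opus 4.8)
The plan is to assemble the Proposition from three results already in hand: the pitchfork bifurcation of Theorem \ref{thm:pc1}, the transcritical bifurcation of Theorem \ref{thm:transcritical}, and the organizing center of Theorem \ref{thm:oc}. Recall that in \eqref{Eqn:1nn} the predator axis $\{x=0\}$ is invariant (the $x$-equation carries an overall factor of $x$), and it is precisely this structural symmetry that collapses what would generically be a fold on the boundary into a pitchfork; combined with the branch point on this axis — the transcritical exchange of stability between $(0,y^*)$ and an interior branch — it produces a pitchfork-transcritical interaction. Concretely, the pitchfork curve $c_{P}(\xi)=\frac{\beta\gamma(1+\alpha\xi-\xi)}{\xi(1+\alpha\xi)^2[\gamma-(1+\alpha\xi)]}$ of Theorem \ref{thm:pc1} and the transcritical curve $c_{TC}(\xi)=\frac{(\beta-\delta\alpha)\xi-\delta}{\xi(1+\alpha\xi)^2}$ of Theorem \ref{thm:transcritical} both live in the $c$-$\xi$ plane, and by Theorem \ref{thm:oc} they meet at a point $(c^{*},\xi^{*})$ whenever the quadratic $a_1\xi^2+b_1\xi+c_1=0$ has a positive root. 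The first step of the proof is simply to record that the standing hypotheses $\xi>\tfrac{1}{1-\alpha}$, $\gamma<1+\alpha\xi$, and $\tfrac{\beta\xi}{c\xi(1+\alpha\xi)}-\tfrac{\delta}{c\xi}<1+\alpha\xi$ place the system in the regime of Theorems \ref{thm:pc1} and \ref{thm:transcritical}, so that near $(c^{*},\xi^{*})$ all the geometric conclusions there hold — nullclines tangent at $x=0$, three interior equilibria coalescing, and $(0,y^*)$ passing from saddle to stable node as $c$ increases through $c_{TC}(\xi)$. At such a point both bifurcations are triggered at once, so the PTC occurs in co-dimension two.

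Next I would verify the single-zero-eigenvalue claim, which is what distinguishes this PTC from the double-zero (Bogdanov--Takens--adjacent, ``saddle'') case. The bifurcating equilibrium is the pest free state $(0,y^*)$ with $y^*=\tfrac{1}{c\xi}\big(\tfrac{\beta\xi}{1+\alpha\xi}-\delta\big)$. By Lemma \ref{lem:lems1s}, $\Tr(J)|_{(0,y^*)}=-c\xi y^*$ and $\det(J)|_{(0,y^*)}=-\big(1-\tfrac{y^*}{1+\alpha\xi}\big)y^*$. Transcriticality forces $y^*=1+\alpha\xi$, whence $\det(J)=0$, while $\Tr(J)=-c^{*}\xi^{*}(1+\alpha\xi^{*})<0$ remains strictly negative. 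Hence the Jacobian at $(0,y^*)$ has eigenvalues $0$ and $-c^{*}\xi^{*}(1+\alpha\xi^{*})\neq 0$, i.e. a simple zero eigenvalue, as claimed; a one-dimensional center manifold reduction is therefore available, and the analysis is genuinely scalar, carried out along the invariant axis $\{x=0\}$ together with its complementary (graph) direction.

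Finally I would confirm that the local picture at $(c^{*},\xi^{*})$ is the pitchfork-transcritical normal form and not a further degeneracy. Working with the reduced scalar equation on the center manifold — equivalently, tracking the sign of $g(x)-f(x)$ near $x=0$, where $f,g$ are the prey and predator nullclines from the proof of Theorem \ref{thm:pc1} — one checks that at the organizing center $g-f$ has a double zero at $x=0$ (equal heights from transcriticality, equal slopes from the pitchfork condition) but a nonzero second-order coefficient; this last quantity is governed by the curvature mismatch of the two nullclines at $x=0$, which is strictly positive since under $\gamma<1+\alpha\xi$ and $\xi>\tfrac{1}{1-\alpha}$ the prey nullcline is concave down and the predator nullcline concave up there. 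Combined with transversality of the two-parameter unfolding — $c$ moves the predator nullcline bodily, changing its height at $x=0$, while the slope-matching combination varies independently — this furnishes the two independent unfolding directions of the PTC. I expect the main obstacle to be exactly this last step: rigorously pinning down the non-degeneracy and transversality conditions so as to exclude a coincidental collision with the cusp-transcritical scenario of Fig. \ref{fig:Cusp-Transcritical} (which would additionally require the curvatures of the nullclines to match), and to show that the exchange of equilibria is precisely symmetric — here one again leans on the invariance of $\{x=0\}$, which forces the $\mathbb{Z}_2$-type pitchfork branching rather than a generic fold.
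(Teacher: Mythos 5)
Your proposal is correct in substance and follows the same route the paper takes, namely locating the PTC at the intersection of the slope-matching (pitchfork) curve of Theorem \ref{thm:pc1} with the transcritical curve of Theorem \ref{thm:transcritical}, which is exactly the organizing center of Theorem \ref{thm:oc}. The difference is one of completeness rather than strategy: the paper offers no proof of Proposition \ref{prop:pt2} at all -- it leans entirely on Theorem \ref{thm:oc} and the numerical validation in Fig. \ref{fig:Pitchfork-Transcritical} -- whereas you supply the two analytic ingredients the paper leaves implicit. First, the eigenvalue computation via Lemma \ref{lem:lems1s}: at transcriticality $y^{*}=1+\alpha\xi$ forces $\det(J)=0$ while $\Tr(J)=-c\xi(1+\alpha\xi)<0$, so the zero eigenvalue is simple; this is the correct and cleanest way to justify the ``single zero eigenvalue'' clause, and it is what distinguishes this interaction from the double-zero (saddle) case. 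Second, the non-degeneracy check that $g-f$ has a genuinely quadratic zero at $x=0$ because $f''=-2/\gamma<0$ while $g''(0)>0$ when $\xi>\frac{1}{1-\alpha}$; together with the independence of the height-matching and slope-matching conditions as functions of $(c,\xi)$, this gives the two unfolding directions and rules out a coincidental cusp-transcritical degeneracy. Both checks are sound and strengthen the paper's presentation.

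Two small corrections. You state that $(0,y^{*})$ passes ``from saddle to stable node as $c$ increases through $c_{TC}(\xi)$''; Lemma \ref{lem:lems1s} and Theorem \ref{thm:transcritical} give the opposite orientation (stable node for $c<c_{TC}$, saddle for $c>c_{TC}$). Also, the standing hypothesis $\frac{\beta\xi}{c\xi(1+\alpha\xi)}-\frac{\delta}{c\xi}<1+\alpha\xi$ is precisely $y^{*}<1+\alpha\xi$, i.e.\ $c$ strictly on the saddle side of the transcritical curve, whereas the PTC point itself sits on the boundary $y^{*}=1+\alpha\xi$; your proof implicitly (and correctly) treats the hypothesis as describing the regime from which the PTC point is approached, but it is worth saying so explicitly, since as written the strict inequality excludes the bifurcation point. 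Neither issue affects the validity of the argument.
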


A numerical validation of Theorem \ref{thm:oc} and Proposition \ref{prop:pt2} is provided in Fig. \ref{fig:Pitchfork-Transcritical}.

\begin{remark}
We note that our construction of a pitchfork-transcritical bifurcation in Theorem \ref{thm:oc}, Proposition \ref{prop:pt2}, and seen numerically in Fig. \ref{fig:Pitchfork-Transcritical}, uses the symmetry of the predator nullcline about the y-(predator) axis, and the fact that we have a pest free state. However this symmetry requires one of the equilibrium points to be negative - thus this situation is not biologically feasible, but certainly possible mathematically. Similar symmetries have been exploited in our construction of the cusp-transcritical bifurcation seen numerically in Fig. \ref{fig:Cusp-Transcritical}.
\end{remark}

\subsection{Cyclical dynamics}

\subsubsection{Multiple limit cylces}
In Fig. \ref{fig:limit cycles and homoclinic}(a), we observe a large stable limit cycle containing all the three interior equilibrium points and a small unstable limit cycle containing one interior equilibrium point. The small unstable limit cycle is clearly seen in Fig. \ref{fig:limit cycles and homoclinic}(b).

\subsubsection{Homoclinic orbits}
A homoclinic orbit is created when the stable manifold $(W^S(x^*,y^*))$ and unstable  manifold $(W^U(x^*,y^*))$ coincide to form a loop. From our numerical simulations in Fig. \ref{fig:limit cycles and homoclinic}(c), a homoclinic loop is created at $c=c_{hom}=0.069358$.

\begin{figure}[!htb]
\begin{center}
%\subfigure[]{
    \includegraphics[scale=.2]{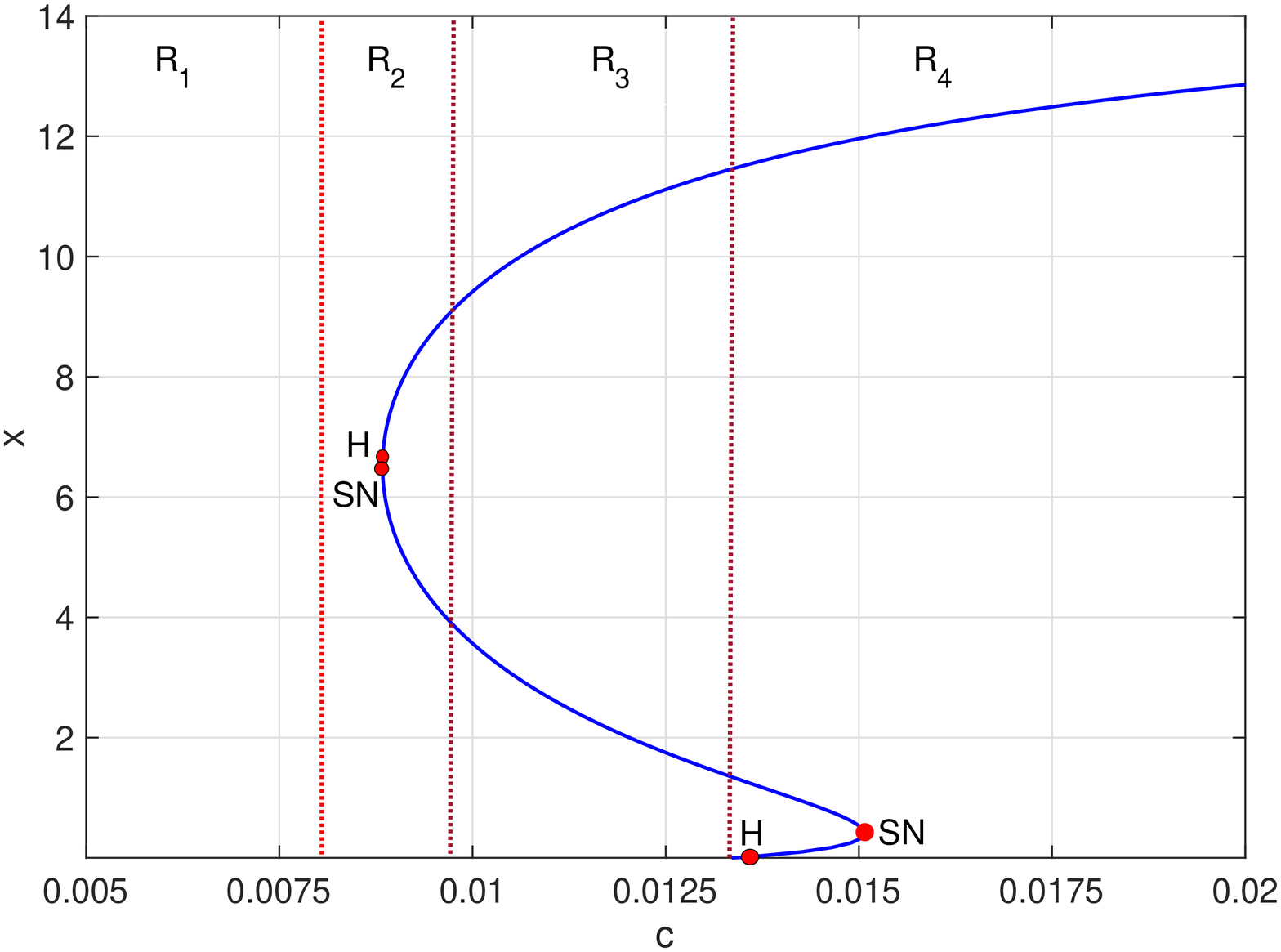}
%\subfigure[]{    
   \includegraphics[scale=.2]{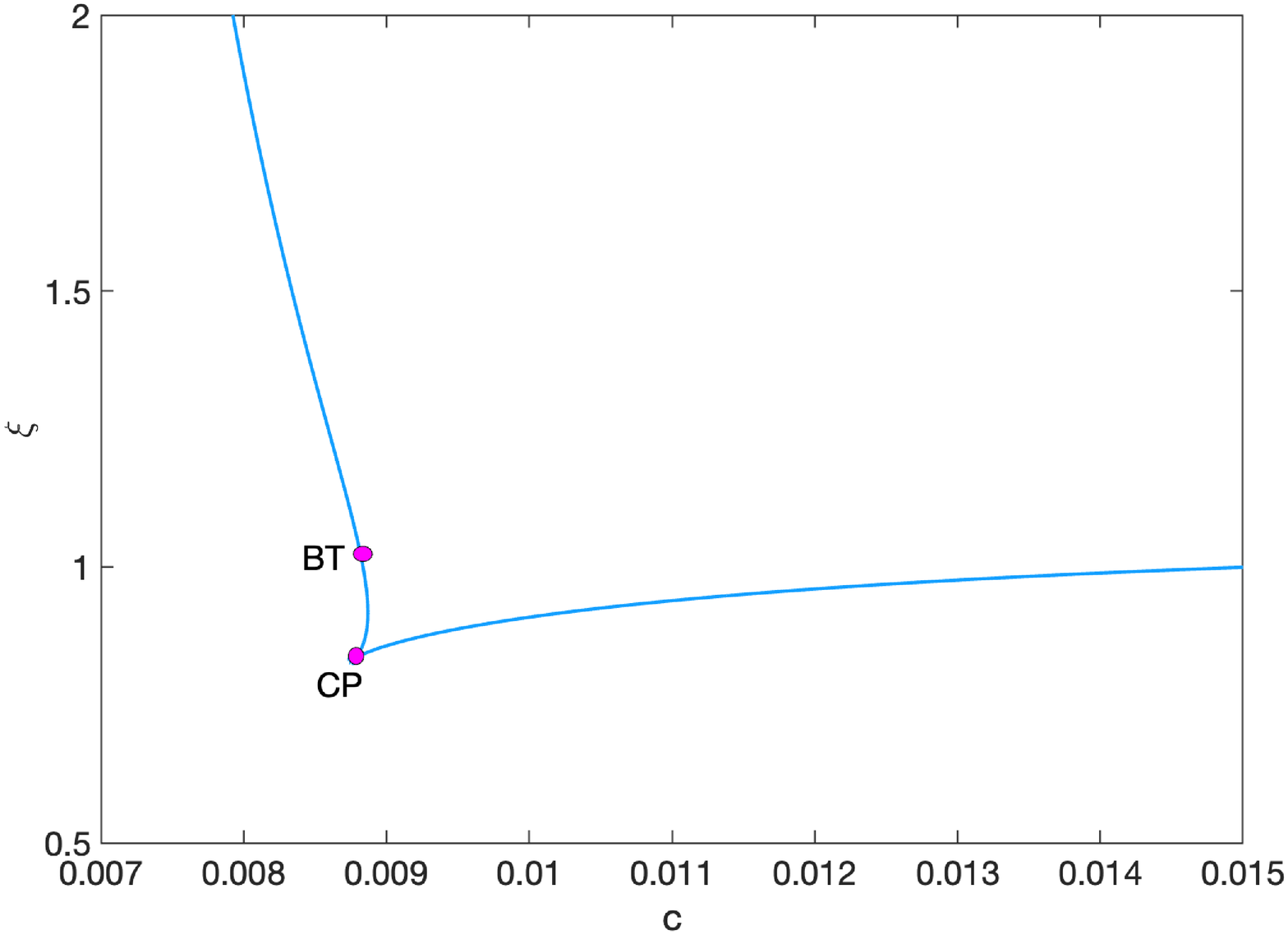}
\end{center}
\caption{(a) (or left figure)  Here we explore the effect of varying $c$, in the concave down predator nullcline. We observe the occurrence of Hopf and saddle-node bifurcations. Here $\gamma=15, \alpha=0.1, \xi=1, \beta=0.3, \delta=0.258$. (b) (or right figure) Two parameter bifurcation curve showing cusp and Bogdanov-Takens bifurcations in the $c-\xi$ plane. (Note: SN=Saddle-Node point, H=Hopf point, CP=Cusp point, BT=Bogdanov-Takens point.)}
\label{fig:Hopf-SN-RegimesA}
\end{figure}

%%%%%%%%%%%%%%%%%%%%%%%%%%% Transcritical
\begin{figure}[!htb]
\begin{center}
%\subfigure[]{
    \includegraphics[width=5cm, height=5cm]{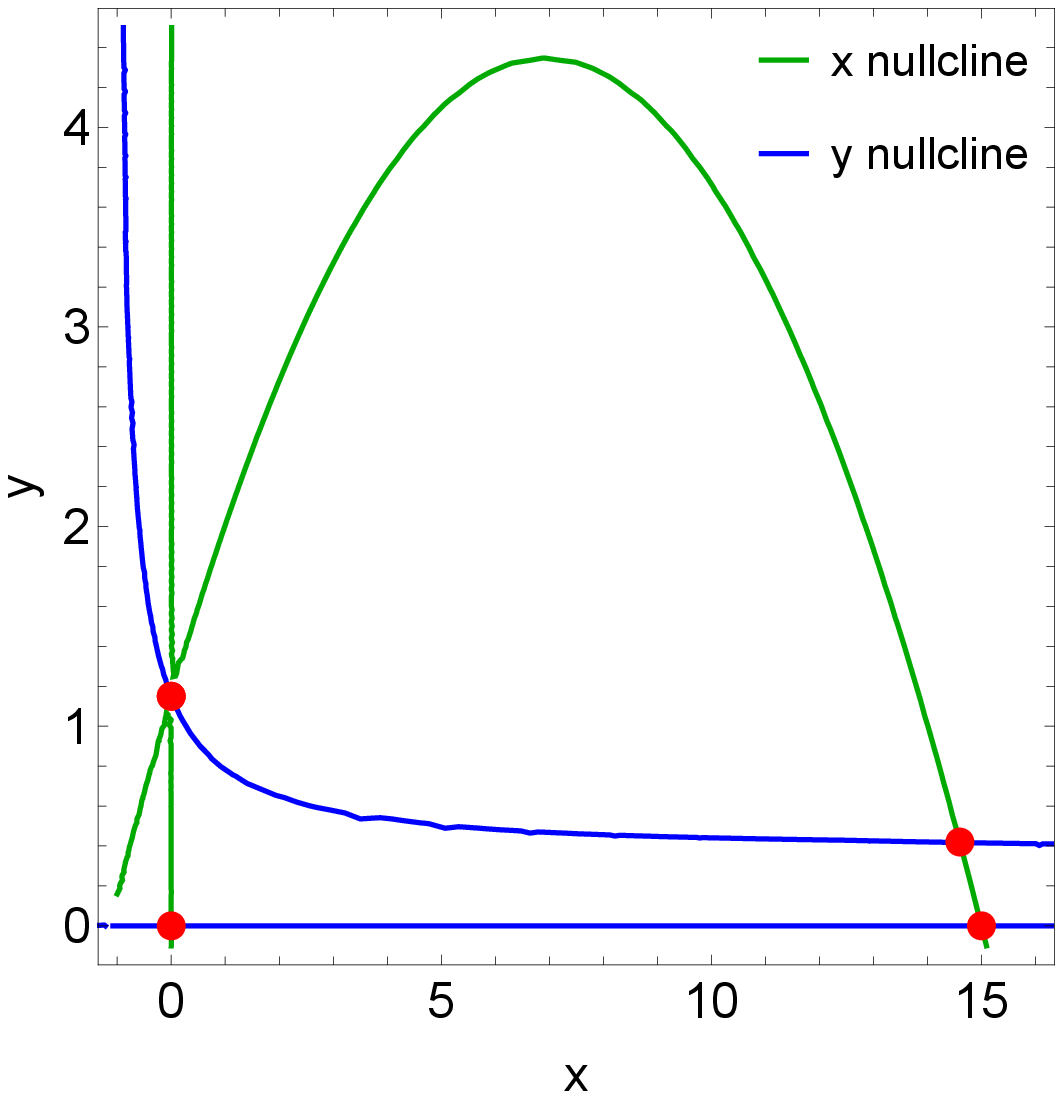}
%\subfigure[]{    
    \includegraphics[width=5.22cm, height=5.5cm]{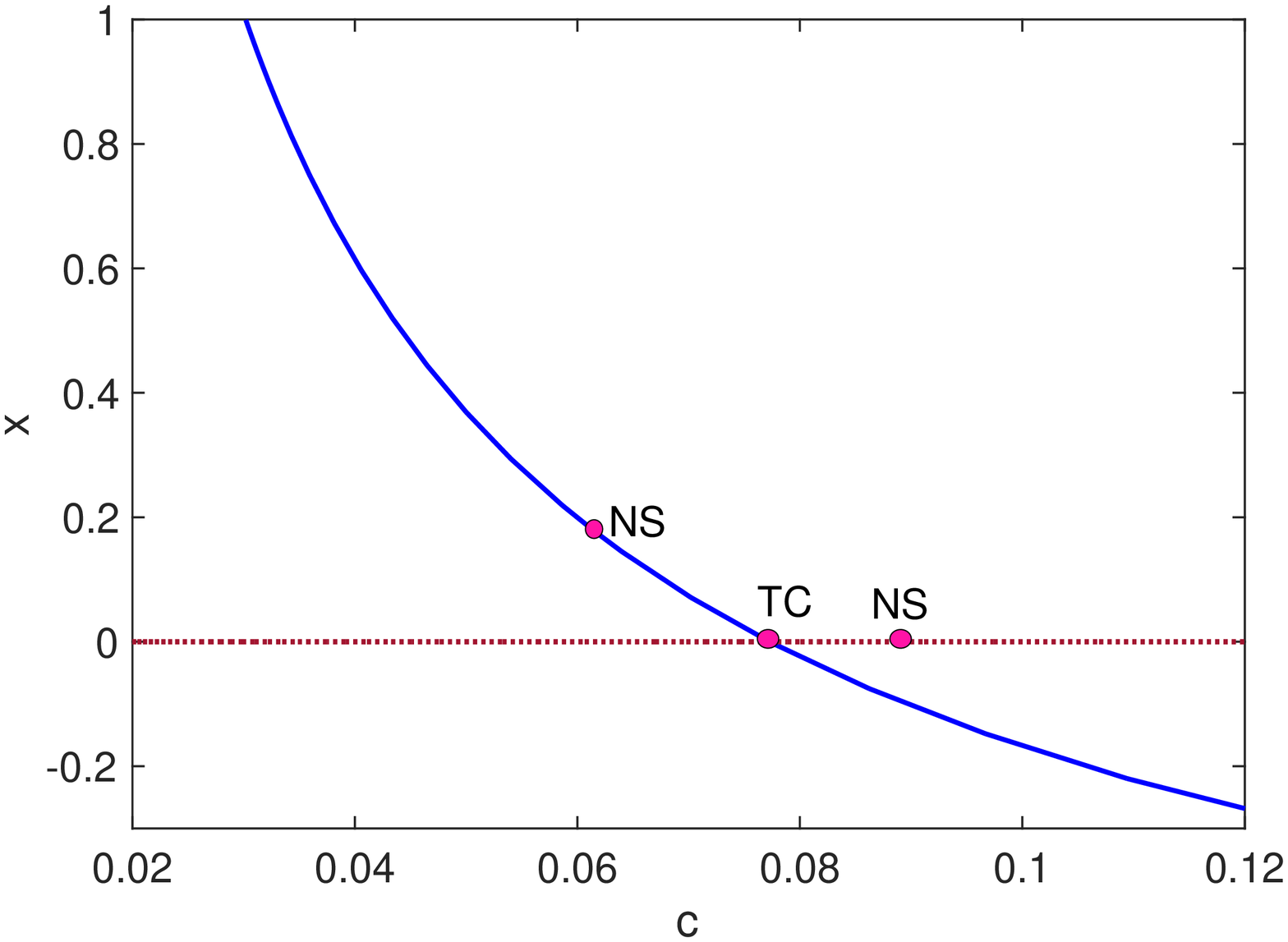}
%\subfigure[]{    
    \includegraphics[width=6cm, height=5.3cm]{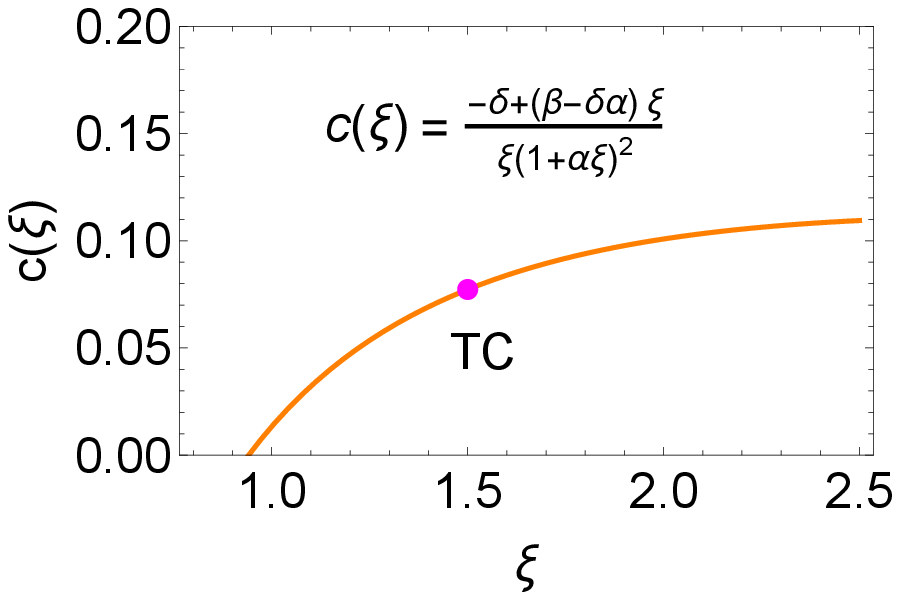}    
\end{center}
\caption{ Figures illustrating Theorem \ref{thm:transcritical} (a) (or top left figure) phase portrait, (b) (or top right) one parameter bifurcation diagram as $c$ is varied, and (c) (or bottom figure) transcritical point on the transcritical curve. The parameters are given by $\gamma=15, \alpha=0.1, \beta=0.3, \delta=0.258, \xi=1.5, c=c^*=0.077278$. (Note: NS=Neutral Saddle point, TC=Transcritical point.)}
\label{fig:Transcritical}
\end{figure}
%%%%%%%%%%%%%%%%%%%%%%%%%

\begin{figure}[!htb]
\begin{center}
%\subfigure[]{
    \includegraphics[scale=.28]{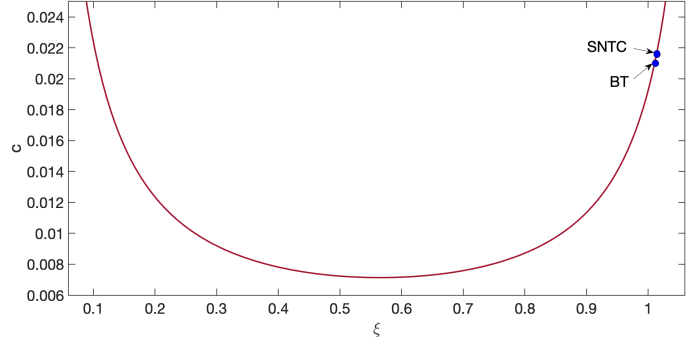}
%\subfigure[]{    
\end{center}
\caption{Two parameter bifurcation curve in the $\xi - c$ plane. Here $\gamma=200, \alpha=0.10062, \beta=0.298, \delta=0.25$. In addition, we observe BT and SNTC points.  (Note: SNTC=Saddle-Node-Transcritical point, BT=Bogdanov-Takens point.)}
\label{fig:SNTC}
\end{figure}

\begin{figure}[!htb]
\begin{center}
%\subfigure[]{
    \includegraphics[scale=.25]{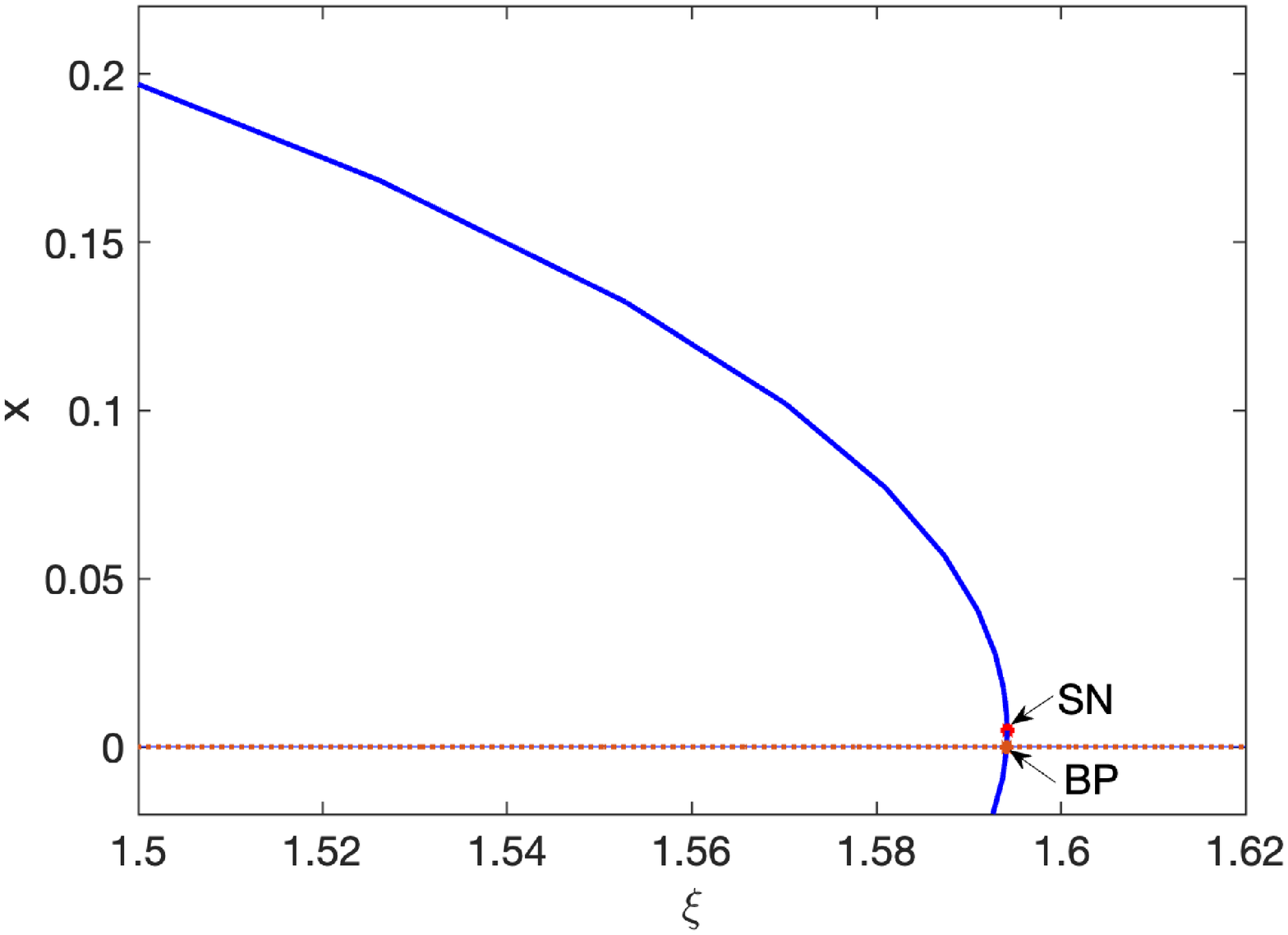}
%\subfigure[]{    
    \includegraphics[scale=.25]{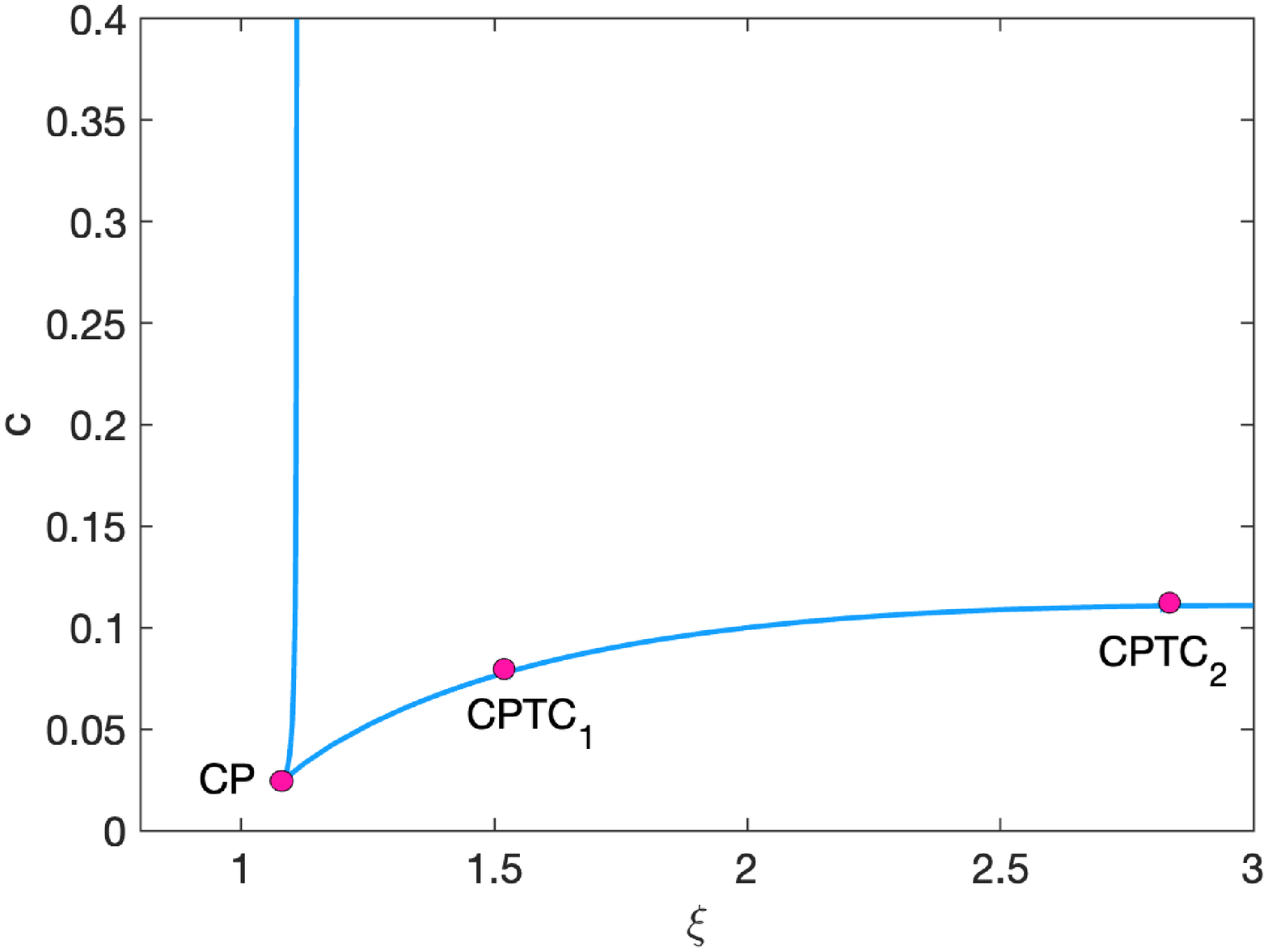}
%\subfigure[]{    
    \includegraphics[scale=.34]{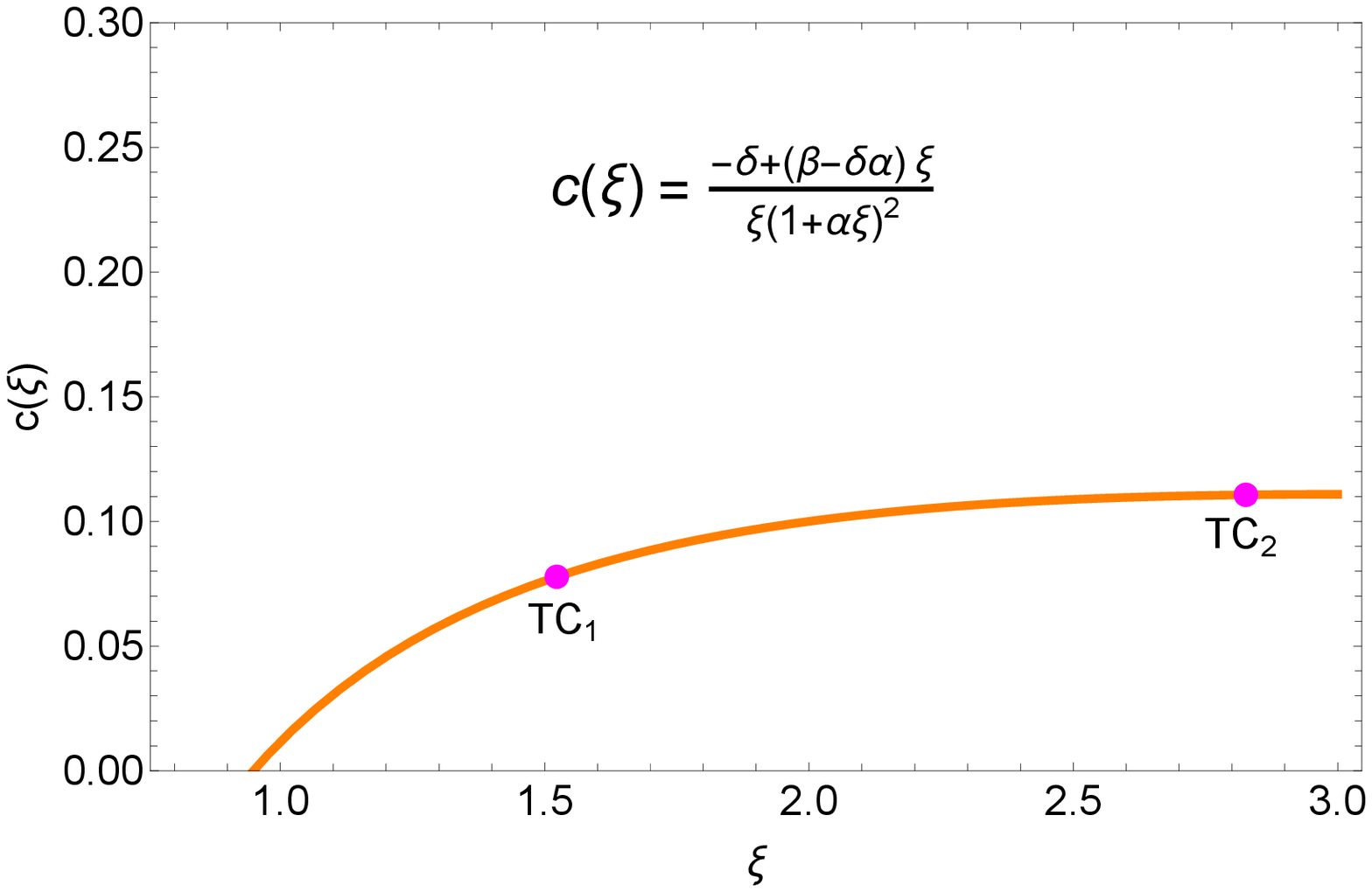}    
\end{center}
\caption{(a) (or top left figure) One parameter bifurcation diagram of pest with respect to the parameter $\xi$.  (b) (or top right figure) Two parameter bifurcation diagram in the $\xi-c$ parametric space. (c) (or bottom figure) Transcritical curve depicting transcritical points. Here $\gamma=0.675, \alpha=0.1, \beta=0.3, \delta=0.26$. (Note: CP=Cusp point, BP=Branch point, $CPTC_{i}$=Cusp-Transcritical point, $TC_{i}$=Trancritical point, $i=1,2$.)}
\label{fig:Cusp-Transcritical}
\end{figure}

%%%%%%%%%%%%%%%%%%%%%%%%% PTC
\begin{figure}[!htb]
\begin{center}
%\subfigure[]{
    \includegraphics[width=5cm, height=4.5cm]{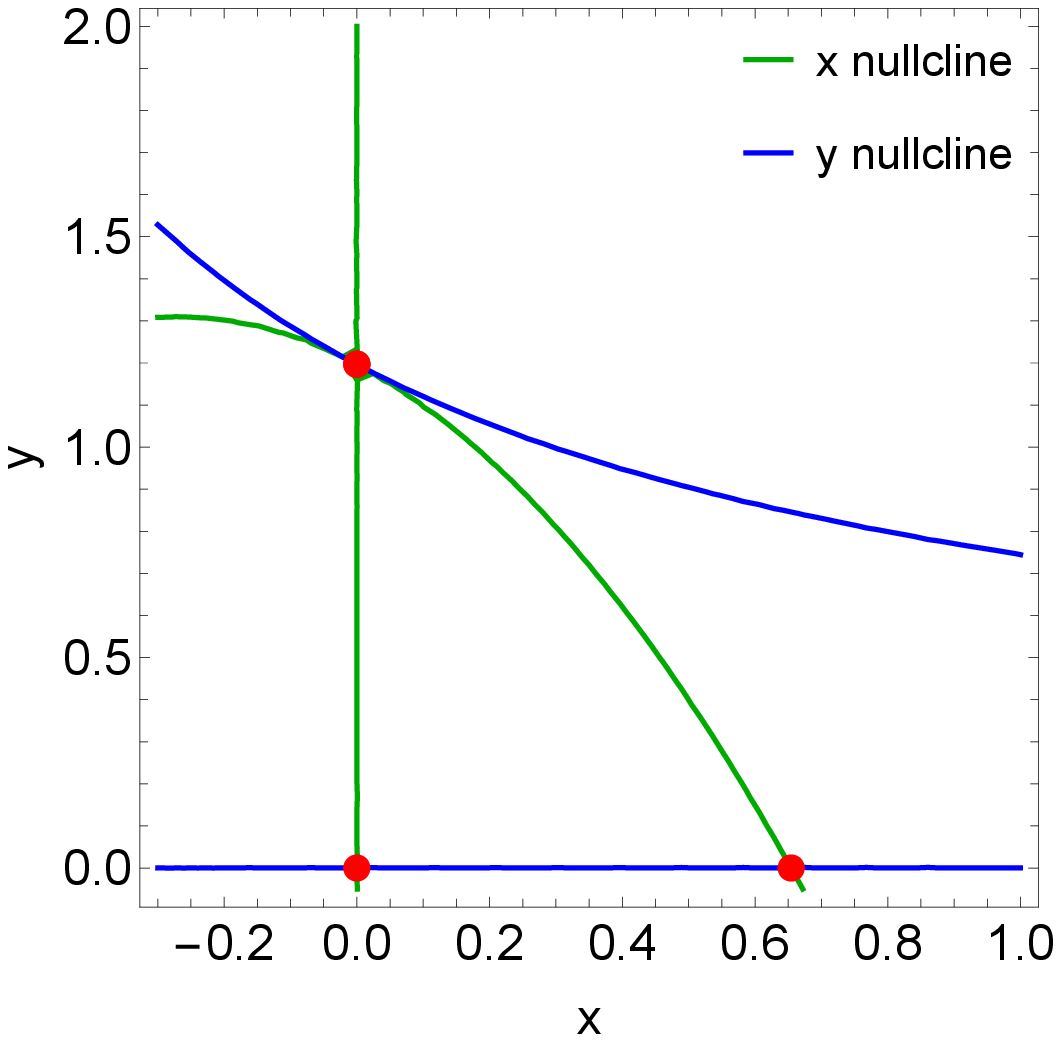}
%\subfigure[]{    
    \includegraphics[width=5cm, height=4.5cm]{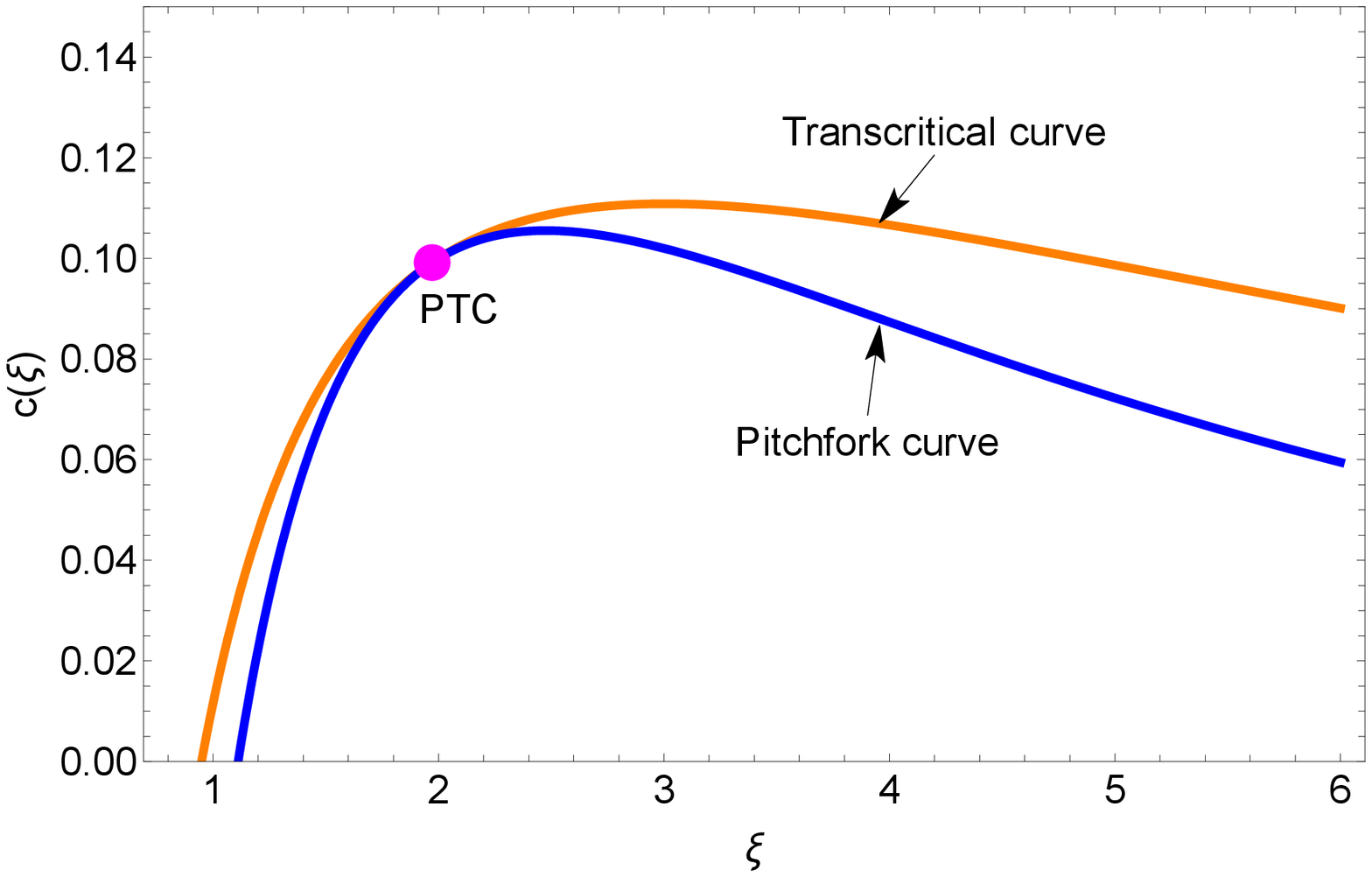}
%\subfigure[]{    
    \includegraphics[scale=.3]{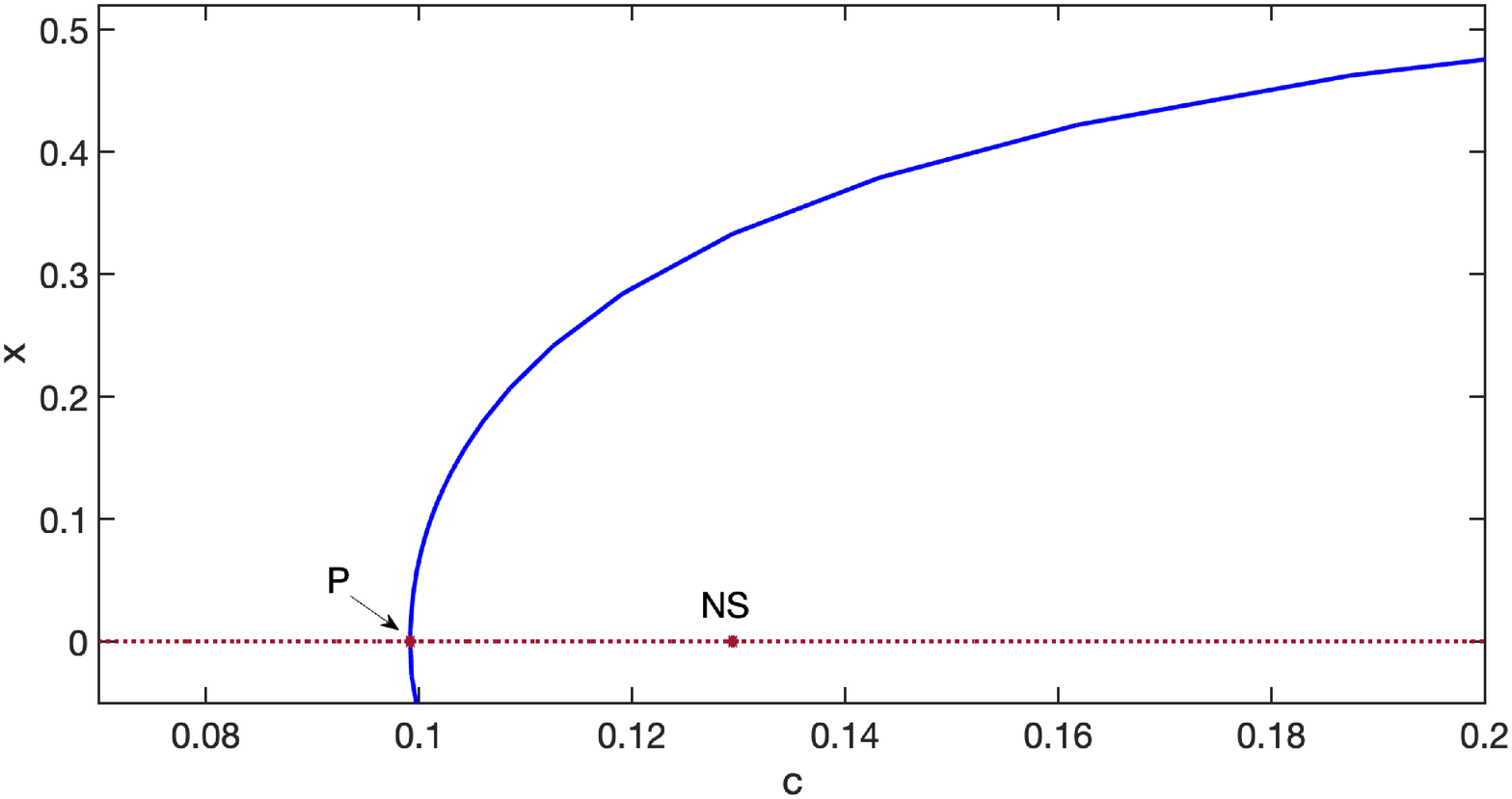}    
\end{center}
\caption{(a) (or top left figure) Phase portrait showing the instantaneous intersection of three equilibrium points at $(0,1.19717)$,  (b) (or top right figure) the intersection of pitchfork and transcritical curves, (c) (or bottom figure) bifurcation diagram  as the parameter $c$ is varied.   Here $\gamma=0.65450, \alpha=0.1, \beta=0.3, \delta=0.26, c=0.09917, \xi=1.97172$. (Note: P=Pitchfork point, NS=Neutral Saddle point, PTC=Pitchfork-transcritical point.)}
\label{fig:Pitchfork-Transcritical}
\end{figure}

\begin{figure}[!htb]
\begin{center}
%\subfigure[]{
    \includegraphics[scale=.44]{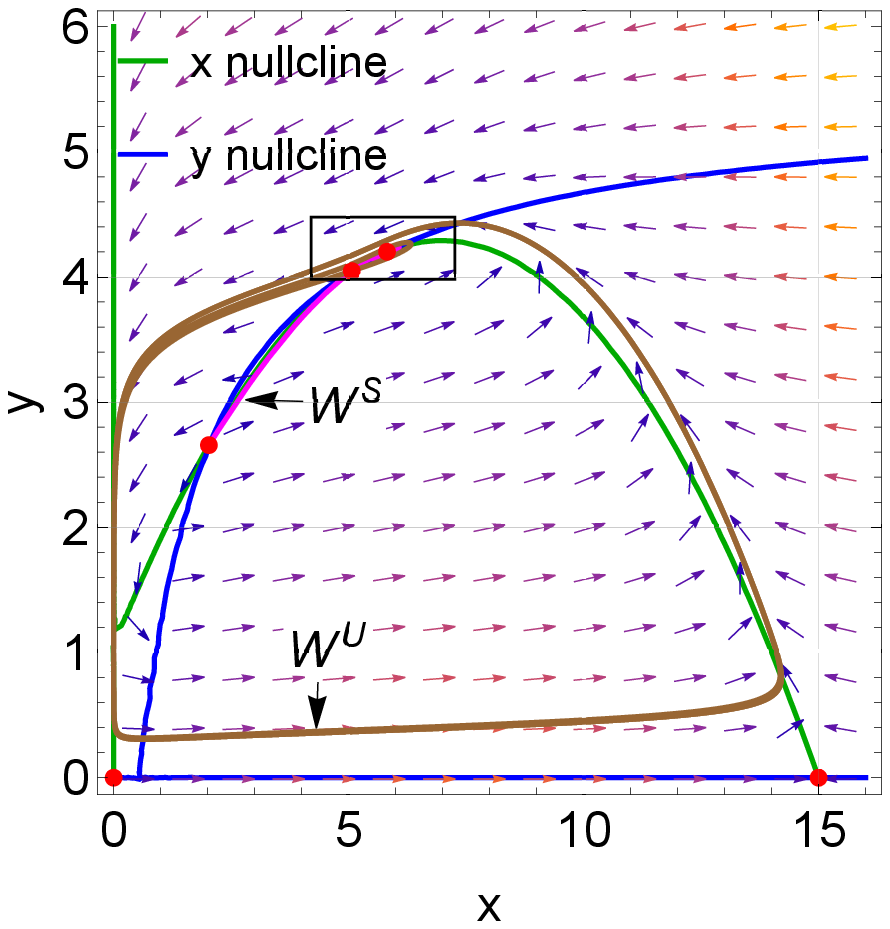}
%\subfigure[]{    
    \includegraphics[scale=.47]{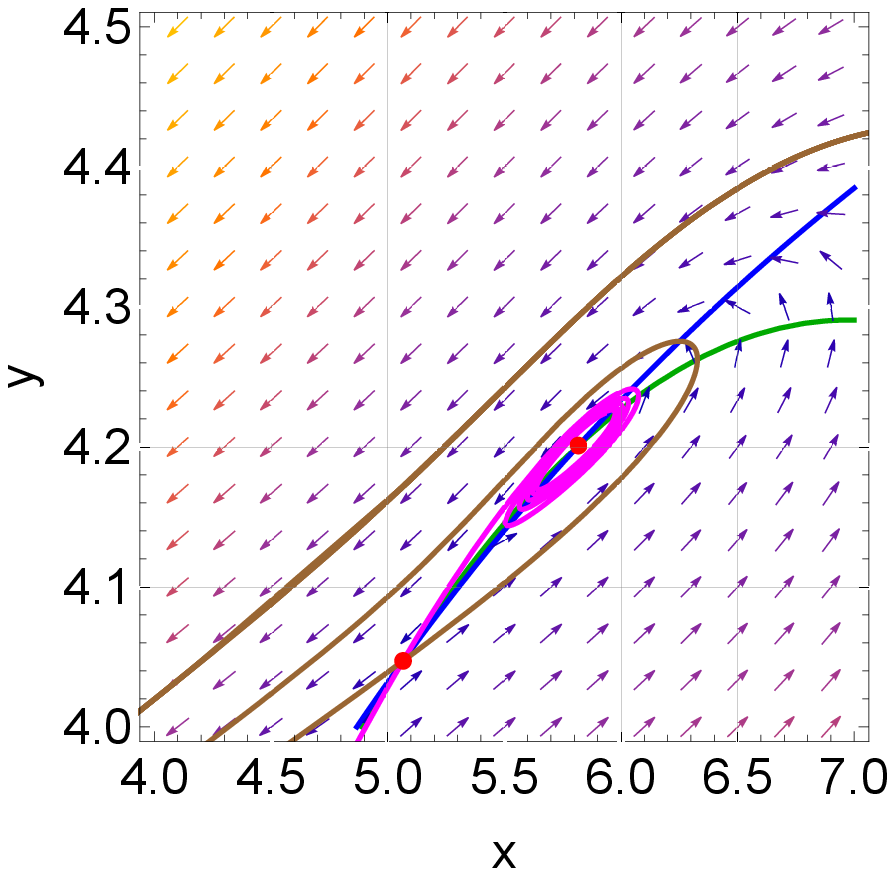}
%\subfigure[]{    
    \includegraphics[scale=.47]{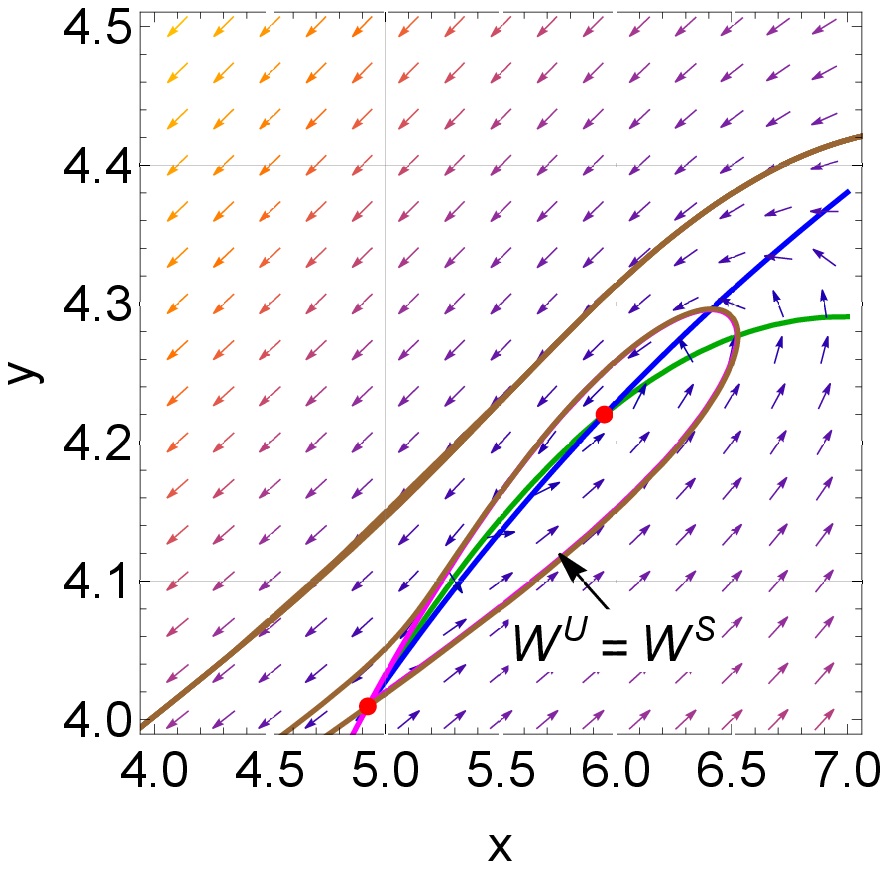}    
\end{center}
\caption{(a) (or left figure) Multiple limit cycles observed for $c=0.069287$. (b) (or middle figure) Zoomed in rectangle in (a) showing the small unstable limit cycle containing the stable interior equilibrium point. (c) (or right figure) Homoclinic loop is formed when $W^S=W^U$ for $c=0.069358$. Here $\gamma=15, \xi=0.45, \alpha=0.1, \beta=0.45, \delta=0.28$.}
\label{fig:limit cycles and homoclinic}
\end{figure}

\section{Turing Instability}

We consider the spatially explicit model for $u \in \Omega$, $\Omega \subset \mathbb{R}^{n}, n=1,2$, with $|\Omega| < \infty$. 

\begin{equation}
\label{Eqn:1nnpd}
\frac{\partial x}{\partial t} = d_{1}\Delta x + x\left(1-\frac{x}{\gamma}\right) - \frac{xy}{1+\alpha \xi + x}, \   
 \frac{\partial y}{\partial t} =d_{2}\Delta y + \frac{\beta xy}{1+\alpha \xi + x} + \frac{\beta \xi y}{1+\alpha \xi + x} - \delta y - c\xi y^{2},
\end{equation}

\begin{equation}
\label{Eqn:1nnpd1b}
\nabla x \cdot \eta = \nabla y \cdot \eta = 0,
\end{equation}

\begin{equation}
\label{Eqn:1nnpd1i}
x(u,0)=x_{0}(u), y(u,0)=y_{0}(u),
\end{equation}
where $d_1$ and $d_2$ are diffusion coefficients.
Note first,
\begin{lemma}
\label{lem:l1233}
Consider the reaction diffusion system \eqref{Eqn:1nnpd}-\eqref{Eqn:1nnpd1i}, when $c=0$, that is when there is no additional food mediated competition amongst the predators. Then there is no Turing instability in the system for any range of parameters or diffusion coefficients.
\end{lemma}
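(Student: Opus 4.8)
The plan is to reduce the statement to the standard linear criterion for diffusion-driven instability and then exploit a structural feature special to $c=0$. Recall that a Turing instability requires a spatially homogeneous steady state $(x^{*},y^{*})$ that is linearly stable for the kinetics of \eqref{Eqn:1nnpd} (the ODE obtained by dropping the Laplacians) but linearly unstable for the full reaction--diffusion system. Writing $J$ for the kinetic Jacobian at such a state, kinetic stability forces $\Tr(J)=J_{11}+J_{22}<0$ and $\det(J)=J_{11}J_{22}-J_{12}J_{21}>0$, while the linear growth rate of the $k$-th spatial mode is governed by the sign of
\[
h(k^{2})=d_{1}d_{2}\,k^{4}-(d_{2}J_{11}+d_{1}J_{22})\,k^{2}+\det(J).
\]
Thus a Turing instability can occur only if $h(k^{2})<0$ for some $k^{2}>0$, which in particular requires the sign condition $d_{2}J_{11}+d_{1}J_{22}>0$.

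First I would dispose of the boundary states as candidate base points. The state $(0,0)$ is never kinetically stable (a source or a saddle, as in Lemma \ref{lem:lems1s0} and in the classical Rosenzweig--MacArthur picture); the predator-free state $(\gamma,0)$, in the sub-case where it is kinetically stable, has $J_{11}=-1<0$ and $J_{22}<0$ and so already violates the sign condition; at $c=0$ there is no nondegenerate pest-free equilibrium, since the $y$-equation at $x=0$ collapses to $\bigl(\tfrac{\beta\xi}{1+\alpha\xi}-\delta\bigr)y=0$ (cf.\ Lemma \ref{lem:l1p}); and for $\xi>\delta/(\beta-\delta\alpha)$ every solution blows up by Theorem \ref{thm:t1ug}, so no stable homogeneous state exists there. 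Hence the only homogeneous state that could be Turing-destabilized is an interior equilibrium $(x^{*},y^{*})$ with $x^{*},y^{*}>0$. The crucial observation is that \emph{at this equilibrium, with $c=0$, one has $J_{22}=0$}: setting $c=0$ in the formula for $J_{22}$ from \eqref{Eqn:JacobianMain} gives $J_{22}=\frac{\beta(x^{*}+\xi)}{1+\alpha\xi+x^{*}}-\delta$, which vanishes because $(x^{*},y^{*})$ lies on the predator nullcline with $y^{*}\neq0$; equivalently, the predator kinetics $\bigl(\tfrac{\beta(x+\xi)}{1+\alpha\xi+x}-\delta\bigr)y$ have no intraspecific density dependence, so the per-capita growth rate --- which coincides with the $(2,2)$ entry of $J$ here --- is zero at coexistence. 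It follows that $\Tr(J)=J_{11}<0$, hence for any $d_{1},d_{2}>0$ we get $d_{2}J_{11}+d_{1}J_{22}=d_{2}J_{11}<0$, and therefore
\[
h(k^{2})=d_{1}d_{2}\,k^{4}-d_{2}J_{11}\,k^{2}+\det(J)>0\qquad\text{for all }k^{2}\ge0,
\]
since all three coefficients are strictly positive. No spatial mode is destabilized, so Turing instability is impossible, regardless of the diffusion coefficients and of the remaining parameters.

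I do not anticipate a genuine obstacle: the heart of the matter is simply that the Rosenzweig--MacArthur predator equation has a vanishing $(2,2)$ Jacobian entry on the coexistence nullcline, which blocks the sign condition $d_{2}J_{11}+d_{1}J_{22}>0$ at any kinetically stable state. The only care needed is bookkeeping --- confirming that the list of homogeneous steady states is exhausted; excluding the degenerate sub-case $J_{11}=0$ (then $\Tr(J)=0$, the kinetics are non-hyperbolic, and the state is not a legitimate starting point for a Turing bifurcation); and observing that $\det(J)=-J_{12}J_{21}>0$ at a stable interior state requires $1+\alpha\xi-\xi>0$ (since $J_{12}<0$), failing which the interior state is already a kinetic saddle and again cannot be Turing-destabilized.
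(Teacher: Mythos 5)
Your proof is correct and rests on the same key observation as the paper's: at a coexistence equilibrium with $c=0$ the predator nullcline forces $J_{22}=0$, so the necessary sign condition for diffusion-driven instability (that $J_{11}$ and $J_{22}$ have opposite signs, equivalently $d_{2}J_{11}+d_{1}J_{22}>0$ together with $\Tr(J)<0$) is violated. Your write-up is more thorough than the paper's one-line argument --- in particular the careful exclusion of the boundary equilibria as candidate base states --- but the route is essentially identical.
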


\begin{proof}
A standard linearization of \eqref{Eqn:1nnpd}-\eqref{Eqn:1nnpd1i}, when $c=0$ yields,

\begin{center}
$J=
\begin{bmatrix}
1-\frac{2x^*}{\gamma}-\frac{\left(1+\alpha\xi \right)y^*}{\left(1+\alpha\xi+x^*\right)^2} &  -\frac{x^*}{\left(1+\alpha\xi+x^*\right)} \\
\frac{\left(1+\alpha\xi-\xi\right)\beta y^*}{\left(1+\alpha\xi+x^*\right)^2} & \frac{\beta\left(x^*+\xi\right) }{\left(1+\alpha\xi+x^*\right)}-\delta
\end{bmatrix}
$.\\
\end{center}

\noindent Clearly $J_{22} = 0$, hence $J_{11}J_{22} = 0$ and so the necessary condition required for Turing instability  (that is $J_{11}J_{22} < 0$), is violated. This proves the Lemma.

\end{proof}

We first state the following result.

\noindent Suppose ($x^*,y^*$) is a stable interior equilibrium point.
We shall investigate the possibility of destabilizing this stable interior equilibrium point via  unequal diffusions. This will demonstrate the existence of Turing instability caused by additional food mediated competition, which is clearly not possible without competition via Lemma \ref{lem:l1233}.

The system \eqref{Eqn:1nnpd} can be rewritten as 
\begin{equation}
\label{Eqn:pdelinear}
\frac{\partial X}{\partial t}=D\Delta X + JX,
\end{equation}
where $X=[x~~ y]$ and D=diag($d_1,d_2$) and the Jacobian $J$ is defined in \eqref{Eqn:JacobianMain}.
%\begin{equation*}
% \
%   J=
%  \left[ {\begin{array}{cc}
%   J_{11}  & J_{12}  \\
%  J_{21}  & J_{22} \\
%  \end{array} } \right]
%\
%\end{equation*} \\
%where $J_{11}=1-\frac{2x}{\gamma} - \frac{\left(1+\alpha\xi \right)y}{\left(1+\alpha\xi+x\right)^2}, 
%J_{12}=-\frac{x}{\left(1+\alpha\xi+ x\right)}, J_{21}=\frac{\left(1+\alpha\xi-\xi\right)\beta y}{\left(1+\alpha\xi+x\right)^2}, J_{22}=\frac{\beta\left(x+\xi\right) }{\left(1+\alpha\xi+x\right)}-\delta-2c\xi y$\\\\ 

We analyze the system with Neumann conditions on the boundaries on a spatial domain $\Omega=[0,L] \times [0,L]$ and the wave number $k = n\pi/L$.
Since ($x^*,y^*$) is a stable interior equilibrium point, when there is no diffusion present i.e. $D=0$, we have the following condition:
$\Tr(J) = J_{11}+J_{22}<0$ and $\det(J)= J_{11} J_{22}-J_{12} J_{21}>0$. \\
The solution to the system \eqref{Eqn:pdelinear} is of the form 
\begin{equation}
\label{Eqn:pdesolution}
X(\Omega,t)=\sum_{k}c_k \exp^{\lambda (k) t}X_k(\Omega),
\end{equation}
where $c_k$ are amplitudes determined by Fourier expansion and $X_k(\Omega)$ are the eigenvectors dependent on the eigenvalues $\lambda(k)$ of the eigenvalue problem where $k=n\pi/L$. So to obtain Turing instability we need atleast one positive eigenvalue of the system \eqref{Eqn:1nnpd} 
\begin{equation}
\label{Eqn:eigeneq}
[\lambda \bold{I} - \bold{J} - k^2\bold{D}]X_k=0, 
\end{equation}
where $\lambda$ denotes the eigenvalue. 
For simplicity we normalize the diffusion coefficients to obtain D=diag(1,d). From condition  \eqref{Eqn:eigeneq}, we obtain that the eigenvalues are the roots of the equation
\begin{equation}
\label{Eqn:eigeneq2}
\lambda^2+\lambda[k^2(1+d)-Trace(\bold{J})]+h(k^2)=0, 
\end{equation}
where
\begin{equation}
\label{Eqn:h(keq)}
h(k^2)=dk^4-(dJ_{11}+J_{22})k^2+det(\bold{J}).
\end{equation}

Now to get a positive eigenvalue which shows instability due to diffusion, the following must hold 
\begin{equation}
\label{Turcond1}
dJ_{11}+J_{22}>0, 
\end{equation}
from \eqref{Turcond1} and the fact that $\Tr(J)<0$, we can conclude that we need $J_{11}J_{22}<0$.
\begin{equation}
\label{Turcond2}
(dJ_{11}+J_{22})^2-4d det(\bold{J})>0,
\end{equation}
and for a fixed value of $d$ the value of $k$ can be determined by $h(k^2)<0$. Thus we state the following lemma,

\begin{lemma}
\label{lem:l12}
Consider the reaction diffusion system \eqref{Eqn:1nnpd}-\eqref{Eqn:1nnpd1i}, when $c>0$, that is when there is additional food mediated competition amongst the predators. Then there is Turing instability in the system for some range of parameters or diffusion coefficients.
\end{lemma}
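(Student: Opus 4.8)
The plan is to run the analysis already set up in \eqref{Turcond1}--\eqref{Eqn:h(keq)}, the only genuine work being to exhibit a stable interior equilibrium at which the ``activator'' diagonal entry is positive. First I would record the structural simplification available at \emph{any} interior equilibrium $(x^{*},y^{*})$ of \eqref{Eqn:1nn}: substituting the predator nullcline relation $\frac{\beta(x^{*}+\xi)}{1+\alpha\xi+x^{*}}-\delta=c\xi y^{*}$ into $J_{22}$ from \eqref{Eqn:JacobianMain} gives $J_{22}=-c\xi y^{*}<0$, and then \eqref{equa1} yields $J_{11}=\Tr(J)-J_{22}=\frac{x^{*}}{1+\alpha\xi+x^{*}}\left(1-\frac{1+\alpha\xi+2x^{*}}{\gamma}\right)$. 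Hence the sign condition $J_{11}J_{22}<0$ required for Turing instability (the line after \eqref{Turcond1}) holds if and only if $J_{11}>0$, i.e. iff $x^{*}<\tfrac12(\gamma-1-\alpha\xi)$ — the interior equilibrium lies on the rising branch of the prey nullcline. This is precisely the obstruction that is absent when $c=0$, where $J_{22}\equiv0$ (Lemma~\ref{lem:l1233}).

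Second, I would show that the parameter region producing such an $x^{*}$ that is moreover a \emph{stable} node/focus is nonempty. Work in the concave-down regime $\gamma>1+\alpha\xi$ with $\tfrac{\delta}{\beta-\delta\alpha}<\xi<\tfrac{1}{1-\alpha}$, so that $1+\alpha\xi-\xi>0$, hence $J_{21}>0$ and $J_{12}J_{21}<0$. Using $J_{22}=-c\xi y^{*}$ and the expression for $J_{11}$, the condition $\Tr(J)<0$ becomes $c\xi y^{*}>J_{11}$, while $\det(J)>0$ becomes $\frac{(1+\alpha\xi-\xi)\beta}{(1+\alpha\xi+x^{*})^{2}}>c\xi\left(1-\frac{1+\alpha\xi+2x^{*}}{\gamma}\right)$; eliminating $c$ and using $y^{*}=(1-x^{*}/\gamma)(1+\alpha\xi+x^{*})$, both hold simultaneously for $c$ in a nonempty window as soon as $x^{*}\left(1-\frac{1+\alpha\xi+2x^{*}}{\gamma}\right)^{2}<\beta(1+\alpha\xi-\xi)\left(1-\frac{x^{*}}{\gamma}\right)$, which is true for $x^{*}$ small. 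Since the equilibrium analysis of Section~\ref{sec:app} (and the saddle-node/transcritical discussion around Fig.~\ref{fig:Hopf-SN-RegimesA}) produces interior equilibria with $x^{*}$ small and positive over an open set of $(\xi,c)$, one fixes such a stable $(x^{*},y^{*})$ with $J_{11}>0$.

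Finally, with $(x^{*},y^{*})$ fixed, $J_{11}>0$, $J_{22}<0$, $\Tr(J)<0$, $\det(J)>0$, I would choose the diffusion ratio $d=d_{2}/d_{1}$: take $d>-J_{22}/J_{11}$ so \eqref{Turcond1} holds, and note that $(dJ_{11}+J_{22})^{2}-4d\det(J)$ grows like $d^{2}J_{11}^{2}$, so \eqref{Turcond2} holds for $d$ large. Then $h(k^{2})$ in \eqref{Eqn:h(keq)} is a convex quadratic in $k^{2}$ with minimum value $\det(J)-\frac{(dJ_{11}+J_{22})^{2}}{4d}<0$ attained at $k^{2}=\frac{dJ_{11}+J_{22}}{2d}$, so $h(k^{2})<0$ on an interval $(k_{-}^{2},k_{+}^{2})$, and there the quadratic \eqref{Eqn:eigeneq2} has a real root $\lambda_{+}(k)>0$. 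Choosing the domain size $L$ large enough that some admissible wave number $k=n\pi/L$ falls in $(k_{-},k_{+})$ then yields a mode in \eqref{Eqn:pdesolution} with $\mathrm{Re}\,\lambda(k)>0$, i.e. diffusion destabilizes the homogeneous state $(x^{*},y^{*})$. The main obstacle is the second step — certifying that the window of parameters with a stable interior equilibrium on the rising branch (so $J_{11}>0$) is genuinely nonempty, since lowering $c$ to enlarge $\det(J)$ fights the requirement $c\xi y^{*}>J_{11}$ needed for $\Tr(J)<0$, and shrinking $x^{*}$ toward $0$ interacts with the transcritical threshold; everything afterwards is the standard Turing computation already assembled in \eqref{Turcond1}--\eqref{Eqn:h(keq)}, with Fig.~\ref{fig:Turing pattern} as numerical confirmation.
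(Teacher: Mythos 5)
Your proposal is correct and follows essentially the same route as the paper: the paper's ``proof'' of this lemma is precisely the linearization and the conditions \eqref{Turcond1}--\eqref{Turcond2} derived immediately before the statement, followed by a numerical verification (the parameter set $\gamma=11$, $\xi=0.3$, $\beta=3.9$, $\delta=1$, $c=0.9$, $\alpha=1$, $d_1=1$, $d_2=100$ in the simulation subsection, where $J_{11}\approx 0.1428>0$ and $J_{22}\approx-0.4127<0$). What you add is worthwhile: the identity $J_{22}=-c\xi y^{*}<0$ at any interior equilibrium, and the resulting characterization $J_{11}>0\iff x^{*}<\tfrac12(\gamma-1-\alpha\xi)$, make explicit why $c>0$ is exactly what unlocks the activator--inhibitor sign structure that Lemma \ref{lem:l1233} rules out at $c=0$; and your large-$d$ argument for \eqref{Turcond1}--\eqref{Turcond2} and the band $(k_-^2,k_+^2)$ with $h<0$ is the standard, correct completion. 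The one place your argument is weaker than you would like --- and you flag it yourself --- is certifying analytically that a \emph{stable} interior equilibrium on the rising branch exists: your inequality $x^{*}\bigl(1-\tfrac{1+\alpha\xi+2x^{*}}{\gamma}\bigr)^{2}<\beta(1+\alpha\xi-\xi)\bigl(1-\tfrac{x^{*}}{\gamma}\bigr)$ treats $x^{*}$ as independent of $c$, whereas $x^{*}=x^{*}(c)$, so the ``nonempty window in $c$'' claim is not yet rigorous. Since the lemma only asserts existence for \emph{some} parameters, the cleanest way to close that gap is exactly what the paper does: exhibit one concrete parameter set and check the four sign conditions numerically; your steps one and three then upgrade that single example into a genuine proof.
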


We proceed to perform numerical simulations next.

\subsection{Numerical simulation}

We analyze the numerical simulation to demonstrate the Turing instability. For the simulations we have used PDEPE function in MATLAB. We have performed a time series analysis over a long time period (i.e. $t=5000$) to obtain the values of the stable equilibrium point ($x^*,y^*$) from the set of carefully chosen parameters. The parameter set chosen to have a stable equilibrium point is $\gamma=11, \xi=.3 , \beta=3.9, \delta=1, c=0.9,$ and $\alpha=1$. Clearly $\xi<\frac{\delta}{\beta- \delta \alpha}$ and we get $P(x^*,y^*)\approx(0.2684,1.5297)$ at $t=5000$.
   $P$ is a stable interior equilibrium point with $J_{11}\approx0.1428;J_{12}\approx-0.2484;J_{21}\approx2.4254;J_{22}\approx-0.4127$. Thus $\Tr(J)\approx-0.27<0$ and $\det(J)\approx.5435>0$. Furthermore $J_{11}J_{22}<0$ since $J_{11}>0$ and $J_{22}<0$. All the simulations are performed on spatial grid with $200,500,1000$ and $2000$ data points on the domain $\Omega$ where $L=30\pi$. For some carefully chosen diffusion coefficients $d_1=1$ and $d_2=100$ for which conditions \eqref{Turcond1} and \eqref{Turcond2} are satisfied, there exists at least one node $k>0$ for which we get Turing instability.
     We provide numerical simulation to corroborate Turing instability in Fig \ref{fig:Turing pattern}. The initial condition used is a perturbation about the steady state $P$ by a function of the form $0.0001\cos(0.426x)$ where the value of $k$ was chosen for $h(k^2)<0$ in Fig \ref{fig:Turing pattern}A. We have ran simulations for different perturbations and noticed identical pattern as depicted in Fig \ref{fig:Turing pattern}C. The simulations are run for different values of time, in particular, $t =1000,2000,3000,5000$ and $10,000$. The same time series pattern is seen in Fig \ref{fig:Turing pattern}B.
    
    \begin{figure}
\begin{center}
%\subfigure[]{
%\centering
    \includegraphics[scale=.21]{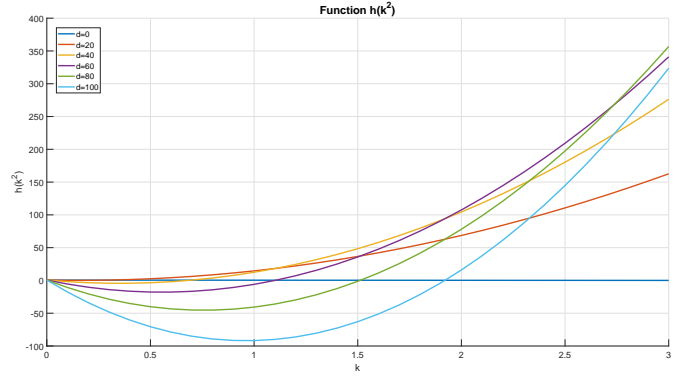}
    \label{fig:tur1}
    \subcaption{}

    \includegraphics[scale=.21]{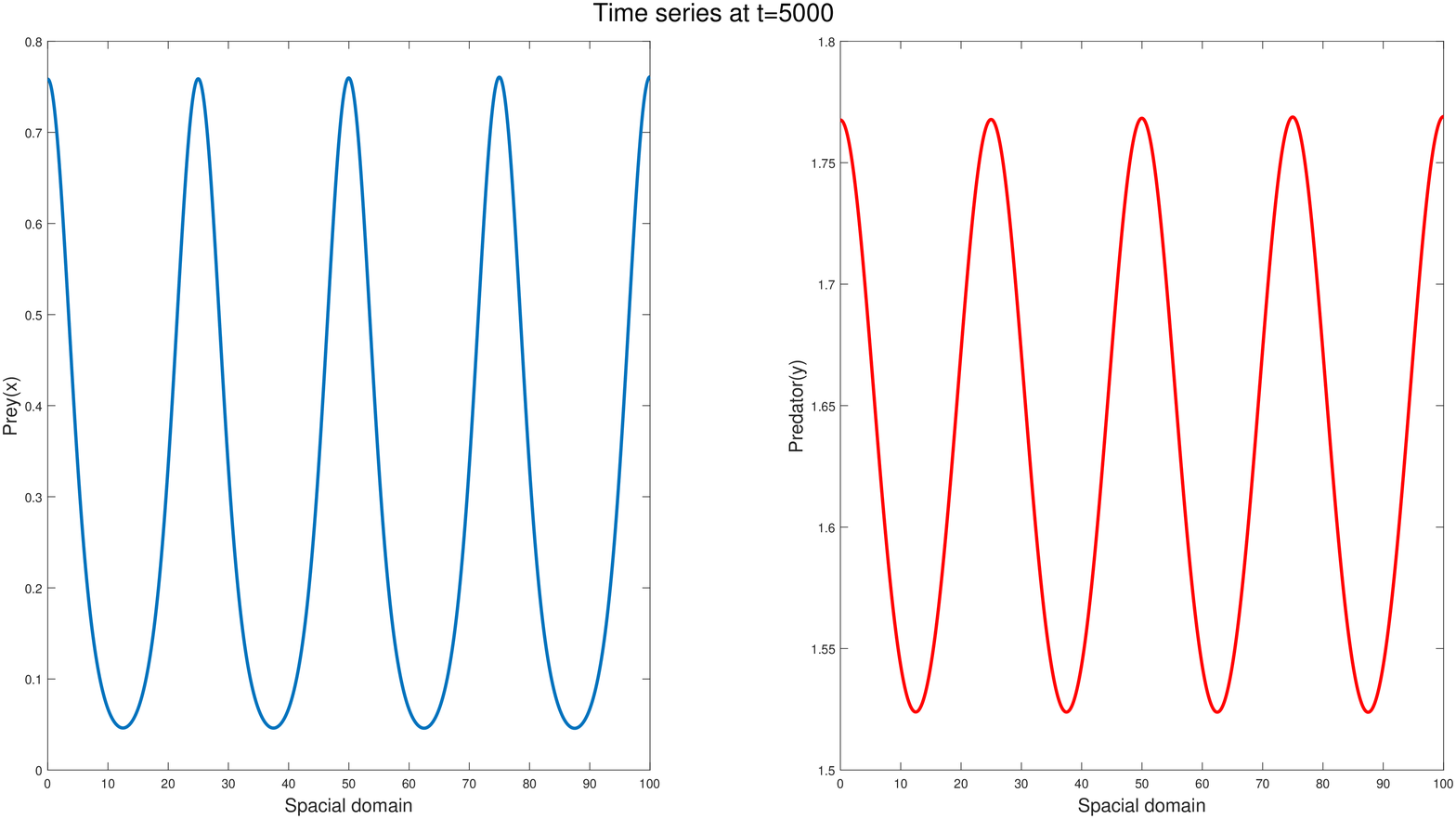}
    \label{tur2}
\subcaption{}
%\subfigure[]{    
    \includegraphics[scale=.21]{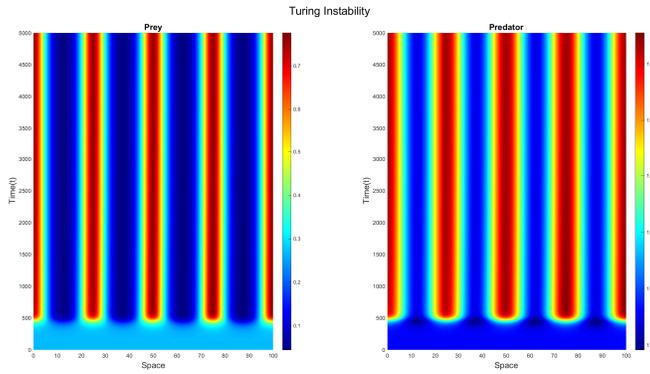}
    \label{tur3}
 \subcaption{} 
\end{center}
\caption{Here we demonstrate that additional food competition can induce Turing stability. We see Turing pattern in both prey and predator. In Fig A we see the functional curve $h(k^2)$ for different diffusion constant which helps in determining the wave number k when $h(k^2)<0$ . In Fig B we see the progression of prey and predator with time till t=5000. In Fig C we see the spatial Turing pattern in prey and predator over time and spatial domain which shows the destabilization of a equilibrium point due to diffusion. }
\label{fig:Turing pattern}
\end{figure}

\section{Discussion and Conclusions}
\label{conj}

We have shown that increasing additional food quantity $\xi$ in predator-pest models, with Holling type II functional response, s.t. $\xi > \xi_{critical}$, can cause unbounded growth of the introduced predator, and so can \emph{hinder} bio-control, by causing a host of non-target effects, associated with excessively high predator density. We Conjecture,

\begin{conjecture}
\label{conj:c11}
Consider the general predator-pest model \eqref{Eqn:1g}. If the quantity of additional food $\xi$ is s.t., $\xi > \xi_{critical}$, then the predator population $y$, blows up in infinite time, for any pest dependent functional response $f$.
\end{conjecture}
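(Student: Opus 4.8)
The plan is to follow the architecture of the proof of Theorem \ref{thm:t1ug}, separating the two conclusions $y(t)\to\infty$ and $x(t)\to 0$; the first is essentially free, while the second carries whatever delicacy the generality of $f$ introduces. The first step is therefore to fix the (natural, but unstated) meaning of ``$\xi>\xi_{critical}$'' for the general model \eqref{Eqn:1g}: $\xi_{critical}$ is the threshold past which the vertical predator nullcline $\{x\ge0:\ g(x,\xi,\alpha)=\delta\}$ is empty, i.e. $g(x,\xi,\alpha)>\delta$ for every $x\ge0$. Together with the minimal standing hypotheses on a functional response --- $f\ge 0$, $f(\cdot,\xi,\alpha)$ continuous with $f(0,\xi,\alpha)=0$ and $f(x,\xi,\alpha)>0$ for $x>0$ --- and $g$ continuous and bounded on bounded $x$-ranges (which forces $\dot y\le Cy$, hence global existence of solutions), these are the hypotheses under which I expect the conjecture to hold.

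\emph{The predator blows up.} By positivity, $x(t)>0$, and $\dot x=x(1-x/\gamma)-f(x,\xi,\alpha)y\le x(1-x/\gamma)$, so $x(t)\le M:=\max(x_0,\gamma)$ for all $t$. On the compact interval $[0,M]$ the continuous positive function $g(\cdot,\xi,\alpha)-\delta$ attains a positive minimum $\delta_1>0$, whence
\[
\frac{dy}{dt}=\big(g(x(t),\xi,\alpha)-\delta\big)\,y \;\ge\; \delta_1\, y,
\]
and comparison with the linear equation gives $y(t)\ge y_0\,e^{\delta_1 t}\to\infty$. In particular convergence to a finite pest-free state $(0,y^{*})$ is automatically impossible, which disposes of the contradiction argument used in the type II case.

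\emph{The pest goes extinct.} Write $\frac{d}{dt}\ln x=(1-x/\gamma)-\dfrac{f(x,\xi,\alpha)}{x}\,y$. Suppose $\limsup_{t\to\infty}x(t)=L>0$. On the compact set $[L/2,M]$ the quotient $f(x,\xi,\alpha)/x$ has a positive minimum $c_L$, and since $y(t)\to\infty$ there is $T$ with $c_L\,y(t)>2$ for $t\ge T$; for such $t$ one has $\frac{d}{dt}\ln x<1-2=-1$ whenever $x(t)\in[L/2,M]$. Hence for $t\ge T$ the level $x=L/2$ can only be crossed downward, so after a bounded time $x(t)<L/2$ for all large $t$, contradicting $\limsup x=L$. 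Therefore $x(t)\to 0$. For the type II model $f(x,\xi,\alpha)/x$ is even bounded below on all of $[0,M]$, recovering the exponential decay $\dot x\le -x$ of Theorem \ref{thm:t1ug}; for degenerate responses (type III, IV) only the weaker positivity on $[L/2,M]$ is needed, so the argument still applies.

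The principal obstacle is not any single analytic step --- each of the above is routine once the set-up is fixed --- but rather pinning down the admissible class of pairs $(f,g)$. The numerical response $g$, and hence $\xi_{critical}$ itself, is not canonically determined by $f$ (the models of Table \ref{Table:3} couple consumption to reproduction in different ways), and for non-monotone $g$ the predator nullcline can re-enter the first quadrant at larger $x$, breaking $\dot y\ge\delta_1 y$; one would then have to combine the above with an a priori bound confining $x$ to a region on which $g>\delta$. Making the phrase ``for any pest dependent functional response $f$'' precise --- and, relatedly, controlling the balance between the exponential growth of $y$ and the possibly fast vanishing of $f(x,\xi,\alpha)/x$ near $x=0$ for highly degenerate $f$ --- is where the real work lies.
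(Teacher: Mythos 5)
The statement you are addressing is posed in the paper as a \emph{conjecture}: the authors give no proof of it, so there is no argument of theirs to compare against directly --- only the special case of the type II response is proved, as Theorem \ref{thm:t1ug}. Judged on its own terms, your sketch is sound under the hypotheses you impose, and it is actually cleaner than the paper's proof of that special case. Where the paper argues geometrically (partitioning the phase plane by the prey nullcline, arguing trajectories drift toward the predator axis, and then separately ruling out a finite state $(0,y^*)$ by contradiction), you observe that $x(t)\le M=\max(x_0,\gamma)$ a priori, so that $g(\cdot,\xi,\alpha)-\delta$ has a positive minimum $\delta_1$ on the compact set $[0,M]$ and $\dot y\ge\delta_1 y$ holds along the \emph{entire} trajectory, not merely near the axis; this yields $y\to\infty$ in one line and makes the paper's contradiction step superfluous. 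Your argument that $x\to 0$, via $\frac{d}{dt}\ln x\le 1-(f(x)/x)\,y$ and the positivity of $f(x)/x$ on $[L/2,M]$ only, is also correct and genuinely more general than the paper's, since it tolerates responses with $f(x)/x\to 0$ as $x\to 0$ (types III and IV in Table \ref{Table:3}). The one substantive caveat --- which you identify honestly rather than paper over --- is that the conjecture is not a precise mathematical statement: neither $\xi_{critical}$ nor the admissible class of pairs $(f,g)$ is defined for \eqref{Eqn:1g}, and what you have proved is the version in which ``$\xi>\xi_{critical}$'' is \emph{defined} to mean $g(x,\xi,\alpha)>\delta$ for all $x\ge 0$, together with continuity of $f,g$ and $f(0,\xi,\alpha)=0$, $f(x,\xi,\alpha)>0$ for $x>0$. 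That is the natural reading, and it is consistent with every row of Table \ref{Table:3}, but it is a hypothesis you are supplying rather than one the paper states; a careful write-up should make it explicit rather than fold it into the meaning of $\xi_{critical}$.
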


One could also model the ``competition" term via  $c \xi y^{p}, 1 \leq p \leq 2$, and then it could have different ecological connotations. If $p=1$, one obtains classical linear harvesting, and if $p=2$, one has classical natural intraspecific competition. If $1<p<2$, one has a nonlinear form of harvesting (or possibly density dependent death). Irrespective, of the form of modeling competition among predators, we clearly see that this increases the possibilities of forms of bio-control. In particular, the pest free state is now stable (or at least locally stable) under a much richer class of dynamics. We observe and report several non-standard bifurcation phenomenon, see Fig. \ref{fig:Cusp-Transcritical} - Fig. \ref{fig:Turing pattern}.
Albeit, some of these are constructed due to the special symmetry we have around the pest free state, where one of the equilibrium points is negative. Thus this begets the question if we could obtain a PTC or CPTC in the interior. This would make for interesting future investigations.

Another interesting direction would be to derive our predator-pest system, using the classical optimal foragaing theory in the setting of one predator - two prey \cite{K96, K10}, where the second prey item would be the additional food. However, in this framework, the handling time of the preys do not depend on each other. In our assumed frame work, we assume the additional food \emph{influences} the handling time of target prey, and working to verify this possibility both theoretically and via laboratory experiments, are the subject of current and future investigations. Note that if we allow a very large amount of additional food to be input into the system, it is unreasonable to expect that the predator will continue to focus on the target pest. Thus a reasonable assumption here may be to restrict the range of $\xi$, so as to be in tune with the classical one predator - two prey optimal foraging theory.  A key to pest eradication seems clear - increase per capita prey consumption rates when the pest population size is low. In this regard, exploration of other functional forms for the density-dependent food supplementation may prove fruitful, particularly ones that describe changes in predator handling and searching behavior that depend on pest and/ or additional food densities.

Mathematical models of biocontrol can provide an idea of what dynamics are possible and suggest routes by which pest eradication is theoretically feasible. Future directions involve studying the effects of pest refuge, evolutionary effects, fear effects, further competitive effects, as well as stochastic effects \cite{PQB16, K10, PK05, B13, B15, F14, PB16, H93, PAT21, BOP22, TM22}. However, experimental tests are required to assess the biological reality of applying these strategies. Laboratory experiments using dynamically interacting predator and pest are encouraged, and would provide an additional intermediate step linking theory to successful bio-control applications in the natural world.

\end{document}